\documentclass[12pt,letterpaper]{article}
\usepackage[latin9]{inputenc}
\usepackage{xcolor}
\usepackage{verbatim,booktabs}
\usepackage{bm}
\usepackage{multirow}
\usepackage{amsmath}
\usepackage{amsthm}
\usepackage{amssymb}
\usepackage{graphicx}
\usepackage{titling}
\usepackage[authoryear,round]{natbib}
\setcitestyle{aysep={}}
\makeatletter

\providecommand{\tabularnewline}{\\}

\usepackage{amsthm}\usepackage{lscape}
\usepackage{bm}
\usepackage{tabularx}
\usepackage{longtable}
\usepackage{graphicx}
\usepackage{setspace}
\usepackage{multirow}
\usepackage[font=normalsize,labelfont=bf]{caption}

\theoremstyle{plain}
\newtheorem{theorem}{Theorem}[section]
\newtheorem{lemma}{Lemma}[section]

\newtheorem{proposition}{Proposition}[section]
\theoremstyle{definition}
\newtheorem{remark}{Remark}[section]

\DeclareMathOperator{\Var}{Var}
\DeclareMathOperator{\Cov}{Cov}

\newcommand{\R}{{\mathbb R}}

\DeclareMathOperator{\E}{E}
\DeclareMathOperator{\Q}{Q}

\DeclareMathOperator*{\argmin}{arg\,min}

\providecommand{\normbig}[1]{\left\lVert#1\right\rVert}

\allowdisplaybreaks
   
\makeatother

\usepackage[letterpaper,margin=1in]{geometry}
\usepackage[hidelinks]{hyperref}
\providecommand{\blind}{0}
\date{}
\hypersetup{pdfauthor={},pdftitle={GMM for Panel Data Quantile Regression}}
\begin{document}
\title{GMM for Panel Data Quantile Regression}
\ifnum\blind=0
\author{Antonio F. Galvao\thanks{Department of Economics, Michigan State University, East Lansing,
MI 48824, USA. E-mail: \texttt{agalvao@msu.edu}.} \and Ulrich Hounyo\thanks{Department of Economics, University at Albany - State University
of New York, Albany, NY 12222, USA. E-mail: \texttt{khounyo@albany.edu}.} \and Jiahao Lin\thanks{Department of Economics, University at Albany - State University
of New York, Albany, NY 12222, USA. E-mail: \texttt{jlin28@albany.edu}.} }

\else
\author{}
\fi
\maketitle

\vspace*{-2cm} 
\begin{abstract}
\begin{singlespace}
This paper develops a smoothed generalized method of moments (GMM) framework for panel quantile models with individual-specific fixed-effects and endogenous regressors. By smoothing the indicator in the quantile moment conditions and profiling out the individual effects through smoothed individual quantile restrictions, the proposed estimator can be implemented using standard GMM algorithms.
We establish consistency, asymptotic normality, and feasible inference under weak time dependence as both panel dimensions diverge. This dependence structure accommodates dynamic panel specifications, serially dependent disturbances, lagged instruments, and within-individual heteroskedasticity and autocorrelation consistent inference. Monte Carlo simulations show that the estimator is nearly unbiased under endogeneity and yields generally accurate inference while remaining computationally tractable. An application to OECD inflation dynamics suggests asymmetric inflation propagation: under the structural IVQR interpretation, the response of current inflation to lagged inflation is substantially larger at the lower structural quantile than at the upper structural quantile.


\bigskip\noindent\textbf{Keywords}: Fixed-effects; Instrumental variables;
Panel quantile regression; Serial dependence; Smoothing.
\end{singlespace}
\end{abstract}
\vfill
\thispagestyle{empty}
\newpage
\section{Introduction}

Since the seminal work of \citet{KoenkerBassett78}, quantile regression
(QR) has become a fundamental tool in econometrics for analyzing heterogeneous effects across the distribution of an outcome variable. Unlike mean regression models, QR allows the marginal effect of covariates to vary across conditional quantiles, thereby providing a flexible framework for studying distributional heterogeneity under relatively weak assumptions.\footnote{QR is also robust to outliers and heavy-tailed distributions, which
makes it particularly attractive in many empirical applications.} Beyond its statistical advantages, QR also admits structural economic interpretations. In particular, conditional quantile restrictions may arise as equilibrium conditions derived from dynamic optimization problems under quantile-based preferences; see, for example, \citet{deCastroGalvao19}.

Motivated by the increasing availability of longitudinal data, a growing
literature has extended QR methods to panel data
environments. Panel quantile models allow researchers to control for
unobserved individual heterogeneity while simultaneously exploring distributional effects. A pioneering contribution in this area is \citet{Koenker04}, who introduced a fixed-effects quantile regression (FE-QR) estimator treating individual effects as location shifts common across quantiles. This approach provides a practical method for controlling for time-invariant heterogeneity in panel datasets while allowing heterogeneous covariate effects across the conditional distribution.

Subsequent work has substantially expanded the theoretical and methodological foundations of panel QR. Important contributions
include, among others, \citet{WeiHe06}, \citet{WangFygenson09}, \citet{Galvao11},
\citet{Canay11}, \citet{KimYang11}, \citet{KatoGalvaoMontes-Rojas12}, \citet{ChetverikovLarsenPalmer13}, \citet{GalvaoLamarcheLima13},
\citet{GalvaoWang15}, \citet{ArellanoBonhomme15}, \citet{GalvaoKato16}, \citet{MachadoSantosSilva19}, \citet{GuVolgushev19}, \citet{melly2025minimum}. These papers study estimation and asymptotic properties of panel QR estimators under various assumptions and asymptotic regimes. For comprehensive overviews of this literature, see \citet{GalvaoKato17} and \citet{GalvaoLamarche26}.

Despite these advances, an important challenge remains: the presence
of endogenous regressors. Endogeneity is pervasive in empirical applications due to omitted variables, measurement error, simultaneity, self-selection, and equilibrium behavior. Ignoring endogeneity may lead to biased estimates, incorrect inference, and misleading conclusions regarding heterogeneous effects.

To address this issue, in a standard cross-section context, \citet{ChernozhukovHansen05,ChernozhukovHansen06,ChernozhukovHansen08} develop the instrumental variables quantile regression (IVQR) framework,
which provides identification and estimation methods for models with
endogenous regressors. The IVQR approach has become a central tool
for estimating structural quantile treatment effects. Relatedly, \citet{van2026iv} develop an IV Fr{\'e}chet regression framework for distribution-valued outcomes, where group-level quantile functions are treated as outcomes in Wasserstein space. However, the practical implementation of IVQR remains challenging. See \citet{ChernozhukovHansenWuthrich17} for a review on the subject.

For estimation of IVQR models, the literature still lacks a computationally feasible procedure that simultaneously accommodates individual-specific fixed-effects, endogenous within-unit regressors, and weak serial dependence. Existing asymptotic results are largely confined to cross-sectional settings under independent and identically distributed (iid) sampling assumptions; see \citet{ChernozhukovHong03}, \citet{ChernozhukovHansen06}, and \citet{KaplanSun17}. Although the Markov Chain Monte Carlo estimator of \citet{ChernozhukovHong03} is broadly applicable, it is computationally demanding.\footnote{\Citet{ChernozhukovHansenWuthrich17} note that this approach ``bypasses the need to optimize a non-convex and non-smooth criterion at the cost of designing a sampler that adequately explores the quasi-posterior in a reasonable amount of computation time.''} The estimators of \citet{ChernozhukovHansen06} and \citet{KaplanSun17} are computationally simpler but are limited to cross-sectional data, and their computational advantages deteriorate rapidly as the number of endogenous regressors increases \citep{ChernozhukovHansen06}. Beyond the iid framework, \citet{SuYang11} consider spatial autoregressive models under a correctly specified linear model with nonstochastic regressors and instruments.\footnote{Other proposed estimators have similar limitations. \citet{ChenLee17} develop an IVQR estimator based on mixed-integer quadratic programming, but it can be computationally demanding. \citet{Wuthrich17a} considers binary treatments under iid sampling, while the Bayesian approach of \citet{LancasterJun10} is also restricted to iid data and relies on computationally intensive procedures such as grid search or Markov chain Monte Carlo methods.}

Further extensions to panel data are provided by
\citet{HardingLamarche09}, who adapt inverse IVQR to models with endogenous regressors and individual effects, and by \citet{Galvao11}, who develop estimation and inference for dynamic panel quantile models. These contributions demonstrate how IVQR can
accommodate individual-specific heterogeneity, but retain the nonsmooth inverse-quantile-regression formulation, which generally requires repeated QR estimation over candidate values of the endogenous coefficients. These inverse QR methods,  based on inverting a mapping of coefficients in the second step of the procedure rather than solving the structural QR directly, are computationally difficult to apply in practice under multiple endogenous regressors.
Relatedly, \citet{melly2025minimum} propose a two-stage
minimum-distance approach based on unit-specific quantile regressions followed by linear Generalized Method of Moments (GMM); their baseline first stage, however, relies on exogenous
within-unit variation. These considerations motivate computationally simpler IVQR procedures that are applicable to panel data settings with individual fixed-effects and serial dependence.

This paper makes three main contributions. First, we develop a framework for instrumental variables QR in panel data models with individual-specific fixed-effects using smoothing techniques, thereby allowing researchers to address endogeneity in quantile panel models while maintaining computational tractability. Our approach integrates three strands of the econometric literature: panel QR with fixed-effects, instrumental variables QR, and smoothed moment estimation methods.
Specifically, we extend the smoothed QR framework
for panel data developed by \citet{GalvaoKato16} to settings with
endogenous regressors. At the same time, we build on the smoothed
IVQR approach proposed by \citet{KaplanSun17} and the related developments in \citet{deCastroGalvaoKaplanLiu19} and \citet{FirpoGalvaoPintoPoirierSanroman22}. 

We propose a panel smoothed profiled GMM estimator that can be implemented using standard procedures, avoiding the computational difficulties associated with non-smooth optimization. The estimator incorporates the efficiency gains from a two-step GMM framework. To handle the individual-specific effects, for each candidate value of slopes, the individual-specific coefficients are determined from their smoothed quantile restrictions, and the resulting profiles substituted into the common unconditional moments. This profiling step reduces the outer GMM optimization to the finite-dimensional slope parameters and, importantly, ensures in the over-identified setting that the individual quantile restrictions are imposed exactly before GMM weighting is applied to the common instrument moments. 
The proposed estimator combines the advantages of smoothed moment conditions with the ability to control for individual-specific unobserved heterogeneity. We relax the iid sampling assumption imposed in much of the existing IVQR literature by allowing for weakly dependent panel
data. At the same time, the use of smoothing preserves the computational
simplicity, speed, and scalability of the smoothed IVQR methods proposed
by \citet{KaplanSun17}.

We highlight that, although the conditional quantile restriction motivates
the model and its structural interpretation, the estimator and the
asymptotic theory are based on the unconditional moment restrictions. Consequently, even if
the conditional quantile restriction is misspecified, the estimator
converges to the parameter identified by these unconditional moments,
provided that they remain correctly specified. Relative to the
misspecified conditional quantile model, this parameter may be viewed as
a pseudo-true parameter, complementing the interpretation results of
\citet{AngristChernozhukovFernandez-Val06}.

Second, we establish the large-sample properties of
the proposed estimator under weak dependence. This dependence structure is substantively important: it accommodates dynamic panel specifications, serially dependent disturbances, lagged instruments, and within-individual heteroskedasticity and autocorrelation consistent (HAC) inference. As part of the analysis, we  establish existence and uniform consistency of the sample profiles and show that they are locally unique and differentiable around the true parameter, which permits the profiled GMM moment to be linearized. We then establish weak consistency, a linear representation for the GMM estimator, and derive its asymptotic normal limiting distribution under the standard condition in the QR panel literature, for large cross-section, $n$, and time-series, $T$, with $n/T\to0$, which eliminates the first-order incidental-parameter bias (see, e.g., \citet{GalvaoGuVolgushev20}).

Third, we propose methods for practical inference procedures. Using the asymptotic normality of the GMM estimator, we suggest a sample analogue to estimate the limiting variance-covariance matrix. We then show that this is a consistent estimator; consequently, standard inference procedures are valid.

We provide Monte Carlo simulations evaluating the finite-sample properties of the proposed estimator and inference procedure. The results show that the proposed estimator is nearly unbiased under both exogenous and endogenous designs, and delivers coverage probabilities generally close to the nominal level, with some finite-sample undercoverage and overcoverage. In the presence of endogeneity, it substantially outperforms conventional fixed-effects quantile estimators, which remain severely biased.

Finally, we illustrate the empirical usefulness of the proposed methods by studying inflation dynamics across OECD countries. Using monthly data for 27 countries, we estimate a dynamic panel IV quantile model that allows for endogenous regressors and individual fixed-effects. Under the structural IVQR interpretation, the results suggest asymmetry in the propagation of past inflation: the response of current inflation to lagged inflation is substantially larger at the 25th structural quantile than at the 75th structural quantile. Economically, this pattern points to stronger backward-looking persistence in the lower part of the inflation distribution, whereas inflation at the upper structural quantile appears less tied to its recent history. The application illustrates how the proposed framework can uncover heterogeneous dynamic relationships while accommodating endogeneity and serial dependence.

The remainder of the paper is organized as follows. Section \ref{sec:model} presents the panel data QR model, and Section \ref{sec:est} develops the smoothed GMM estimator. Section \ref{sec:limiting} establishes the asymptotic properties of the estimator, and Section \ref{sec:inference} discusses inference procedures. Section \ref{sec:sim} reports Monte Carlo simulation results. Section \ref{sec:application} presents an empirical application. Section \ref{sec:conclusion} concludes. All proofs and additional simulation results are collected
in the Supplementary Appendix.

We conclude with some remarks on notation.  All vectors are treated
as column vectors. The notation $\bm{1}\{\cdot\}$ denotes the indicator function, $\E[\cdot]$ denotes the expectation, $\Q_{\tau}[\cdot]$ denotes the $\tau$-quantile operator, and $P(\cdot)$ denotes the probability. The distribution $N({\mu},{\sigma^{2}})$ denotes a normal distribution with mean $\mu$ and variance $\sigma^{2}$. For vectors, $\normbig{\cdot}$ denotes the Euclidean norm. Throughout the paper, the asymptotics should be understood as taking $n\to\infty$ and $T\to\infty$. Acronyms
used throughout the paper include central limit theorem (CLT), continuous mapping theorem (CMT), generalized method of moments (GMM), uniform law of large numbers (ULLN), and weak law of large numbers (WLLN).

\section{Model and identification}\label{sec:model}

We consider a panel data instrumental variables quantile model with individual-specific effects. For a fixed quantile index \(\tau\in(0,1)\), let
\(y_{it}\in\mathcal Y\subseteq\R\) be the outcome variable,
\(\bm x_{it}\in\mathcal X\subseteq\R^{d_{\beta}}\) be the vector of
explanatory variables, and \(\bm z_{it}\in\mathcal Z\subseteq\R^{d_z}\)
be the vector of instruments. The process $\{(y_{it},\bm{x}_{it},\bm{z}_{it}):t\in\mathbb{Z}\}$
is strictly stationary. The explanatory variables, \(\bm x_{it}\),
are allowed to be endogenous. The individual effect is denoted by
\(\alpha_{i0\tau}\in\mathcal A\), where \(\mathcal A\) is a compact
subset of \(\R\), and
\(\bm\alpha_{0\tau}=(\alpha_{1,0\tau},\ldots,\alpha_{n,0\tau})'
\in\mathcal A^n\). The finite-dimensional parameter of interest is
\(\bm\beta_{0\tau}\in\mathcal B\subseteq\R^{d_{\beta}}\), where
\(\mathcal B\) is compact. The compactness of
\(\mathcal A^n\times\mathcal B\) is imposed to ensure the existence of
the estimators defined below.\footnote{In our model, the individual
effects include the intercept term and may depend on the quantile index
\(\tau\). By contrast, \citet{Koenker04} restricts the individual-specific
intercepts to be common across quantiles.}

The structural residual is
\(
u_{it}(\alpha_i,\bm\beta)
=
y_{it}-\alpha_i-\bm x_{it}'\bm\beta .
\)
The population parameter \((\bm\alpha_{0\tau},\bm\beta_{0\tau})\) is
defined by the conditional quantile restriction
$\Q_{\tau}\!\left[
u_{it}(\alpha_{i0\tau},\bm\beta_{0\tau})
\mid
\bm z_{it},\alpha_{i0\tau}
\right]
=
0,
\  i=1,\ldots,n,\  t=1,\ldots,T.$
Equivalently,
\begin{equation}
\E\!\left[
1\left\{
y_{it}\leq
\alpha_{i0\tau}+\bm x_{it}'\bm\beta_{0\tau}
\right\}
-\tau
\mid
\bm z_{it},\alpha_{i0\tau}
\right]
=0,
\quad i=1,\ldots,n,\  t=1,\ldots,T.
\label{eq:qrgmmmodel}
\end{equation}
The conditional restriction in \eqref{eq:qrgmmmodel} implies two sets of
unconditional restrictions. First, 
\begin{equation}
\E\!\left[
1\left\{
y_{it}\leq
\alpha_{i0\tau}+\bm x_{it}'\bm\beta_{0\tau}
\right\}
-\tau
\right]
=0,\quad  t=1,\ldots,T,
\label{eqn:alpha-moments}
\end{equation}
for each individual \(i\), which pins down the individual effect for a given value of the common
parameter. Second, the common parameter satisfies the instrument moment
restriction
\begin{equation}
\E\!\left[
\bm z_{it}
\left(
1\left\{
y_{it}\leq
\alpha_{i0\tau}+\bm x_{it}'\bm\beta_{0\tau}
\right\}
-\tau
\right)
\right]=\bm 0,
\quad i=1,\ldots,n,\  t=1,\ldots,T.
\label{eqn:moments}
\end{equation}
The scalar restrictions in \eqref{eqn:alpha-moments} are used to profile
out the individual effects, while the vector restrictions in
\eqref{eqn:moments} identify and estimate the common parameter
\(\bm\beta_{0\tau}\).

The vector \(\bm z_{it}\) is taken as given. It may contain exogenous
components of \(\bm x_{it}\), excluded instruments, transformations, and
lagged variables. We do not include the individual dummy vector in
\(\bm z_{it}\). The individual dummies are part of the fixed-effect
parameterization and are handled through the scalar restrictions in
\eqref{eqn:alpha-moments}. Equivalently, one could write a large stacked
design matrix for the fixed-effects, but doing so is unnecessary here
and can obscure the distinction between individual effects and
instruments.

\citet{KaplanSun17} study a cross-sectional special case of
\eqref{eqn:moments} without fixed-effects and discuss optimal
instruments in \S2.1, pp.\ 108-110. Our estimator uses the unconditional
moments \eqref{eqn:alpha-moments}--\eqref{eqn:moments}.
Under \eqref{eq:qrgmmmodel} and the usual structural IVQR conditions,
including a latent-rank representation monotone in rank,
\(\bm\beta_{0\tau}\) is a structural quantile response parameter
\citep{ChernozhukovHansen05}. Otherwise, correctly specified
unconditional moments define a quantile moment parameter that need
not have a causal interpretation.

The number of individuals is denoted by \(n\), and the number of time
periods is denoted by \(T=T_n\), which may depend on \(n\). In what
follows, we omit the subscript \(n\) of \(T_n\). Although the residual
function \(u_{it}(\alpha_i,\bm\beta)\) is linear in
\((\alpha_i,\bm\beta)\), the quantile restriction is nonlinear in the
distribution of the data. Therefore, unlike in linear mean models, the
individual effects cannot generally be eliminated by differencing.

A useful application of the proposed framework is a dynamic panel quantile model.
For example, let \(\bm\beta_{0\tau}
=
(\beta_{0\tau,1},\bm\beta_{0\tau,2}')'\) and 
\(
\bm x_{it}=(y_{i,t-1},\bm w_{it}')'\), where $\bm w_{it}$ is a vector of additional covariates.
The structural equation is 
\begin{equation*}
y_{it} = 
\alpha_{i0\tau} + \beta_{0\tau,1} y_{i,t-1}
+ \bm w_{it}'\bm\beta_{0\tau,2} + u_{it},
\end{equation*}
where the residual is
\(
u_{it}(\alpha_i,\bm\beta)
=
y_{it}
-\alpha_i
-\beta_{1} y_{i,t-1}
-\bm w_{it}'\bm\beta_{2} .
\)
The model imposes the IV quantile restriction
\(
\Q_{\tau}\!\left[
u_{it}(\alpha_{i0\tau},\bm\beta_{0\tau})
\mid
\bm z_{it},\alpha_{i0\tau}
\right]
=0.
\)
Dynamic specifications of this form are common in empirical panel
applications.  In that case, lagged variables dated sufficiently far in the
past can serve as instruments under a sequential quantile restriction.
For example, if
\(
\Q_{\tau}\!\left[
u_{it}(\alpha_{i0\tau},\bm\beta_{0\tau})
\mid
\mathcal F_{i,t-2},\alpha_{i0\tau}
\right]
=0,
\)
where \(\mathcal F_{i,t-2}\) denotes the information available up to
period \(t-2\), then any measurable function of
\(\mathcal F_{i,t-2}\) can be used as an instrument. Thus one may take,
for instance,
\(
\bm z_{it}
=
(y_{i,t-2},y_{i,t-3},\bm w_{i,t-2}',\bm w_{i,t-3}')',
\)
provided that the corresponding relevance and rank conditions hold. This example illustrates why the fixed-effect IV quantile framework is
useful in dynamic settings. Unlike linear mean models, individual effects
cannot generally be eliminated by differencing in quantile models, while
lagged variables provide natural instruments when dynamic regressors are
potentially endogenous.

Given equations \eqref{eqn:alpha-moments}--\eqref{eqn:moments} and Assumption (A5) stated below, identification of
\(\bm\beta_{0\tau}\) can be understood after accounting for the individual
effects. For each candidate \(\bm\beta\), the individual effect
\(\alpha_i\) is allowed to adjust so that the individual quantile
restriction \eqref{eqn:alpha-moments} is satisfied. The common parameter
is then identified from the remaining instrument moment restrictions in
\eqref{eqn:moments}.  Establishing
global identification in a general IVQR setting is considerably more
challenging; nevertheless, \citet[Thm.\ 2 and App.\ C]{ChernozhukovHansen05}
provide several results for the IVQR framework.

\textcolor{black}{In general, each $\alpha_{i0\tau}$ and $\bm{\beta}_{0\tau}$ can depend on $\tau$, but we assume $\tau$ to be fixed throughout the paper, and hence, suppress such a dependence for notational simplicity.}

\section{Estimation}\label{sec:est}

This section introduces the panel smoothed profiled GMM estimator based on the moment conditions in equations \eqref{eqn:alpha-moments} and \eqref{eqn:moments}. The main object of interest
is the GMM estimator for the over-identified case. The exactly identified
method-of-moments estimator is presented first because it provides the
simplest version of the estimating equation and is also useful as an
initial value for the GMM optimization.

For each individual $i$, define the unsmoothed moment function
\begin{align}
H_{ni}^{u}(\alpha_i,\bm{\beta},\tau)
&\equiv
\E\left[g_{it}^{u}(\alpha_i,\bm{\beta},\tau)\right],
\label{eqn:def-M}\\
g_{it}^{u}(\alpha_i,\bm{\beta},\tau)
&\equiv
g^{u}(y_{it},\bm{x}_{it},\bm{z}_{it},\alpha_i,\bm{\beta},\tau) \equiv
\bm{z}_{it}
\left[
1\left\{
y_{it}-\alpha_i-\bm{x}_{it}'\bm{\beta}\leq0
\right\}
-\tau
\right],
\label{eqn:def-gu}
\end{align}
where the superscript ``$u$'' denotes the unsmoothed moment. The
population moment restriction in \eqref{eqn:moments} can then be written
as
$\bm{0}
=
H_{ni}^{u}(\alpha_{i0\tau},\bm{\beta}_{0\tau},\tau),
\  i=1,\ldots,n.$
The corresponding unsmoothed sample moment replaces $\E[\cdot]$
by the time-series average,
\begin{equation}
\widehat{H}_{ni}^{u}(\alpha_i,\bm{\beta},\tau)
\equiv
\frac{1}{T}\sum_{t=1}^{T}
g_{it}^{u}(\alpha_i,\bm{\beta},\tau),
\quad
\widehat{H}_{n}^{u}(\bm{\alpha},\bm{\beta},\tau)
\equiv
\frac{1}{n}\sum_{i=1}^{n}
\widehat{H}_{ni}^{u}(\alpha_i,\bm{\beta},\tau).
\label{eqn:def-Mu-hat}
\end{equation}
A direct GMM criterion based on
$\widehat{H}_{n}^{u}(\bm{\alpha},\bm{\beta},\tau)$ is generally
difficult to minimize because the indicator function makes the sample
moments discontinuous in $(\bm{\alpha},\bm{\beta})$; see, for example,
\citet[Fig.\ 1(a) and Ex.\ 3, p.\ 297]{ChernozhukovHong03}. We therefore replace the indicator by a smooth approximation.

Let $K(\cdot)$ be a kernel function satisfying
\(
\int_{-\infty}^{\infty}K(u)du=1,\) \(
G(u):=\int_{-\infty}^{u}K(v)dv,
\)
and write $G_{h_n}(u)=G(u/h_n)$, where $h_n\to0$ is a bandwidth
sequence. We do not require $K(\cdot)$ to be nonnegative; the precise
regularity conditions on $K(\cdot)$ are stated in Section \ref{subsec:ass}. Since
\(
G_{h_n}\left(-\left(y_{it}-\alpha_i-\bm{x}_{it}'\bm{\beta}\right)\right)
=
G\left(\frac{\alpha_i+\bm{x}_{it}'\bm{\beta}-y_{it}}{h_n}\right),
\)
this term is a smooth counterpart of
$1\{y_{it}\leq\alpha_i+\bm{x}_{it}'\bm{\beta}\}$. The smoothed moment
function is therefore
\begin{align}
\begin{split}
g_{it}(\alpha_i,\bm{\beta},\tau)
&\equiv
g(y_{it},\bm{x}_{it},\bm{z}_{it},\alpha_i,\bm{\beta},\tau)  \equiv
\bm{z}_{it}
\left[
G_{h_n}\left(-\left(y_{it}-\alpha_i-\bm{x}_{it}'\bm{\beta}\right)\right)
-\tau
\right],
\\
\widehat{H}_{n}(\bm{\alpha},\bm{\beta},\tau)
&\equiv
\frac{1}{n}\sum_{i=1}^{n}
\widehat{H}_{ni}(\alpha_i,\bm{\beta},\tau)\equiv
\frac{1}{nT}\sum_{i=1}^{n}\sum_{t=1}^{T}
g_{it}(\alpha_i,\bm{\beta},\tau).
\end{split}
\label{eqn:def-M-hat}
\end{align}

The smoothing step is useful for both computation and asymptotic
analysis. Computationally, it produces moments that are differentiable
in $(\bm{\alpha},\bm{\beta})$ when $G$ is smooth, which makes numerical
optimization more stable. Theoretically, smoothness allows us to use
Taylor expansions of the profiled moment conditions and to control the
effect of estimating the individual effects. Related smoothing
ideas have been used in conditional quantile estimation by
\citet{Horowitz98}, \citet{GalvaoKato16}, and \citet{KaplanSun17}.

\subsection{Method of moments under exact identification}

\label{sec:est-MM}

Let $\bm{\theta}=(\bm{\alpha},\bm{\beta})$. In the exactly identified
case, \(d_z=d_\beta\), the smoothed moment equations determine the estimator directly. The individual effects are treated as nuisance parameters and are
profiled out by imposing the smoothed individual quantile restrictions.
Here and below, profiling (or concentrating) means solving the nuisance
restriction conditional on each candidate value of the common parameter:
\begin{equation}\label{eqn:def-est-MM-A}
\widehat{A}_{i}(\alpha_i,\bm{\beta})
\equiv
\frac{1}{T}\sum_{t=1}^{T}
\left[
G_{h_n}\left(
-\left(y_{it}-\alpha_i-\bm{x}_{it}'\bm{\beta}\right)
\right)
-\tau
\right]
=
0,
\qquad i=1,\ldots,n.
\end{equation}
For each \(\bm\beta\), the solution \(\widehat\alpha_i(\bm\beta)\) is
substituted into the common-parameter moment restriction, which we will further discuss in Section \ref{sec: profiled gmm}. 
The smoothed panel method-of-moments estimator
$(\widehat{\bm{\alpha}}_{\mathrm{MM}},
\widehat{\bm{\beta}}_{\mathrm{MM}})$ solves $\widehat{A}_{i}(\widehat{\alpha}_{\mathrm{MM},i},\widehat{\bm{\beta}}_\mathrm{MM})=0$ for $i=1,\ldots,n$, and 
\begin{equation}
\widehat{H}_{n}(\widehat{\bm{\theta}}_{\mathrm{MM}},\tau)
=
\frac{1}{nT}\sum_{i=1}^{n}\sum_{t=1}^{T}
\left[
G_{h_n}\left(
-\left(
y_{it}
-\widehat{\alpha}_{\mathrm{MM},i}
-\bm{x}_{it}'\widehat{\bm{\beta}}_{\mathrm{MM}}
\right)
\right)
-\tau
\right]\bm{z}_{it}
=
\bm{0}.
\label{eqn:def-est-MM}
\end{equation}
The right-hand side of \eqref{eqn:def-est-MM} may be replaced by
$o_p((nT)^{-1/2})$ without affecting the first-order asymptotic results.
We call $\widehat{\bm{\beta}}_{\mathrm{MM}}$ the smoothed panel MM
(SP-MM) estimator of $\bm{\beta}_{0\tau}$.

Although our main focus is GMM, the MM estimator provides a transparent
benchmark and a natural starting value for GMM optimization, whose
potentially non-convex criterion makes initialization important.
With a differentiable $G_{h_n}(\cdot)$ and bandwidth not too close to
zero, the smoothed MM equations are typically easier to solve and
check numerically.

Unsmoothed fixed-effects quantile regression may suffer from an
incidental-parameter problem in short panels.
For example, \citet{GrahamHahnPowell09} show that, with $T=2$ in a
linear location model, the estimated common slope does not depend
on $\tau$, although the population quantile slope generally does.
Thus, the standard fixed-effects QR estimator need not be consistent
when $T$ is small.

\begin{remark}
The estimating equations in \eqref{eqn:def-est-MM} are closely related
to the smoothed first-order conditions studied by \citet{GalvaoKato16}.
Their first-order conditions contain additional terms involving
$\widetilde K(u):=uK(u)$. \citet[Theorem 3.1]{GalvaoKato16} show that
these additional terms do not affect the first-order asymptotic
distribution of the common-parameter estimator. In this sense, the
present smoothed MM equations keep the first-order component of the
smoothed quantile score while avoiding terms that are asymptotically
negligible for the distribution of $\widehat{\bm{\beta}}$.
\end{remark}

\subsection{Two-step {GMM} estimator under over-identification}

\label{sec:est-GMM}

We now consider the possibly over-identified case, the main focus of
our theory. With more instruments than parameters, the equations in
\eqref{eqn:def-est-MM} generally have no exact solution, so GMM minimizes
a weighted quadratic norm of the smoothed sample moments.
Here, ``two-step GMM'' refers to updating the weighting matrix,
not a first-stage data transformation.

Given a symmetric positive definite weighting matrix $\widehat W$, the
one-step smoothed panel GMM estimator is defined by
\begin{align}
\widehat{\bm{\theta}}_{\mathrm{GMM,1s}}
=
\argmin_{\bm{\theta}\in\mathcal{A}^{n}\times\mathcal{B}}
\widehat{H}_{n}(\bm{\theta},\tau)'
\widehat{W}
\widehat{H}_{n}(\bm{\theta},\tau),
\
\text{s.t. } \ 
\widehat{A}_{i}(\alpha_i,\bm{\beta})=0,\quad i=1,\ldots,n.
\label{eqn:def-est-GMM}
\end{align}
The weighting matrix determines how the over-identifying restrictions
are combined. For efficiency, the relevant weighting matrix is the inverse
of the long-run variance of the profile moment, $\Omega$.
Let
$\bar{\Omega}$ be a consistent estimator of this long-run variance,
constructed from the initial estimator $\bar{\bm{\theta}}$. In practice, $\bar{\bm{\theta}}$ can be taken as the one-step
estimator in \eqref{eqn:def-est-GMM} computed with a fixed symmetric
positive definite matrix $W_0$. Alternatively, a prespecified
$d_\beta$-dimensional full-rank subset or linear combination of the
instruments may be used in the preliminary step. The efficient two-step smoothed panel GMM
estimator is then
\begin{equation}
\widehat{\bm{\theta}}_{\mathrm{GMM,2s}}
=
\argmin_{\bm{\theta}\in\mathcal{A}^{n}\times\mathcal{B}}
\widehat{H}_{n}(\bm{\theta},\tau)'
\bar{\Omega}^{-1}
\widehat{H}_{n}(\bm{\theta},\tau),
\
\text{s.t. } \ 
\widehat{A}_{i}(\alpha_i,\bm{\beta})=0,\quad i=1,\ldots,n .
\label{eqn:def-est-2s}
\end{equation}
Thus, for the two-step estimator, the weighting matrix denoted below by
$\widehat W$ is $\bar\Omega^{-1}$. The asymptotic theory focuses on its
common-parameter component.

The constrained GMM problem in equation \eqref{eqn:def-est-2s} may be non-convex. For this reason, the numerical implementation may use the simulated
annealing algorithm from the \texttt{GenSA} package in R \citep{R.GenSA},
which is designed for global optimization problems. Since simulated
annealing can still be sensitive to starting values in finite samples,
we initialize the GMM search using the smoothed MM estimator in
\eqref{eqn:def-est-MM}. This initialization uses the same smoothed
moment structure as the GMM estimator and therefore provides a coherent
starting point for the over-identified criterion.

After obtaining the smoothed GMM estimator, one could 
re-optimize the corresponding unsmoothed GMM criterion. We do not take
this as the baseline estimator. In linear iid IVQR models,
\citet{KaplanSun17} find that smoothing can improve pointwise mean
squared error, although it may reduce the estimated variation
across quantiles. Thus, the finite-sample benefit of an additional
unsmoothed optimization step is not clear. The statistical theory in this paper therefore focuses on the smoothed GMM estimator defined above.


\section{Asymptotic properties}

\label{sec:limiting}

This section establishes the large-sample properties of the panel smoothed
profiled GMM estimator introduced in Section \ref{sec:est-GMM}.  The dimensions \(d_\beta\) and \(d_z\)
are fixed as \(n,T\to\infty\).
The asymptotic argument treats the individual effects as nuisance
parameters. For this reason, the relevant common-parameter moment
is the moment obtained after profiling out the effect of estimating
$\alpha_{i}$.

\subsection{Profiled GMM representation}
\label{sec: profiled gmm}
For each \(i\) and \(\bm\beta\in\mathcal B\), define the sample root
correspondence
\begin{align}
\mathcal R_{ni}(\bm\beta)
=
\left\{
\alpha\in\mathcal A:
\widehat{A}_{i}(\alpha,\bm\beta)=0
\right\},\label{eq: sample root}
\end{align}
where \(\widehat{A}_{i}\) is defined in \eqref{eqn:def-est-MM-A}.
Because a higher-order kernel need not be nonnegative, the sample
equation need not be globally monotone in \(\alpha\), and
\(\mathcal R_{ni}(\bm\beta)\) may contain multiple roots in finite
samples. Lemma \ref{lem:profile_existence} below establishes that, with
probability approaching one, these sets are nonempty uniformly over
\(i\le n\) and \(\bm\beta\in\mathcal B\).

For each \(\bm\beta\in\mathcal B\), let
\(
\mathcal S_n(\bm\beta)
=
\argmin_{\bm\alpha\in\prod_{i=1}^n\mathcal R_{ni}(\bm\beta)}
\widehat H_n(\bm\alpha,\bm\beta,\tau)'
\widehat W
\widehat H_n(\bm\alpha,\bm\beta,\tau).
\)
Since \(\prod_{i=1}^n\mathcal R_{ni}(\bm\beta)\) is nonempty and compact
and the criterion is continuous, \(\mathcal S_n(\bm\beta)\) is nonempty
and compact with probability approaching one. If it contains more than
one vector, choose the one with the smallest first component. If several
such vectors remain, choose among them the one with the smallest second
component, and continue in this way through the remaining components.
This fixed rule defines a unique measurable choice
\(\widehat{\bm\alpha}(\bm\beta)\in\mathcal S_n(\bm\beta)\).

Define the profiled smoothed sample moment by
\[
\widehat m_n(\bm\beta)
=
\widehat H_n(\widehat{\bm\alpha}(\bm\beta),\bm\beta,\tau)
=
\frac1{nT}\sum_{i=1}^n\sum_{t=1}^T
\bm z_{it}
\left\{
G_{h_n}\!\left(
-y_{it}
+\widehat\alpha_i(\bm\beta)
+\bm x_{it}'\bm\beta
\right)
-\tau
\right\}.
\]
The common-parameter component of the joint constrained GMM estimator in \eqref{eqn:def-est-GMM}
then has the profiled representation
\begin{align}
\widehat{\bm\beta}
\in
\argmin_{\bm\beta\in\mathcal B}
\widehat m_n(\bm\beta)'
\widehat W
\widehat m_n(\bm\beta).
\label{eq:profile}
\end{align}
If the minimum is attained at more than one value of \(\bm\beta\), any
one of these values may be taken as \(\widehat{\bm\beta}\). The results
below hold for every such minimizer.

Similarly, for each \(i\) and \(\bm\beta\in\mathcal B\), let
\(\alpha_i(\bm\beta)\) denote the population constrained fixed-effect.
The collection
\(\{\alpha_i(\bm\beta):i\ge1,\bm\beta\in\mathcal B\}\)
is contained in a compact subset of
\(\operatorname{int}(\mathcal A)\) and satisfies
\(
\E\left[
1\left\{
y_{it}-\alpha_i(\bm\beta)-\bm x_{it}'\bm\beta\le0
\right\}
-\tau
\right]
=0.
\)
The profiled population moment is
\begin{align}
m_n(\bm\beta)
=
\frac1{nT}\sum_{i=1}^n\sum_{t=1}^T
\E\left[
\bm z_{it}
\left\{
1\left(
y_{it}-\alpha_i(\bm\beta)-\bm x_{it}'\bm\beta\le0
\right)-\tau
\right\}
\right],
\label{eq: global obj}
\end{align}
and the population GMM criterion is
\(\Q_n(\bm\beta)=m_n(\bm\beta)'Wm_n(\bm\beta)\).

Proposition \ref{prop1} below establishes
\(\widehat{\bm\beta}\stackrel{p}{\to}\bm\beta_0\). Lemma A.4
{} in Supplementary Appendix A then shows that, with probability
approaching one, the sample fixed-effect solutions
\(\widehat\alpha_i(\bm\beta)\) are unique and continuously
differentiable throughout a shrinking neighborhood of
\(\bm\beta_0\) containing \(\widehat{\bm\beta}\). In this neighborhood, the criterion-minimizing
profile therefore coincides with the unique sample profile, which
justifies the differentiations and mean-value expansions used in the proof.

Profiling reduces the outer optimization from \(n+d_\beta\) parameters
with \(n\) equality constraints to \(d_\beta\) parameters. For each
candidate \(\bm\beta\), the \(n\) scalar root equations are
individual-specific and can be solved separately or in parallel. When
multiple roots occur, however, the additional minimization over their
possible combinations may be computationally costly. In practice,
either the joint constrained or profiled formulation may be used
according to numerical stability.

Both formulations in \eqref{eqn:def-est-GMM} and \eqref{eq:profile} differ from unconstrained GMM, which stacks the \(n\) individual restrictions and the \(d_z\) common instrument moments in a single quadratic criterion. For example, with a block-diagonal weighting matrix, such an estimator minimizes
$$
\frac{1}{n}\sum_{i=1}^{n}\widehat{A}_{i}(\alpha_i,\bm\beta)^2
+
\widehat H_n(\bm\alpha,\bm\beta,\tau)'
\widehat W
\widehat H_n(\bm\alpha,\bm\beta,\tau)
$$
over \((\bm\alpha,\bm\beta)\). When \(d_z>d_\beta\), the \(n+d_z\) sample moment equations generally cannot all be satisfied by the \(n+d_\beta\) parameters. Minimizing the stacked criterion may therefore allow \(\widehat{A}_{i}(\alpha_i,\bm\beta)\neq0\) for some individuals in exchange for a better fit to the common instrument moments. Our estimator instead requires
$
\widehat{A}_{i}\bigl(\widehat\alpha_i(\bm\beta),\bm\beta\bigr)=0,
\  i=1,\ldots,n,
$
for every candidate \(\bm\beta\), and applies GMM weighting only to the resulting common instrument moments.

Under exact identification, whenever a joint solution exists, all sample moments can be set to zero, so profiling simply eliminates the nuisance parameters from the same estimating system, as in exactly identified smoothed score systems; see, for example, \citet{GalvaoKato16}. Under over-identification, however, imposing the individual restrictions exactly is a substantive part of the estimator's definition: these restrictions are enforced as constraints rather than balanced against the common instrument moments. Profiling implements this requirement while reducing the dimension of the optimization.

This nested structure is computationally related to the
inverse-quantile-regression procedure of
\citet{ChernozhukovHansen06}, which solves an auxiliary quantile
regression for each candidate value of the endogenous coefficient. Nevertheless, in
our setting, the inner step consists of \(n\) scalar smoothed equations,
and local uniqueness yields a differentiable concentrated GMM criterion
near \(\bm\beta_0\). Our estimator is also conceptually distinct from
the two-step transformation of \citet{Canay11}, who first estimates
quantile-invariant individual location effects from a preliminary
conditional-mean model and then applies quantile regression to the
transformed outcome. By contrast, we determine
\(\widehat\alpha_i(\bm\beta)\) from an individual smoothed quantile
restriction for each candidate \(\bm\beta\), within the GMM estimation
problem itself. Our approach therefore requires neither a preliminary
conditional-mean model nor a quantile-invariant location-shift
restriction.
\subsection{Assumptions}
\label{subsec:ass}
Throughout this section, let
$u_{it}=u_{it}(\alpha_{i0},\bm\beta_0)
=y_{it}-\alpha_{i0}-\bm x_{it}'\bm\beta_0$ denote the residual at the
true parameter. The moment restrictions
\eqref{eqn:alpha-moments} and \eqref{eqn:moments} are maintained. The following
conditions are stated for the asymptotic distribution.
\begin{description}
\item [{(A1)}] For each $i\geq1$, the process $\{(y_{it},\bm{x}_{it},\bm{z}_{it}):t\in\mathbb{Z}\}$
is strictly stationary and $\beta$-mixing. Let $\beta_{i}(j)$ denote
its $\beta$-mixing coefficient. There exist constants $a\in(0,1)$
and $B<\infty$ such that $\sup_{i\geq1}\beta_{i}(j)\leq Ba^{j}$.
The processes are independent across $i$.
\item [{(A2)}] There exist constants $M_1$ and $M_2$ such that, almost surely,
\(
\sup_{i\ge1}\|\bm x_{i1}\|\le M_1\) and \(
\sup_{i\ge1}\|\bm z_{i1}\|\le M_2.
\)
\item [{(A3)}] The parameter sets $\mathcal{A}\subset\mathbb{R}$ and $\mathcal{B}\subset\mathbb{R}^{d_{\beta}}$
are compact. For each $i$, $\alpha_{i0}$ is an interior point of
$\mathcal{A}$, and $\bm{\beta}_{0}$ is an interior point of $\mathcal{B}$.
\item [{(A4)}] Let $\bm{q}_{it}=(\bm{x}_{it}',\bm{z}_{it}')'$. The conditional
distribution of $u_{it}$ given $\bm{q}_{it}$ has a density $f_{i}(u\mid\bm{q})$.
The density $f_{i}(0)=\E[f_{i}(0\mid\bm{q}_{it})]$ is bounded away
from zero uniformly in $i$.
\item [{(A5)}] For every \(i\geq1\) and \(\bm\beta\in\mathcal B\),
the population fixed-effect restriction has a solution
\(\alpha_i(\bm\beta)\), and the collection
\(\{\alpha_i(\bm\beta):i\geq1,\bm\beta\in\mathcal B\}\) is contained
in a compact subset of \(\operatorname{int}(\mathcal A)\). Moreover,
\(\alpha_i(\bm\beta_0)=\alpha_{i0}\) and \(\Q_n(\bm{\beta}_0)=0\).
For every \(\varepsilon>0\),
there exist \(c_{\varepsilon,\alpha},c_{\varepsilon,\beta}>0\) such that \(
\liminf_{n\to\infty}
\inf_{\substack{\bm\beta\in\mathcal B:\\
\|\bm{\beta}-\bm{\beta}_0\|\ge\varepsilon}}
\left\Vert m_n(\bm{\beta})\right\Vert
\ge c_{\varepsilon,\beta},
\) and 
\[
\liminf_{n\to\infty}
\inf_{i\le n}\inf_{\bm{\beta}\in\mathcal B}
\inf_{\substack{\alpha\in\mathcal A:\\
|\alpha-\alpha_i(\bm{\beta})|\ge\varepsilon}}
\left|
\E\left[1\left\{ y_{it}-\alpha-\bm{x}_{it}'\bm{\beta}\leq0\right\}-\tau\right]
\right|
\ge c_{\varepsilon,\alpha}
\]

\item [{(A6)}]
For each \(i\), \(f_i(u\mid\bm q)\) is \(r\) times continuously
differentiable in \(u\in\mathbb R\), where \(r\ge4\) is an even
integer. Its derivatives up to order \(r\) are uniformly bounded over
\(i\), \(u\), and \(\bm q\). In addition, there exist constants
\(\varepsilon_0>0\), \(\vartheta>0\), and \(C<\infty\) such that, with
\(
\Theta_i(\varepsilon_0)
=
\{
(\alpha,\bm\beta)\in\mathcal A\times\mathcal B:
|\alpha-\alpha_{i0}|
+\|\bm\beta-\bm\beta_0\|
\le\varepsilon_0
\},
\)
for all \(0<a,b\le\vartheta\),
\[
\sup_{i\ge1}\sup_{j\ge1}
\sup_{\substack{
(\alpha,\bm\beta)\in\Theta_i(\varepsilon_0)\\
(\widetilde\alpha,\widetilde{\bm\beta})
\in\Theta_i(\varepsilon_0)
}}
P\left(
\left|u_{i1}(\alpha,\bm\beta)\right|\le a,\,
\left|u_{i,1+j}
(\widetilde\alpha,\widetilde{\bm\beta})\right|\le b
\right)
\le Cab.
\]

\item [{(A7)}] Let
\(
D_n=\frac1n\sum_{i=1}^n
\E\!\left[f_i(0\mid\bm q_{it})\bm z_{it}(\bm x_{it}-\bm\gamma_i)'\right],
\)
where
\(
s_i=\{f_i(0)\}^{-1}\), \(
\bm\gamma_i=s_i\E[f_i(0\mid\bm q_{it})\bm x_{it}],
\)
and suppose that $D_{n}\to D$. Assume $\widehat W\stackrel{p}{\to}W$,
where $W$ is symmetric positive definite. Finally,
suppose that $D'WD$ is nonsingular.
\item [{(A8)}] The kernel $K(\cdot)$ is symmetric about zero, has bounded
support, is continuously differentiable, and has a globally Lipschitz
first derivative. It is piecewise twice continuously differentiable, with
bounded derivatives on each piece.
$K(\cdot)$ is an $r$-th order kernel.
\item [{(A9)}] The bandwidth satisfies $h_{n}\asymp T^{-c}$ for some $1/r<c<1/3$.
\item [{(A10)}] Let $\Omega_{n}=\frac{1}{n}\sum_{i=1}^{n}\Omega_{ni}$
where 
\begin{align*}
\Omega_{ni}&=\Var\!\left(T^{-1/2}\sum_{t=1}^{T}\psi_{it}\right),
&\psi_{it}&=(\bm{z}_{it}-\bm{\delta}_{i})\{1(u_{it}\leq0)-\tau\},
&\bm{\delta}_{i}=s_{i}\E[f_{i}(0\mid\bm{q}_{it})\bm{z}_{it}].
\end{align*}
Assume that $\Omega_{n}\to\Omega$, where $\Omega$ is finite and
positive definite. In addition, let
\(
\omega_{it}=G_{h_n}(-u_{it})-\tau,\) \(
\bm\kappa_{it}
=
(\bm z_{it}-\bm\delta_i)K_{h_n}(-u_{it})
\), \(\bm\Lambda_{ni}
=
\sum_{|j|\le T-1}(1-|j|/T)
\E[\omega_{i1}\{\bm\kappa_{i,1+j}-\E[\bm\kappa_{it}]\}]\),
and assume
\(\sup_{i\le n}\|\bm\Lambda_{ni}\|=O(1)\).
\end{description}
\noindent
The assumptions above are standard in smoothed quantile estimation,
fixed-effects panel quantile regression, and GMM. Assumption (A1)
allows time-series dependence within individuals while maintaining
cross-individual independence. Its geometric $\beta$-mixing condition,
also used by \citet{GalvaoKato16}, facilitates uniform convergence
rates for kernel-type statistics and covers many weakly dependent
processes, including standard ARMA-type processes under regularity
conditions. Assumption (A2) imposes bounded covariates and instruments
to simplify the empirical-process arguments, as in
\citet{GalvaoKato16} and \citet{KaplanSun17}.
Assumption (A3) is the usual compactness and interior-point condition
for extremum estimation.

Assumption (A4) is the standard density condition for quantile
regression. The conditional density near zero permits local
linearization of the indicator moment, while bounding $f_i(0)$ away
from zero ensures local identification of the individual effect and
a well-behaved profile derivative. Assumption (A5) provides global
identification for consistency: its individual-effect restriction separates
$\alpha_i(\bm{\beta})$ from other individual-effect values for each
$\bm{\beta}$, and its profiled-moment restriction separates the profiled population
moment from zero away from $\bm{\beta}_0$. Related conditions appear
in smoothed IVQR and smoothed GMM quantile models
\citep{KaplanSun17,deCastroGalvaoKaplanLiu19}.

Assumption (A6) strengthens marginal density smoothness to control
higher-order smoothing bias. Its second part imposes uniform local
anti-concentration on pairs of residual indices, providing small-ball
bounds under within-individual serial dependence. This is
\emph{strictly weaker} than requiring a uniformly bounded joint
conditional density given regressors at both dates, as in
\citet{GalvaoKato16}. The weaker condition accommodates dynamic
specifications, where conditioning on adjacent regressor vectors may
reveal the earlier outcome and make the joint conditional distribution
singular. The requirement \(r\ge4\) is consistent with
\citet{Horowitz98} and \citet{KaplanSun17}.

Assumption (A7) imposes local full rank for asymptotic linearization
and normality. Nonsingularity of $D'WD$ is the usual GMM rank condition
for the common-parameter estimator. Thus, (A5) and (A7) serve distinct
purposes: global separation delivers consistency, whereas local rank
ensures an invertible GMM linearization.
Assumption (A8) specifies kernel regularity. The higher-order kernel
eliminates lower-order smoothing bias terms, leaving bias of order
$h_n^r$. As is standard for higher-order kernels, nonnegativity is
not required.

Assumption (A9) specifies the bandwidth rate. The lower bound
$c>1/r$ makes $h_n^r$ negligible at the rate required for the
asymptotic expansion, while the upper bound $c<1/3$ prevents excessive
variability in kernel terms and their derivatives. Such restrictions
are standard in smoothed quantile estimation
\citep{Horowitz98,GalvaoKato16}.
Finally, Assumption (A10) requires the average long-run variance of
the profiled quantile scores $\psi_{it}$ to converge to a finite
positive definite matrix. Here, $\bm{\delta}_i$ removes the
first-order effect of estimating the individual effect.
Its second part controls the long-run covariance term generating
the order \(T^{-1}\) incidental-parameter bias, providing the
profiled-GMM analogue of the long-run bias condition in
\citet{GalvaoKato16}.

\subsection{Profiling properties and limiting distribution}

\label{sec:asy-normality}

The following result gives the profiling properties and limiting
distribution of the proposed two-step smoothed GMM estimator. 
We first establish that the sample profile is well defined with
probability approaching one.

\begin{lemma}[Existence of the sample profile]
\label{lem:profile_existence}
Suppose that Assumptions (A1)-(A8) hold, \((\log n)/\sqrt{T}\to0\), and \(h_n\to0\). 
Then the smoothed fixed-effect equation $
\widehat{A}_{i}(\alpha,\bm\beta)=0
$ in \eqref{eq: sample root} admits a solution in \(\mathcal A\), uniformly over \(i\le n\) and \(\bm\beta\in\mathcal B\), with probability approaching one.
\end{lemma}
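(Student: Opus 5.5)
Since \(\widehat A_i(\cdot,\bm\beta)\) is continuous in \(\alpha\) (Assumption (A8): \(G\) is \(C^1\) because \(K\) is continuous), the plan is to use the intermediate value theorem. For each \(i\) and \(\bm\beta\) I will find two points of \(\mathcal A\) at which \(\widehat A_i\) takes opposite signs, with these points chosen uniformly in \(i\) and \(\bm\beta\). By (A5), the population profiles \(\alpha_i(\bm\beta)\) lie in a compact set \(\mathcal C\subset\operatorname{int}(\mathcal A)\). Fix \(\varepsilon>0\) so that the \(\varepsilon\)-enlargement of \(\mathcal C\) still lies in \(\mathcal A\), and set \(\alpha^{\pm}_i(\bm\beta)=\alpha_i(\bm\beta)\pm\varepsilon\in\mathcal A\). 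Write \(A_i(\alpha,\bm\beta)=\E[1\{y_{it}-\alpha-\bm x_{it}'\bm\beta\le0\}-\tau]\). This function is nondecreasing in \(\alpha\) and vanishes at \(\alpha_i(\bm\beta)\). The separation part of (A5) then gives, for all large \(n\),
\[
A_i(\alpha_i^{+}(\bm\beta),\bm\beta)\ge c_{\varepsilon,\alpha}/2,\qquad A_i(\alpha_i^{-}(\bm\beta),\bm\beta)\le -c_{\varepsilon,\alpha}/2,
\]
uniformly in \(i\le n\) and \(\bm\beta\in\mathcal B\). The sign follows from monotonicity.

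The remaining task is to show
\[
\max_{i\le n}\sup_{\bm\beta\in\mathcal B}\sup_{\alpha\in\mathcal A}\bigl|\widehat A_i(\alpha,\bm\beta)-A_i(\alpha,\bm\beta)\bigr|=o_p(1).
\]
Once this holds, with probability approaching one we have \(\widehat A_i(\alpha_i^+,\bm\beta)>0>\widehat A_i(\alpha_i^-,\bm\beta)\) simultaneously for all \(i,\bm\beta\), and the intermediate value theorem yields a root in \([\alpha_i^-,\alpha_i^+]\subset\mathcal A\). I split the difference into a bias term and a stochastic term. The bias is \(\E[G_{h_n}(\alpha+\bm x_{it}'\bm\beta-y_{it})]-P(y_{it}\le\alpha+\bm x_{it}'\bm\beta)\). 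Substitute \(y_{it}=u_{it}+\alpha_{i0}+\bm x_{it}'\bm\beta_0\), condition on \(\bm q_{it}\), and change variables. The bias then becomes \(\int K(v)\{F_i(c+h_nv\mid\bm q)-F_i(c\mid\bm q)\}dv\), with \(c=\alpha-\alpha_{i0}+\bm x_{it}'(\bm\beta-\bm\beta_0)\). It is bounded by \(h_n\sup f_i\int|vK(v)|dv\to0\) uniformly, using (A6) bounded densities and (A8) bounded support.

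The stochastic term \(T^{-1}\sum_t\{G_{h_n}(\cdot)-\E G_{h_n}(\cdot)\}\) is the main obstacle, because it requires uniformity over \(i\le n\) and over the compact set \(\mathcal A\times\mathcal B\) under \(\beta\)-mixing. I would proceed in three steps.

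First, discretize \(\mathcal A\times\mathcal B\) by a grid of mesh \(\eta\). For fixed \(h_n\), the map \((\alpha,\bm\beta)\mapsto G_{h_n}\) is Lipschitz with constant \(O(h_n^{-1})\), and a fine \(h_n\)-dependent grid could be costly. To avoid this, I will bracket instead. \(G\) is bounded and, after writing \(G=G^+-G^-\) with \(G^{\pm}\) monotone (as \(K\) has bounded variation), each \(G^\pm_{h_n}(\alpha+\bm x'\bm\beta-y)\) is monotone in \(\alpha\). For \(\bm\beta\), I bracket using \(|\bm x'(\bm\beta-\bm\beta')|\le M_1\eta\) together with the monotonicity in the index. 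The bracket gaps then reduce to expectations of the type already controlled by the bias computation, namely \(O(\eta)\) uniformly. This gives a fixed-size grid of \(N(\eta)\) points, independent of \(n\) and \(T\).

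Second, at each grid point, apply a Bernstein inequality for bounded geometrically \(\beta\)-mixing sequences (e.g.\ via Berbee coupling with blocks of length \(\asymp\log T\)). The summands are bounded by (A2) and the boundedness of \(G\). This yields the tail bound \(P(|T^{-1}\sum_t(\cdot)|>\delta)\le C\exp(-c\delta^2T/\log T)+CTa^{\log T}\) or similar.

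Third, take a union bound over \(n\) individuals and \(N(\eta)\) grid points. This gives \(nN(\eta)\exp(-c\delta^2\sqrt T)\to0\), where I use the cruder block bound \(T/\log T\gtrsim\sqrt T\). The limit holds since \((\log n)/\sqrt T\to0\), which is precisely where that rate condition enters. Letting \(\eta,\delta\to0\) slowly completes the uniform convergence, and with it the existence of roots uniformly with probability approaching one.
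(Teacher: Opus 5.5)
Your overall argument follows the paper's proof. You locate a root between $\alpha_i(\bm\beta)-\varepsilon$ and $\alpha_i(\bm\beta)+\varepsilon$, get the signs from monotonicity of $\alpha\mapsto P(y_{it}-\bm x_{it}'\bm\beta\le\alpha)-\tau$ together with the separation condition in (A5), bound the smoothing bias by $O(h_n)$ uniformly, and finish with the intermediate value theorem.

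The real difference is the uniform law of large numbers for the centered term. The paper applies the blocking and maximal-inequality results for VC-type classes of Galvao and Kato (2016, Propositions C.1--C.2), which also give the rate $(\log n+\log T)/\sqrt T$. You use finite bracketing through $G=G^+-G^-$ with $G^\pm(u)=\int_{-\infty}^uK^\pm(v)\,dv$; what you need for this decomposition is $K\in L^1$, not bounded variation of $K$. Your key observation is correct: $\E[G^\pm_{h_n}(w+\delta)-G^\pm_{h_n}(w)]\le\delta\,\sup_{i,u,\bm q}f_i(u\mid\bm q)\int K^\pm$ uniformly in $h_n$. Hence the number of brackets does not grow with $n$ or $T$. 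This route is more elementary, since it needs no entropy calculation for a class that changes with $h_n$. It delivers only $o_p(1)$, but that is all the lemma requires.

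The step that fails as written is the union bound in your third step. Berbee coupling leaves an additive remainder of order $(T/q)\sup_i\beta_i(q)$ for every grid point and every $i$, and you drop it. With blocks of length $q\asymp\log T$, this remainder is only polynomially small in $T$, so $nN(\eta)(T/q)Ba^{q}$ need not vanish. The hypothesis $(\log n)/\sqrt T\to0$ allows, for instance, $n=\exp(T^{1/3})$.

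This also means you have misplaced where the rate condition is used. The exponential term by itself would only need $\log n=o(T/\log T)$, so weakening it to $\exp(-c\delta^2\sqrt T)$ is not where $(\log n)/\sqrt T\to0$ does its work.

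The fix is to take $q\asymp\sqrt T$, as the paper does. Hoeffding's inequality on the independent blocks then gives $\exp(-c\delta^2T/q)=\exp(-c\delta^2\sqrt T)$, and the coupling term is $O(\sqrt T a^{\sqrt T})$. Both are $o(n^{-1})$ precisely because $(\log n)/\sqrt T\to0$, which is the real role of that condition. Taking $q\asymp\log(nT)$ with a large enough constant also works. With this change your proof is complete.
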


Lemma A.4
{} further strengthens the existence result locally.  We next show consistency of the estimator. As in \citet{KatoGalvaoMontes-Rojas12},
$(\widehat{\bm{\alpha}},\widehat{\bm{\beta}})$ is said to be weakly
consistent if $\max_{1\leq i\leq n}|\widehat{\alpha}_{i}-\alpha_{i0}|\stackrel{p}{\to}0$
and $\widehat{\bm{\beta}}\stackrel{p}{\to}\bm{\beta}_{0}$. 

\begin{proposition}[Consistency] \label{prop1} Suppose that the moment
restrictions \eqref{eqn:alpha-moments}-\eqref{eqn:moments} and
Assumptions (A1)-(A8) hold, $(\log n)/\sqrt{T}\to0$, and
$h_{n}\to0$. Then the joint constrained estimator, equivalently the profiled
estimator in \eqref{eq:profile}, is weakly consistent. Moreover, every measurable selection
\(\widehat\alpha_i(\bm\beta)\in\mathcal R_{ni}(\bm\beta)\) satisfies
\(
\max_{1\le i\le n}\sup_{\bm\beta\in\mathcal B}
\left|\widehat\alpha_i(\bm\beta)-\alpha_i(\bm\beta)\right|
\stackrel{p}{\longrightarrow}0.
\)
\end{proposition}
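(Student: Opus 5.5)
The proof runs in three stages: uniform convergence of the individual smoothed moments, then uniform consistency of every root selection, then consistency of $\widehat{\bm\beta}$ through the profiled criterion.

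\emph{Stage 1: uniform convergence of the individual moments.} Define the population smoothed individual moment $A_i^{h}(\alpha,\bm\beta)=\E[G_{h_n}(\alpha+\bm x_{it}'\bm\beta-y_{it})]-\tau$ and its unsmoothed counterpart $A_i(\alpha,\bm\beta)=\E[1\{y_{it}\le\alpha+\bm x_{it}'\bm\beta\}]-\tau$. We will show
\[
\max_{i\le n}\sup_{(\alpha,\bm\beta)\in\mathcal A\times\mathcal B}\bigl|\widehat A_i(\alpha,\bm\beta)-A_i(\alpha,\bm\beta)\bigr|\stackrel{p}{\to}0 ,
\]
and the analogous statement for the vector moments $\widehat H_{ni}$, with $\bm z_{it}$ included. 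The deterministic part $|A_i^{h}-A_i|$ is $O(h_n)$ uniformly. To see this, write $\E[G_{h_n}(\cdot)]$ as a convolution of the conditional CDF of $u_{it}$ with $K$, and use the bounded densities in (A4) and (A6), the bounded support of $K$ in (A8), and the boundedness of $\bm x$ in (A2).

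For the stochastic part, $G$ is bounded and $G_{h_n}$ is monotone in the index wherever $K\ge0$. In general $G$ has bounded variation, since $K$ is bounded with bounded support, so write $G=G^{+}-G^{-}$ as a difference of two bounded monotone functions. This makes the classes $\{G_{h_n}(\alpha+\bm x'\bm\beta-y)\}$ VC-type, uniformly in $h_n$. We then apply a Bernstein-type exponential inequality for geometrically $\beta$-mixing sequences, via Berbee coupling into independent blocks of length $\asymp\log T$, on a finite grid of $(\alpha,\bm\beta)$.

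To pass from the grid to all parameter values, avoid Lipschitz continuity in $(\alpha,\bm\beta)$, whose constant is of order $h_n^{-1}$. Instead use the bracketing afforded by the monotone pieces. The brackets' expected width is controlled by the bounded density $f_i$ at mesh size $\varepsilon$, not by $h_n$.

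The union bound over $i\le n$ and over the grid then gives a tail probability of order
\[
n\,\varepsilon^{-(d_\beta+1)}\exp\!\bigl(-c\,T\eta^2/\log T\bigr).
\]
This tends to zero by the rate condition on $(\log n)/\sqrt T$. I expect this step to be the main technical obstacle: the bracketing must hold uniformly in $h_n$ with only a smoothness-free (VC/monotone) argument, and the mixing blocks must be handled so that the $\log n$ factor is absorbed.

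\emph{Stage 2: uniform consistency of every root selection.} This also yields Lemma~\ref{lem:profile_existence}. Take any $\widehat\alpha_i(\bm\beta)\in\mathcal R_{ni}(\bm\beta)$. Then
\[
|A_i(\widehat\alpha_i(\bm\beta),\bm\beta)|=|A_i-\widehat A_i|\le\sup|\widehat A_i-A_i| ,
\]
and this is $o_p(1)$ uniformly by Stage 1. Suppose $|\widehat\alpha_i(\bm\beta)-\alpha_i(\bm\beta)|\ge\varepsilon$ for some $i$ and $\bm\beta$. The separation part of (A5) then forces $|A_i|\ge c_{\varepsilon,\alpha}$, an event whose probability tends to zero. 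This gives the second claim of the proposition for every measurable selection.

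Existence follows in the same way. Since $\alpha_i(\bm\beta)$ lies in a compact subset of $\operatorname{int}(\mathcal A)$, the points $\alpha_i(\bm\beta)\pm\varepsilon$ lie in $\mathcal A$. At those points, $A_i$ is bounded away from zero with opposite signs, by (A5) and monotonicity of the CDF. Uniform convergence transfers these signs to $\widehat A_i$, and continuity in $\alpha$ gives a root by the intermediate value theorem.

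\emph{Stage 3: consistency of $\widehat{\bm\beta}$.} Write
\[
\widehat m_n(\bm\beta)-m_n(\bm\beta)=n^{-1}\sum_i\bigl[\widehat H_{ni}(\widehat\alpha_i(\bm\beta),\bm\beta)-H_{ni}^u(\widehat\alpha_i(\bm\beta),\bm\beta)\bigr]+n^{-1}\sum_i\bigl[H^u_{ni}(\widehat\alpha_i(\bm\beta),\bm\beta)-H^u_{ni}(\alpha_i(\bm\beta),\bm\beta)\bigr].
\]
The first term is $o_p(1)$ uniformly over $\bm\beta$ by Stage 1. The second term is bounded by $M_2\sup_{i,u}f_i\cdot\max_{i,\bm\beta}|\widehat\alpha_i-\alpha_i|$, which is $o_p(1)$ by Stage 2. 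Note that this bound applies to whatever selection $\widehat{\bm\alpha}(\bm\beta)$ is chosen in $\mathcal S_n(\bm\beta)$.

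Hence $\sup_{\bm\beta}|\widehat m_n'\widehat W\widehat m_n-m_n'Wm_n|\stackrel{p}{\to}0$, using $\widehat W\to W$ from (A7) and the boundedness of the moments. Moreover $\Q_n(\bm\beta_0)=0$, while $\inf_{\|\bm\beta-\bm\beta_0\|\ge\varepsilon}\Q_n\ge\lambda_{\min}(W)c_{\varepsilon,\beta}^2$ by (A5). The standard argmin argument (for instance, Newey--McFadden Theorem 2.1, applied to approximate minimizers) then gives $\widehat{\bm\beta}\stackrel{p}{\to}\bm\beta_0$.

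Finally, for the fixed-effect estimates,
\[
\max_i|\widehat\alpha_i(\widehat{\bm\beta})-\alpha_{i0}|\le\max_i\sup_{\bm\beta}|\widehat\alpha_i(\bm\beta)-\alpha_i(\bm\beta)|+\max_i|\alpha_i(\widehat{\bm\beta})-\alpha_i(\bm\beta_0)| .
\]
The last term vanishes because $\alpha_i(\cdot)$ is equicontinuous. This follows from the implicit function theorem: $\partial_\alpha A_i=f_i(0)$ is bounded below near the truth by (A4)/(A6), and $\partial_{\bm\beta}A_i$ is bounded via (A2).
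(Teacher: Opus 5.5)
Your architecture is the same as the paper's:

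\begin{itemize}
\item uniform convergence of $\widehat A_i$ and $\widehat H_{ni}$ to the unsmoothed population moments, as a stochastic term plus an $O(h_n)$ smoothing bias;
\item the (A5)-separation argument applied to every root selection, together with the intermediate-value existence argument;
\item the Lipschitz-in-$\alpha$ transfer to $\widehat m_n$, followed by the well-separated argmin step;
\item the triangle inequality for $\widehat\alpha_i(\widehat{\bm\beta})$.
\end{itemize}

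Two of your devices differ from the paper and both work. Your monotone bracketing via $G=G^+-G^-$ gives brackets of $L^1$-width $O(\varepsilon)$ uniformly in $h_n$. It is a valid, more self-contained substitute for the paper's appeal to the VC-type maximal inequalities of Galvao and Kato. Your implicit-function argument for equicontinuity of $\alpha_i(\cdot)$ also works, provided you use uniqueness of the root from (A5) to identify the local solution with $\alpha_i(\bm\beta)$. The paper's route is simpler: Lipschitz dependence of the population moment on $\bm\beta$ plus the (A5) separation, with no density lower bound needed.

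The step that does not go through as written is the mixing bookkeeping in Stage 1. Berbee coupling with blocks of length $q$ costs an additive probability of order $(T/q)\sup_i\beta_i(q)$ per individual. Your union bound over $i\le n$ turns this into $n(T/q)Ba^{q}$, and the tail bound you display omits that term.

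With $q=C\log T$, the omitted term is of order $nT^{1-C\log(1/a)}/\log T$. This does not vanish when $n$ grows faster than every power of $T$, which $(\log n)/\sqrt T\to0$ allows (e.g. $\log n=T^{2/5}$). Meanwhile, with $q\asymp\log T$ your exponential term only needs $\log n=o(T/\log T)$. So your remark that it vanishes "by the rate condition" misplaces the role of $(\log n)/\sqrt T\to0$. That hypothesis is needed to balance the coupling error against the exponential term.

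The repair is to let the block length grow with $n$. There are two options:

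\begin{itemize}
\item Take $q=\lfloor\sqrt T\rfloor$, as the paper does (Propositions C.1 and C.2 of Galvao and Kato, with $s=2\log n$). The coupling term is then at most $Cn\sqrt T\,a^{\sqrt T}=o(1)$. In your grid formulation the exponential term becomes $n\varepsilon^{-(d_\beta+1)}\exp(-c_\eta\sqrt T)=o(1)$. Both vanish precisely because $\log n=o(\sqrt T)$.
\item Take $q=C\log(nT)$ with $C\log(1/a)>1$. This kills the coupling term, and $n\varepsilon^{-(d_\beta+1)}\exp\{-c_\eta T/\log(nT)\}\to0$ because $\log n\cdot\log(nT)=o(T)$ under $(\log n)/\sqrt T\to0$.
\end{itemize}

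With either change, the rest of your proof goes through unchanged.
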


The following result establishes the asymptotic normality of the estimator. 

\begin{theorem}[Asymptotic normality] \label{thm1} Suppose that the moment
restrictions \eqref{eqn:alpha-moments}-\eqref{eqn:moments} and Assumptions
(A1)-(A10) hold, and $n/T\to0$. Then
\begin{align}
\sqrt{nT}(\widehat{\bm{\beta}}-\bm{\beta}_{0})=-(D'WD)^{-1}D'W\frac{1}{\sqrt{nT}}\sum_{i=1}^{n}\sum_{t=1}^{T}\psi_{it}+o_{p}(1).\label{eq: expansion betahat}
\end{align}
Consequently, 
\(
\sqrt{nT}(\widehat{\bm{\beta}}-\bm{\beta}_{0})\stackrel{d}{\to}N(0,\Sigma_{\beta}),
\)
where $\Sigma_{\beta}=(D'WD)^{-1}D'W\Omega WD(D'WD)^{-1}.$ If $W=\Omega^{-1}$,
then the asymptotic variance simplifies to $\Sigma_{\beta}=(D'\Omega^{-1}D)^{-1}.$
\end{theorem}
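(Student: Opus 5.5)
The argument is a standard profiled-GMM linearization. By Proposition \ref{prop1}, $\widehat{\bm\beta}\stackrel{p}{\to}\bm\beta_0$. By Lemma A.4, with probability approaching one the profile $\widehat\alpha_i(\bm\beta)$ is unique and $C^1$ on a shrinking neighborhood $\mathcal N_n$ of $\bm\beta_0$ that contains $\widehat{\bm\beta}$. On $\mathcal N_n$ the profiled moment $\widehat m_n$ is therefore differentiable. Implicit differentiation of $\widehat A_i(\widehat\alpha_i(\bm\beta),\bm\beta)=0$ gives
\[
\partial_{\bm\beta}\widehat\alpha_i(\bm\beta)=-\Bigl(\sum_t K_{h_n}(\cdot)\bm x_{it}\Bigr)\Big/\sum_t K_{h_n}(\cdot).
\]
Hence
\[
\partial_{\bm\beta}\widehat m_n(\bm\beta)=\frac1{nT}\sum_{i,t}\bm z_{it}K_{h_n}(\cdot)\bigl(\bm x_{it}+\partial_{\bm\beta}\widehat\alpha_i(\bm\beta)\bigr)'/h_n .
\]
The first-order condition $\partial\widehat m_n(\widehat{\bm\beta})'\widehat W\widehat m_n(\widehat{\bm\beta})=0$ holds, and a mean-value expansion gives
\[
\widehat m_n(\widehat{\bm\beta})=\widehat m_n(\bm\beta_0)+\partial\widehat m_n(\bar{\bm\beta})(\widehat{\bm\beta}-\bm\beta_0).
\]
It then suffices to prove two things:
\begin{itemize}
\item[(i)] $\sup_{\bm\beta\in\mathcal N_n}\|\partial\widehat m_n(\bm\beta)-D\|=o_p(1)$;
\item[(ii)] $\sqrt{nT}\,\widehat m_n(\bm\beta_0)=(nT)^{-1/2}\sum_{i,t}\psi_{it}+o_p(1)$.
\end{itemize}
Combining them with $\widehat W\stackrel{p}{\to}W$ and nonsingularity of $D'WD$ (A7) yields \eqref{eq: expansion betahat}.

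For (i), I would establish uniform-in-$i$ convergence of the kernel averages $T^{-1}\sum_t K_{h_n}(\alpha+\bm x_{it}'\bm\beta-y_{it})h_n^{-1}\bm q_{it}$. Their expectations converge to $\E[f_i(0\mid\bm q_{it})\bm q_{it}]$ near $(\alpha_{i0},\bm\beta_0)$. The deviations are handled by a Bernstein inequality for $\beta$-mixing sequences (A1) combined with a union bound over $i\le n$, a finite grid in the parameter, and the Lipschitz property of $K$ (A8). The variance bound for kernel terms uses the anti-concentration condition (A6) to control the covariances at lags. This gives uniform error $O_p(\sqrt{\log(nT)/(Th_n)})+o(1)=o_p(1)$, since $Th_n\ge T^{2/3}$ and $\log n=o(\sqrt T)$. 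Using also $\max_i|\widehat\alpha_i(\bar{\bm\beta})-\alpha_{i0}|=o_p(1)$ from Proposition \ref{prop1}, the ratio defining $\partial\widehat\alpha_i$ converges uniformly to $-\bm\gamma_i$ by (A4). Then $\partial\widehat m_n\to n^{-1}\sum_i\E[f_i(0\mid\bm q)\bm z(\bm x-\bm\gamma_i)']=D_n\to D$.

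For (ii), which I expect to be the main obstacle, write $\widehat\alpha_i=\widehat\alpha_i(\bm\beta_0)$ and expand $\widehat m_n(\bm\beta_0)$ in $\widehat\alpha_i-\alpha_{i0}$ to second order. The linear term is $\frac1{nT}\sum_{i,t}\bm z_{it}K_{h_n}(-u_{it})h_n^{-1}(\widehat\alpha_i-\alpha_{i0})$. From the scalar equation, $\widehat\alpha_i-\alpha_{i0}=-s_iT^{-1}\sum_t\omega_{it}+R_i$, and the leading part contributes $-\frac1{nT}\sum_{i,t}\bm\delta_i\omega_{it}$. Together with $\frac1{nT}\sum\bm z_{it}\omega_{it}$ this gives $\frac1{nT}\sum_{i,t}(\bm z_{it}-\bm\delta_i)\omega_{it}$. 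Replacing $\omega_{it}$ by $1\{u_{it}\le0\}-\tau$ then has three costs:
\begin{itemize}
\item a smoothing bias of order $h_n^r$, which is $o((nT)^{-1/2})$ because $c>1/r$ and $n/T\to0$;
\item a variance difference of order $h_n$, negligible by (A6);
\item a centered remainder.
\end{itemize}
The remaining pieces are the product of centered kernel averages with $\widehat\alpha_i-\alpha_{i0}$, the quadratic term, and the $R_i$ terms. Each has expectation of order $T^{-1}$ per individual (controlled by $\bm\Lambda_{ni}$ in (A10)) and stochastic order $O_p(T^{-1}\cdot\text{polylog})$ after the uniform rates $\max_i|\widehat\alpha_i-\alpha_{i0}|=O_p(\sqrt{\log n/T})$ and kernel-average rates $O_p(\sqrt{\log n/(Th_n)})$. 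Averaged over $i$ and multiplied by $\sqrt{nT}$, they are $O_p(\sqrt{n/T})\cdot$(slowly varying factors). The condition $c<1/3$ together with $n/T\to0$ should make them $o_p(1)$. I would bound their means and variances separately, using independence across $i$ so that the variance of the average is $n^{-1}$ times the per-unit variance.

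Finally, the CLT applies to $(nT)^{-1/2}\sum_{i,t}\psi_{it}$, which has mean zero by \eqref{eqn:alpha-moments}--\eqref{eqn:moments}. The array is independent across $i$, and each block $T^{-1/2}\sum_t\psi_{it}$ is bounded in $L^{2+\epsilon}$ uniformly by mixing moment inequalities, so Lyapunov's condition holds with variance $\Omega_n\to\Omega$ (A10). This gives $N(0,\Sigma_\beta)$. When $W=\Omega^{-1}$, $\Sigma_\beta$ simplifies to $(D'\Omega^{-1}D)^{-1}$ by direct algebra.
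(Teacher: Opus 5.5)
Your overall architecture is the same as the paper's: consistency, local uniqueness and differentiability of the profile from Lemma A.4, implicit differentiation, the first-order condition with a mean-value expansion, the two claims (i) and (ii), and a Lyapunov CLT. Part (i) and the CLT are fine. The gap is in the rate bookkeeping for (ii), which is exactly the step you flag as the main obstacle.

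The theorem assumes only $n/T\to0$, with no rate. So after scaling by $\sqrt{nT}$, every second-order term must be $O_p(T^{-1})$ with \emph{no} logarithmic or bandwidth inflation. A bound of the form $\sqrt{n/T}\times(\text{slowly varying factor})$ is not $o_p(1)$: take $n=T/\log^2T$. The condition $c<1/3$ does nothing to repair this.

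Your uniform rates also do not deliver $T^{-1}\cdot$polylog. The product of $\max_i|\widehat\alpha_i-\alpha_{i0}|=O_p(\sqrt{\log(nT)/T})$ and the kernel-average rate $O_p(\sqrt{\log(nT)/(Th_n)})$ is $\log(nT)/(T\sqrt{h_n})$. The cross term would then only be bounded by $\sqrt{n/T}\,h_n^{-1/2}\log(nT)$, which can diverge. Likewise, bounding the quadratic term via $\max_i\widehat\Delta_i^2$ gives only $\sqrt{n/T}\log(nT)$.

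What is missing are the paper's two log-free devices.

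\textbf{Averaged bounds.} The first is $n^{-1}\sum_i\widehat\Delta_i^2=O_p(T^{-1})$ and $n^{-1}\sum_i|\widehat\Delta_i|=O_p(T^{-1/2})$ (Lemma A.5). These come from averaging the profile equation over $i$ and absorbing the $o_p$ remainders, not from maxima. They dispose of the quadratic and population-remainder terms.

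\textbf{Exact mean and variance of the cross term.} First substitute $\widehat\Delta_i=-s_i\widehat A_i(\alpha_{i0},\bm\beta_0)+r_{\alpha i}$. Then merge two pieces before taking expectations: the $\bm z_{it}K_{h_n}(-u_{it})$ piece from expanding $\widehat m_n$, and the $\bm\delta_iK_{h_n}(-u_{it})$ piece that enters through $R_{Ai}$ in the profile equation. The leading term becomes $-n^{-1}\sum_i s_i\widehat A_i\,T^{-1}\sum_t(\bm\kappa_{it}-\E\bm\kappa_{it})$.
\begin{itemize}
\item Its mean is exactly $-(nT)^{-1}\sum_i s_i\bm\Lambda_{ni}=O(T^{-1})$. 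Note that (A10) bounds only this residualized combination; generic mixing bounds give each piece's long-run covariance only $O(\log(1/h_n))$. So treating them as separate terms, each ``controlled by $\bm\Lambda_{ni}$'', does not go through.
\item Its variance is $O((nT^2h_n)^{-1})$. This needs fourth-moment blocking bounds on $\widehat A_i(\alpha_{i0},\bm\beta_0)$ and on $T^{-1}\sum_t(\bm\kappa_{it}-\E\bm\kappa_{it})$, combined with Cauchy--Schwarz and independence across $i$.
\end{itemize}
After scaling, the cross term is $O(\sqrt{n/T})+O_p((Th_n)^{-1/2})=o_p(1)$.

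\textbf{Where logs are harmless.} Only the $r_{\alpha i}$ contributions have polynomial slack, for example $\sqrt{n\log(nT)}/(Th_n)\le T^{c-1/2}\sqrt{\log T}$ using $n\le T$. Only there can log factors be tolerated and the restriction $c<1/3$ actually used.
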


\begin{remark} The condition $n/T\to0$ is imposed to obtain a centered
limiting distribution without explicit bias correction. The
incidental-parameter bias from estimating the individual effects is of
order \(T^{-1}\), and its contribution after \(\sqrt{nT}\)
normalization is therefore of order \(\sqrt{n/T}\). If $n/T$
converges to a positive constant, this term generally remains in the
limit and would require explicit bias correction, which is outside the
scope of the present analysis. \end{remark}

To make the two bias sources and their rates transparent, it is useful to display the leading terms associated with each source. The following discussion is heuristic and is not used as a separate second-order result in the proof of Theorem \ref{thm1}. Define
\[
\Gamma_{ni}
=
\sum_{|j|\leq T-1}
\left(1-\frac{|j|}{T}\right)
\E\!\left[
\left\{\omega_{i1}-\E\omega_{i1}\right\}\left\{\omega_{i,1+j}-\E\omega_{i,1+j}\right\}
\right].
\]
Further, let
\(
\bm\eta_i
=
\E\!\left[
f_i^{(1)}(0\mid\bm q_{it})
(\bm z_{it}-\bm\delta_i)
\right],
\)
where \(f_i^{(1)}(0\mid\bm q)\) denotes the first derivative of
\(f_i(u\mid\bm q)\) with respect to \(u\), evaluated at zero. Define
the leading incidental-parameter bias by
\(
\mathcal B_{IP,n}
=
(D'WD)^{-1}D'W\sqrt{\frac nT}\,
\frac1n\sum_{i=1}^{n}
\left\{
s_i\bm\Lambda_{ni}
-\frac12s_i^2\Gamma_{ni}\bm\eta_i
\right\},
\)
and the smoothing bias by
\[
\mathcal B_{S,n}
=
-(D'WD)^{-1}D'W\sqrt{nT}\,
\frac1n\sum_{i=1}^{n}
\E\!\left[
(\bm z_{it}-\bm\delta_i)
\{G_{h_n}(-u_{it})-1(u_{it}\leq0)\}
\right].
\]
At the level of leading second-order terms, the decomposition can be written as
\[
\sqrt{nT}(\widehat{\bm\beta}-\bm\beta_0)
=
-(D'WD)^{-1}D'W\frac1{\sqrt{nT}}
\sum_{i=1}^{n}\sum_{t=1}^{T}\psi_{it}
+\mathcal B_{IP,n}
+\mathcal B_{S,n}+o_P(1).
\]
The first component of \(\mathcal B_{IP,n}\), involving
\(\bm\Lambda_{ni}\), is generated by the covariance between the
individual quantile score and the derivative of the profiled moment.
The second component, involving \(\Gamma_{ni}\bm\eta_i\), is the
curvature bias generated by the second-order effect of estimating the
individual fixed-effects. The corresponding rate bounds under Assumptions (A1)-(A10) are
\(
\|\mathcal B_{IP,n}\|
=
O\!\left(\sqrt{\frac nT}\right)\) and \(
\|\mathcal B_{S,n}\|
=
O\!\left(\sqrt{nT}\,h_n^r\right).
\)
These rates explain the restrictions used in Theorem \ref{thm1}:
\(n/T\to0\) removes the incidental-parameter contribution, while
Assumption (A9), together with \(n/T\to0\), implies
\(\sqrt{nT}h_n^r\to0\) and removes the smoothing contribution.

\section{Practical inference}\label{sec:inference}

In this section, we develop feasible inference based on the limiting
representation in Theorem \ref{thm1}. Estimating the individual effects
changes the influence function: the original instrument \(\bm z_{it}\)
is replaced by the residualized instrument
\(\bm z_{it}-\bm\delta_i\), which removes the first-order effect of
estimating \(\alpha_i\). This distinguishes the present covariance
formula from cross-sectional smoothed IVQR inference in
\citet{KaplanSun17}, which does not incorporate estimated panel fixed
effects or within-individual dependence. Relative to the smoothed panel
quantile results of \citet{GalvaoKato16}, our procedure additionally
allows endogenous regressors and over-identified GMM. Let
\(
\widehat{u}_{it}=y_{it}-\widehat{\alpha}_{i}-\bm{x}_{it}'\widehat{\bm{\beta}}\) and \(K_{h_{n}}(u)=h_{n}^{-1}K(u/h_{n}).
\)
Estimate the density and the two projection terms by 
\[
\widehat{\bm{\gamma}}_{i}=\widehat{s}_{i}\frac{1}{T}\sum_{t=1}^{T}K_{h_{n}}(\widehat{u}_{it})\bm{x}_{it},\qquad\widehat{\bm{\delta}}_{i}=\widehat{s}_{i}\frac{1}{T}\sum_{t=1}^{T}K_{h_{n}}(\widehat{u}_{it})\bm{z}_{it}.
\]
where $\widehat{f}_{i}=\frac{1}{T}\sum_{t=1}^{T}K_{h_{n}}(\widehat{u}_{it}),\ \widehat{s}_{i}=\widehat{f}_{i}^{-1}.$
The Jacobian matrix $D$ is estimated by $\widehat{D}=\frac{1}{nT}\sum_{i=1}^{n}\sum_{t=1}^{T}K_{h_{n}}(\widehat{u}_{it})\bm{z}_{it}(\bm{x}_{it}-\widehat{\bm{\gamma}}_{i})'.$
Define $\widehat{\psi}_{it}=(\bm{z}_{it}-\widehat{\bm{\delta}}_{i})\{1(\widehat{u}_{it}\leq0)-\tau\}.$
Let $\ell_n$ be a truncation lag such that $\ell_n\to\infty$ and
$\ell_n^{2}\log(nT)/(Th_{n}^{2})\to0$.
With the Bartlett kernel $k(a)=(1-|a|)_{+}$, define 
\[
\widehat{\Omega}=\frac{1}{n}\sum_{i=1}^{n}\widehat{\Omega}_{i},\qquad\widehat{\Omega}_{i}=\frac{1}{T}\sum_{t=1}^{T}\widehat{\psi}_{it}\widehat{\psi}_{it}'+\sum_{j=1}^{\ell_n}k(j/(\ell_n+1))\frac{1}{T}\sum_{t=1}^{T-j}\left(\widehat{\psi}_{it}\widehat{\psi}_{i,t+j}'+\widehat{\psi}_{i,t+j}\widehat{\psi}_{it}'\right).
\]
For a general weighting matrix $\widehat{W}$, the asymptotic covariance
matrix is estimated by 
\(
\widehat{\Sigma}_{\beta}=(\widehat{D}'\widehat{W}\widehat{D})^{-1}\widehat{D}'\widehat{W}\widehat{\Omega}\widehat{W}\widehat{D}(\widehat{D}'\widehat{W}\widehat{D})^{-1}.
\)
For the efficient two-step estimator, $\widehat W=\bar\Omega^{-1}$ in the
point-estimation criterion. Since both $\bar\Omega$ and the final
$\widehat\Omega$ consistently estimate $\Omega$, a convenient efficient
covariance estimator is
$\widehat{\Sigma}_{\beta}=(\widehat{D}'\widehat{\Omega}^{-1}\widehat{D})^{-1}$.

\begin{theorem}[Feasible covariance estimation] \label{thm2} Suppose that the conditions of Theorem
\ref{thm1} hold and that $\ell_{n}\to\infty$ with
$\ell_{n}^{2}\log(nT)/(Th_{n}^{2})\to0$.
Then, 
$\widehat{\Sigma}_{\beta}\stackrel{p}{\to}\Sigma_{\beta}$. 
Consequently,
for any fixed nonzero vector $a\in\mathbb{R}^{d_{\beta}}$, 
\(
\frac{\sqrt{nT}\,a'(\widehat{\bm{\beta}}-\bm{\beta}_{0})}{(a'\widehat{\Sigma}_{\beta}a)^{1/2}}\stackrel{d}{\to}N(0,1).
\)
\end{theorem}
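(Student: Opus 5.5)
The plan is to show $\widehat D\stackrel{p}{\to}D$ and $\widehat\Omega\stackrel{p}{\to}\Omega$. Together with $\widehat W\stackrel{p}{\to}W$ from (A7), the continuous mapping theorem then gives $\widehat\Sigma_\beta\stackrel{p}{\to}\Sigma_\beta$; nonsingularity of $D'WD$ makes the inverse a continuous map at the limit. The studentized statistic then follows from Theorem \ref{thm1} and Slutsky's lemma, because $a'\Sigma_\beta a>0$ whenever $\Omega$ is positive definite and $D$ has full column rank. For the efficient version, both $\bar\Omega$ and $\widehat\Omega$ converge to $\Omega$, so $(\widehat D'\widehat\Omega^{-1}\widehat D)^{-1}\stackrel{p}{\to}(D'\Omega^{-1}D)^{-1}$.

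The first input is a uniform rate for the estimated residuals. From Proposition \ref{prop1} and the local profile analysis (Lemma A.4), I would take $\widehat\beta-\beta_0=O_p((nT)^{-1/2})$ and
\[
\max_{i\le n}|\widehat\alpha_i-\alpha_{i0}|=O_p\bigl(\sqrt{\log n/T}\,\bigr).
\]
The second rate comes from a Bernstein-type inequality for geometrically $\beta$-mixing sequences applied to $\widehat A_i$, with a union bound over $i\le n$. Call this common rate $r_n:=\max_{i,t}|\widehat u_{it}-u_{it}|$. Since (A2) bounds $\bm x_{it}$, we get $r_n=O_p(\sqrt{\log n/T}+(nT)^{-1/2})$.

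\emph{Density and Jacobian terms.} Write $\widehat f_i-f_i(0)=[\widehat f_i-\tilde f_i]+[\tilde f_i-\E\tilde f_i]+[\E\tilde f_i-f_i(0)]$, where $\tilde f_i=T^{-1}\sum_t K_{h_n}(u_{it})$. For the first bracket, the Lipschitz property of $K$ from (A8) gives $|K_{h_n}(\widehat u_{it})-K_{h_n}(u_{it})|\le Ch_n^{-2}r_n\,1\{|u_{it}|\le Ch_n\}$ once $r_n\le h_n$. The average of the indicator is $O_p(h_n)$ uniformly in $i$, so this bracket is $O_p(r_n/h_n)$, and $r_n/h_n\to0$ because $h_n\asymp T^{-c}$ with $c<1/3$. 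The stochastic bracket is handled by a maximal inequality for mixing kernel averages: its variance is $O(1/(Th_n))$, using (A6) for the cross-covariances. Combined with a union bound, this gives $O_p(\sqrt{\log n/(Th_n)})=o_p(1)$. The bias bracket is $O(h_n^r)$. Hence $\max_i|\widehat f_i-f_i(0)|=o_p(1)$, and since $f_i(0)$ is bounded away from zero by (A4), $\max_i|\widehat s_i-s_i|=o_p(1)$. The same decomposition gives $\max_i\|\widehat{\bm\gamma}_i-\bm\gamma_i\|=o_p(1)$ and $\max_i\|\widehat{\bm\delta}_i-\bm\delta_i\|=o_p(1)$. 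For $\widehat D$, expand $\bm z_{it}(\bm x_{it}-\widehat{\bm\gamma}_i)'$. The replacement error is bounded by the $\max_i$ errors above times bounded averages, and $\frac1{nT}\sum\sum K_{h_n}(u_{it})\bm z_{it}\bm x_{it}'$ converges to its mean by the WLLN (variance $O(1/(nTh_n))$). This yields $\widehat D\stackrel{p}{\to}\lim D_n=D$.

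\emph{Long-run variance; main obstacle.} Define the infeasible estimator $\tilde\Omega$ built from the true $\psi_{it}$. The standard Bartlett HAC argument applies under geometric mixing and bounded $\psi$: the truncation bias is $O(\sum_j\min(1,j/\ell_n)a^{j})\to0$, and the variance is $O(\ell_n^2/T)$ per $i$ and averages over independent $i$, so $\tilde\Omega-\Omega_n\stackrel{p}{\to}0$. The hard step is $\widehat\Omega-\tilde\Omega$. The residualized instrument is harmless, since $\widehat{\bm\delta}_i-\bm\delta_i$ enters linearly and is uniformly $o_p(1)$; it is multiplied by HAC sums of $1(u_{it}\le0)-\tau$, which are $O_p(1)$ on average over $i$. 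The difficulty is that the indicator $1(\widehat u_{it}\le0)$ is discontinuous. I would bound
\[
|1(\widehat u_{it}\le0)-1(u_{it}\le0)|\le 1\{|u_{it}|\le r_n\}.
\]
Each of the $O(\ell_n)$ lag terms then changes by at most a constant times $T^{-1}\sum_t1\{|u_{it}|\le r_n\}$. Uniformly in $i$, this average is $O_p(r_n+\sqrt{\log(nT)/T})$ by the density bound in (A4)/(A6) and a mixing maximal inequality over a VC class of shifted indicators, so the total perturbation is $O_p(\ell_n r_n+\ell_n\sqrt{\log(nT)/T})$. The bandwidth condition $\ell_n^2\log(nT)/(Th_n^2)\to0$ makes this $o_p(1)$. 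Because $\widehat u_{it}$ depends on estimated parameters, this bound must hold uniformly over the shrinking neighbourhood $\Theta_i(\varepsilon_0)$, which is where I expect the care in the chaining argument to lie. The pairwise small-ball bound in (A6) controls products of indicators across lags. Finally, $\Omega_n\to\Omega$ by (A10), which completes the proof.
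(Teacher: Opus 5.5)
Your proposal is correct and follows the paper's architecture. You prove $\widehat D\stackrel{p}{\to}D$ through uniform consistency of $\widehat f_i$, $\widehat{\bm\gamma}_i$, $\widehat{\bm\delta}_i$. You then split $\widehat\Omega-\Omega_n$ into three pieces: the plug-in error $\widehat\Omega-\widetilde\Omega$, the sampling error of the infeasible HAC estimator (variance $O(\ell_n^2/(nT))$ by independence across $i$), and the Bartlett truncation bias. The indicator change is controlled through $1\{|u_{it}|\le|\widehat u_{it}-u_{it}|\}$ exactly as in the paper.

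The differences are local. For the kernel averages you pass from $K_{h_n}(\widehat u_{it})$ to $K_{h_n}(u_{it})$ using the Lipschitz property of $K$, at cost $O_p(r_n/h_n)$. The paper instead uses a kernel ULLN over all of $\mathcal A\times\mathcal B$, plus continuity of the population maps $M_{zx,i},M_{z,i},M_{1,i}$ on a deterministic shrinking neighbourhood. Both routes give $\max_i\|\widehat{\bm\delta}_i-\bm\delta_i\|=O_p(h_n^{-1}\sqrt{\log(nT)/T})+O(h_n^r)$.

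The more substantive difference is in replacing $\widehat{\bm\delta}_i$ by $\bm\delta_i$. The paper bounds the cross-HAC sums crudely by $O_p(\ell_n)$, so it needs $\ell_n\max_i\|\widehat{\bm\delta}_i-\bm\delta_i\|=o_p(1)$. This is precisely where the factor $h_n^{-2}$ in $\ell_n^2\log(nT)/(Th_n^2)\to0$ enters. You observe instead that these sums estimate absolutely summable long-run cross-covariances. They have mean $O(1)$ and standard deviation $O(\ell_n/\sqrt T)$ by the same four-index calculation used for $\widetilde\Omega$, so their average over $i$ is $O_p(1)$. Your step therefore needs only $\max_i\|\widehat{\bm\delta}_i-\bm\delta_i\|=o_p(1)$, and the whole argument goes through under the weaker condition $\ell_n^2\log(nT)/T\to0$.

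Two small points. First, the uniformity over $\Theta_i(\varepsilon_0)$ that you expect to need in the indicator step is unnecessary. The bound $1\{|u_{it}|\le r_n\}$ is monotone in the scalar radius, so it can be evaluated at a deterministic $\rho_n\ge r_n$ that holds with probability approaching one. Second, for the efficient version you should justify $\bar\Omega\stackrel{p}{\to}\Omega$ rather than assert it. The paper does this by applying Theorem 4.1 to the preliminary estimator with $\widehat W=W_0$, so its residuals satisfy the same bound on $r_n$, and then rerunning the argument.
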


For testing linear restrictions $H_{0}: R\bm{\beta}_{0}=r$, where $R$ is
a $q\times d_{\beta}$ matrix with full row rank, the Wald statistic
is 
\(
\mathcal{W}_{n}=nT(R\widehat{\bm{\beta}}-r)'(R\widehat{\Sigma}_{\beta}R')^{-1}(R\widehat{\bm{\beta}}-r).
\)
Under the null hypothesis, $\mathcal{W}_{n}\stackrel{d}{\to}\chi_{q}^{2}$. The result is immediate from Theorem \ref{thm2}.

\begin{remark} The covariance estimator above uses the unsmoothed
indicator $1(\widehat{u}_{it}\leq0)$ in the estimated influence function.
Replacing it by the smoothed version $G_{h_{n}}(-\widehat{u}_{it})$
gives the same first-order limit under the bandwidth conditions above.
\end{remark}

\begin{remark} The truncation lag $\ell_n$ controls the estimation
of the long-run variance within each individual time series. In applications with small $T$,
 a short truncation lag such as $\ell_n=1$ may be used, as discussed in \citet{HahnKuersteiner11}. For
larger $T$, standard data-dependent choices, e.g.,  \citet{Andrews91}'s procedure, may be used as informal
guidance, but the optimal lag choice for smoothed panel IV quantile
GMM remains a separate problem. \end{remark}

\section{Simulation study}
\label{sec:sim}

This section investigates the finite-sample performance of the proposed
smoothed GMM estimator for panel quantile models with endogenous regressors
and individual fixed-effects. The design evaluates both point estimation
and feasible inference. 

\subsection{Data-generating process}

We generate $y_{it}=\alpha_i+x_{it}+u_{it}$ with $\alpha_i\sim U(-1,1)$ and $x_{it}=z_{1,it}+z_{2,it}+v_{it}$. The two independent instrument processes have $U(-1,1)$ marginals and are obtained by transforming stationary Gaussian AR(1) processes with autoregressive coefficient $0.50$. The outcome disturbance $u_{it}$ is a stationary Gaussian AR(1) process with coefficient $0.30$, shifted so that $\Q_\tau(u_{it})=0$. The bounded disturbance $v_{it}$ is a transformation of a Gaussian AR(1) process with coefficient $0.30$, sharing innovations with $u_{it}$ when $\sigma_{uv}\ne0$. The detailed construction and initialization appear in Supplementary Appendix B.

We examine $\sigma_{uv}\in\{0,0.50\}$, $\tau\in\{0.25,0.50\}$, and all nine combinations $(n,T)\in\{20,50,100\}^2$, with 1000 replications per design. The balanced and larger-$n$ designs also assess finite-sample performance outside the asymptotic regime $n/T\to0$.

We compare two-step SP-GMM with conventional fixed-effects QR (FE-QR), smoothed fixed-effects QR (FE-SQR), and cross-sectional IVQR based on unit-level time averages (CS-IVQR). The fixed-effects benchmarks do not correct for endogeneity; CS-IVQR uses the two averaged instruments. SP-GMM uses identity weighting initially and Bartlett HAC weighting in the second step. Final inference uses the unsmoothed residualized score from Section \ref{sec:inference}, with $\ell_n=\max\{1,\lfloor T^{1/5}\rfloor\}$.

Following \citet{GalvaoKato16,KaplanSun17,deCastroGalvaoKaplanLiu19}, we use the fourth-order kernel
$K(u)=\frac{105}{64}(1-5u^2+7u^4-3u^6)1\{|u|\le1\}$. The  bandwidth is
\(
h_n=c_\tau T^{-2/7}\), where the constant \(c_\tau=\left[\frac{4.5\,\phi\{\Phi^{-1}(\tau)\}^{4}}
{\{2\Phi^{-1}(\tau)^2+1\}^{2}}\right]^{1/5}\) is the normal-reference constant frequently used in quantile
regression bandwidth selection. Here, $\phi$ and $\Phi$ are the standard normal density and distribution functions. We report bias, root mean squared error (RMSE), and empirical coverage probability (CP) of nominal 95\% normal confidence intervals.

\subsection{Simulation results}

The results for $\tau=0.50$ are reported in Table B.1 of Supplementary
Appendix B. Below, we present the results for $\tau=0.25$, which show
slightly weaker overall finite-sample performance.

\begin{table}[t!]
\centering
\caption{Monte Carlo Results, \(\tau=0.25\)}\label{tab:simulation_tau025_new}
\singlespacing
\resizebox{\textwidth}{!}{%
\begin{tabular}{lllcccccccc}
\toprule
 &  &  & \multicolumn{2}{c}{SP-GMM}
 & \multicolumn{2}{c}{FE-QR}
 & \multicolumn{2}{c}{FE-SQR}
 & \multicolumn{2}{c}{CS-IVQR}
 \tabularnewline
\cline{4-5}\cline{6-7}\cline{8-9}\cline{10-11}
Statistic & \(T\) & \(n\)
& \multicolumn{2}{c}{\(\sigma_{uv}\)}
& \multicolumn{2}{c}{\(\sigma_{uv}\)}
& \multicolumn{2}{c}{\(\sigma_{uv}\)}
& \multicolumn{2}{c}{\(\sigma_{uv}\)}
\tabularnewline
 &  &  & \(0.00\) & \(0.50\)
 & \(0.00\) & \(0.50\)
 & \(0.00\) & \(0.50\)
 & \(0.00\) & \(0.50\)
\tabularnewline
\hline
\multirow{9}{*}{Bias}
& 20 & 20 & -0.004 & -0.011 & -0.001 & 0.564 & -0.001 & 0.564 & 0.002 & -0.029 \tabularnewline
&    & 50 & 0.002 & -0.003 & -0.001 & 0.563 & -0.001 & 0.563 & -0.022 & -0.019 \tabularnewline
&    & 100 & -0.005 & -0.001 & -0.003 & 0.565 & -0.003 & 0.565 & -0.002 & -0.007 \tabularnewline
& 50 & 20 & 0.003 & -0.002 & 0.001 & 0.552 & 0.001 & 0.552 & 0.007 & -0.017 \tabularnewline
&    & 50 & 0.000 & -0.001 & -0.002 & 0.553 & -0.002 & 0.553 & -0.014 & 0.038 \tabularnewline
&    & 100 & 0.000 & -0.001 & 0.000 & 0.552 & 0.000 & 0.552 & 0.010 & 0.016 \tabularnewline
& 100 & 20 & 0.001 & -0.003 & 0.001 & 0.548 & 0.002 & 0.547 & -0.004 & -0.129 \tabularnewline
&     & 50 & -0.001 & 0.000 & 0.000 & 0.548 & 0.000 & 0.548 & 0.014 & -0.050 \tabularnewline
&     & 100 & 0.000 & 0.000 & 0.000 & 0.547 & 0.000 & 0.547 & -0.021 & -0.047 \tabularnewline
\hline
\multirow{9}{*}{RMSE}
& 20 & 20 & 0.105 & 0.107 & 0.081 & 0.568 & 0.080 & 0.568 & 0.709 & 0.674 \tabularnewline
&    & 50 & 0.067 & 0.069 & 0.054 & 0.564 & 0.052 & 0.564 & 0.436 & 0.457 \tabularnewline
&    & 100 & 0.048 & 0.046 & 0.036 & 0.566 & 0.035 & 0.566 & 0.316 & 0.307 \tabularnewline
& 50 & 20 & 0.063 & 0.061 & 0.049 & 0.553 & 0.049 & 0.553 & 1.050 & 1.002 \tabularnewline
&    & 50 & 0.038 & 0.038 & 0.031 & 0.553 & 0.031 & 0.553 & 0.591 & 0.601 \tabularnewline
&    & 100 & 0.028 & 0.027 & 0.023 & 0.552 & 0.023 & 0.552 & 0.426 & 0.432 \tabularnewline
& 100 & 20 & 0.044 & 0.043 & 0.036 & 0.548 & 0.036 & 0.548 & 1.424 & 1.354 \tabularnewline
&     & 50 & 0.028 & 0.027 & 0.022 & 0.548 & 0.022 & 0.549 & 0.829 & 0.854 \tabularnewline
&     & 100 & 0.019 & 0.019 & 0.015 & 0.547 & 0.015 & 0.547 & 0.575 & 0.567 \tabularnewline
\hline
\multirow{9}{*}{CP}
& 20 & 20 & 0.896 & 0.908 & 0.910 & 0.008 & 0.919 & 0.010 & 0.753 & 0.769 \tabularnewline
&    & 50 & 0.952 & 0.956 & 0.937 & 0.000 & 0.956 & 0.002 & 0.799 & 0.798 \tabularnewline
&    & 100 & 0.964 & 0.973 & 0.968 & 0.000 & 0.970 & 0.001 & 0.843 & 0.842 \tabularnewline
& 50 & 20 & 0.936 & 0.950 & 0.949 & 0.000 & 0.957 & 0.000 & 0.766 & 0.776 \tabularnewline
&    & 50 & 0.960 & 0.966 & 0.952 & 0.000 & 0.964 & 0.000 & 0.846 & 0.824 \tabularnewline
&    & 100 & 0.972 & 0.970 & 0.962 & 0.000 & 0.969 & 0.000 & 0.854 & 0.838 \tabularnewline
& 100 & 20 & 0.938 & 0.941 & 0.942 & 0.000 & 0.946 & 0.000 & 0.789 & 0.796 \tabularnewline
&     & 50 & 0.965 & 0.963 & 0.965 & 0.000 & 0.971 & 0.000 & 0.837 & 0.814 \tabularnewline
&     & 100 & 0.970 & 0.975 & 0.972 & 0.000 & 0.975 & 0.000 & 0.857 & 0.865 \tabularnewline
\bottomrule
\end{tabular}}
\par\smallskip {\small\textit{Notes:} The table reports bias, RMSE, and empirical coverage probability (CP)
for nominal \(95\%\) confidence intervals. The results are based on
\(1000\) Monte Carlo replications.}
\end{table}

Table \ref{tab:simulation_tau025_new} reports the results for $\tau=0.25$. Under exogeneity, all three panel data estimators have
negligible bias, and their RMSE decreases with the panel dimensions.
FE-QR and FE-SQR generally have smaller RMSE than SP-GMM because they
avoid the efficiency cost of instrumenting an exogenous regressor.
Coverage for FE-QR ranges from \(0.910\) to \(0.972\), while coverage
for FE-SQR ranges from \(0.919\) to \(0.975\). Their undercoverage is
concentrated in the smallest designs and diminishes as the panel
dimensions increase. SP-GMM also exhibits undercoverage in the smallest
design, with coverage equal to \(0.896\) when \((T,n)=(20,20)\).
Its coverage ranges from \(0.896\) to \(0.972\), with some overcoverage
in larger designs.

Under endogeneity, SP-GMM remains accurately centered across all nine
designs. Its absolute bias is at most \(0.011\), and its RMSE decreases
from \(0.107\) when \((T,n)=(20,20)\) to \(0.019\) when
\((T,n)=(100,100)\). Coverage is \(0.908\) when \((n,T)=(20,20)\) and ranges from \(0.941\)
to \(0.975\) in the remaining designs, where it is generally close to the nominal
\(95\%\) level and comparable to that in the exogenous case.

In contrast, the biases of both FE-QR and FE-SQR remain between
approximately \(0.547\) and \(0.565\) under endogeneity. Their RMSE
remains between approximately \(0.547\) and \(0.568\), and their coverage
probabilities are essentially zero. For example, when
\((T,n)=(50,50)\), SP-GMM has bias \(-0.001\), RMSE \(0.038\), and
coverage \(0.966\). The corresponding values are approximately \(0.553\),
\(0.553\), and \(0.000\) for both FE-QR and FE-SQR. The close agreement
between the two fixed-effects estimators confirms that smoothing does not
correct the inconsistency caused by endogeneity.

The cross-sectional IVQR estimator is highly unstable. Although its bias is relatively small in several designs, its
RMSE ranges from \(0.307\) to \(1.354\) under endogeneity, and its
coverage probability ranges from \(0.769\) to \(0.865\). Thus, its
apparent robustness in terms of average bias does not translate into
precise estimation or accurately calibrated inference.

Overall, the simulations yield three conclusions. First, exploiting
the panel time-series structure is important for precision. Under
endogeneity, the cross-sectional IVQR estimator has smaller bias and
substantially better coverage than the conventional fixed-effects
estimators, but its RMSE remains large and unstable. This comparison
highlights the value of retaining the within-unit information discarded
by time aggregation. Second, when the regressor is exogenous, FE-QR and
FE-SQR are nearly unbiased, provide generally accurate inference, and
have slightly smaller RMSE than SP-GMM as instrumenting is
unnecessary. SP-GMM remains nearly unbiased and achieves 
coverage close to FE-QR and
FE-SQR, indicating a modest efficiency cost under exogeneity.
Third, when endogeneity is present, FE-QR and FE-SQR become concentrated
around the wrong population value and exhibit severe coverage
distortions. In contrast, SP-GMM remains nearly unbiased, its RMSE
decreases with both \(n\) and \(T\), and its coverage remains comparable
to that under exogeneity.

\section{Empirical application}
\label{sec:application}

We apply the method to monthly inflation in OECD economies to examine
whether persistence varies across structural quantiles. Inflation dynamics
depend on expectations and adjustment
\citep{phelps1967phillips,friedman1968role}, while subsequent work emphasizes
the importance and instability of backward-looking dynamics
\citep{stock2007has,coibion2015phillips}.
Our framework allows heterogeneous inflation propagation while controlling
for country fixed-effects and treating lagged regressors as potentially
endogenous.

Let \(CPI_{it}\), \(IP_{it}\), and \(UR_{it}\) denote the consumer price
index, industrial production index, and unemployment rate for country
\(i\) in month \(t\). Define monthly inflation as
\(\pi_{it}=100\{\log(CPI_{it})-\log(CPI_{i,t-1})\}\),
industrial production growth as
\(gIP_{it}=100\{\log(IP_{it})-\log(IP_{i,t-1})\}\), and the change in
unemployment as \(\Delta UR_{it}=UR_{it}-UR_{i,t-1}\).
The data come from the OECD Main Economic Indicators through FRED.
The balanced baseline sample contains 27 countries and 225 monthly
observations per country.\footnote{The covariance theory assumes
cross-country independence; general cross-country dependence would
require an extension.}

We estimate
\[
\Q_{\tau}\!\left[
\pi_{it}-\alpha_{i0\tau}
-\beta_{0\tau,1}\pi_{i,t-1}
-\beta_{0\tau,2}\Delta UR_{i,t-1}
-\beta_{0\tau,3}gIP_{i,t-1}
\mid \bm z_{it},\alpha_{i0\tau}
\right]=0.
\]
Under the structural IVQR interpretation in Section \ref{sec:model},
\(\beta_{0\tau,1}\) measures the percentage-point response of the
\(\tau\)-th structural quantile of current inflation to a one-percentage-point
increase in lagged inflation, controlling for changes in labor-market
slack and real activity.

Serial dependence in inflation disturbances creates a natural
endogeneity concern. For example, independent innovations that each enter the outcome disturbance in two consecutive months, as in an MA(1) process, can make lagged inflation endogenous, while lags dated \(t-2\) or earlier can satisfy the sequential quantile restriction.
We therefore instrument the lagged regressors using earlier lags.
Identification maintains
\(P(u_{it}\leq0\mid\mathcal F_{i,t-2},\alpha_{i0\tau})=\tau\),
where \(u_{it}\) is the structural residual and \(\mathcal F_{i,t-2}\)
contains information through month \(t-2\).
This restriction permits dependence on information arriving at \(t-1\)
but requires a zero conditional \(\tau\)-quantile given earlier information.
Subject to relevance and rank conditions, the baseline instruments are:
$\bm z_{it} =(\pi_{i,t-2},\pi_{i,t-3},
\Delta UR_{i,t-2},\Delta UR_{i,t-3},
gIP_{i,t-2},gIP_{i,t-3})'$.
We assess sensitivity to instrument choice by extending the lag range
to \(t-2\) through \(t-5\).

Table \ref{tab:empirical_inflation_simple} reports the estimates.
All specifications use the fourth-order kernel and bandwidth rule in
Section \ref{sec:sim}, two-step weighting from Section \ref{sec:est-GMM},
and Bartlett HAC covariance estimation from Section \ref{sec:inference}.
Standard errors are the square roots of 
\(\widehat\Sigma_\beta/(nT)\).

\begin{table}[t!]
\centering
\caption{Inflation Persistence Across Structural Quantiles}
\label{tab:empirical_inflation_simple}
\begin{tabular}{lccc}
\toprule
Instrument sets & \(\tau=0.25\) & \(\tau=0.50\) & \(\tau=0.75\) \\
\hline
\(t-2,t-3\)
& \(0.426\) & \(0.427\) & \(0.112\) \\
& \((0.092)\) & \((0.178)\) & \((0.092)\) \\[0.3em]
\(t-2,t-3,t-4\)
& \(0.829\) & \(0.537\) & \(0.121\) \\
& \((0.084)\) & \((0.087)\) & \((0.128)\) \\[0.3em]
\(t-2,t-3,t-4,t-5\)
& \(0.829\) & \(0.457\) & \(0.126\) \\
& \((0.087)\) & \((0.111)\) & \((0.087)\) \\
\bottomrule
\end{tabular}

\medskip
\begin{minipage}{0.94\textwidth}
\small
\textit{Notes:} Smoothed GMM estimates of \(\beta_{0\tau,1}\), with
standard errors assuming cross-country independence in parentheses.
All specifications use \(n=27\) countries; \(T=225\), \(224\), and
\(223\) for the three instrument sets, respectively.
\end{minipage}
\end{table}

The point estimates suggest asymmetric inflation propagation.
At \(\tau=0.25\), persistence ranges from \(0.426\) to \(0.829\)
and is statistically significant in every specification, although its
magnitude increases substantially when lag \(t-4\) is added.
Median estimates range from \(0.427\) to \(0.537\) and are also
significant. With the shortest instrument set, lower-quantile and
median estimates are similar; with additional lags, persistence is
larger at the lower quantile.

At \(\tau=0.75\), estimates are much smaller, ranging from \(0.112\)
to \(0.126\), and none differs significantly from zero at conventional
levels. Across instrument sets, the point estimates therefore
suggest stronger backward-looking persistence at the lower and median
structural quantiles than at the upper quantile. This pattern may reflect more gradual adjustment at the lower
structural quantile and a greater influence of new shocks, such as
energy-price movements, tax changes, or supply disruptions, at the
upper structural quantile.

\section{Concluding remarks}

\label{sec:conclusion}

This paper develops a smoothed GMM framework for panel quantile models with individual fixed-effects and endogenous regressors. The framework exploits unconditional moment restrictions implied by conditional quantile restrictions and nests instrumental variables panel quantile regression as a leading special case. Smoothing makes the sample moments differentiable, while local
uniqueness of the fixed-effect profiles yields a differentiable
profiled GMM criterion near the true parameter with probability
approaching one.

We establish  consistency, and asymptotic normality for
the proposed estimator under weak time dependence. The framework allows for endogenous regressors and individual fixed-effects, thereby relaxing both the iid sampling
assumption and the exogeneity restrictions commonly imposed in conventional panel QR models. We also discuss feasible inference based on a consistent
estimator of the asymptotic covariance matrix.

The Monte Carlo evidence supports the usefulness of the proposed method.
Across both exogenous and endogenous designs, SP-GMM is nearly unbiased
and becomes increasingly precise as the panel dimensions grow.
Coverage remains comparable across the designs and generally close
to the nominal level, with some finite-sample undercoverage and
overcoverage.

The empirical application to OECD inflation dynamics illustrates the
practical value of the method. Under the structural IVQR interpretation,
the point estimates suggest asymmetric inflation propagation: lagged
inflation has a substantially larger effect at the lower structural
quantile of current inflation than at the upper structural quantile.

Several issues remain for future research. First, the choice of
bandwidth is important for finite-sample performance and deserves further
study. Second, instrument selection and weighting-matrix choice may affect
practical implementation, especially in dynamic panel applications with
many available lags. Third, extensions to richer forms of unobserved
heterogeneity, such as interactive fixed-effects or common time effects,
would further broaden the applicability of the approach. These topics are
left for future work.

\newpage

\begin{center}
{\Large\bfseries Supplementary Appendix \\  GMM for Panel Data Quantile Regression }\par
\bigskip
{\large Antonio F. Galvao, Ulrich Hounyo, and Jiahao Lin}
\end{center}
\noindent This supplement contains the proofs for the results in the main paper as well as auxiliary results (Appendix A), and the detailed Monte Carlo design, implementation, and additional median-quantile numerical results (Appendix B).
\appendix
\numberwithin{equation}{section}
\numberwithin{table}{section}
\numberwithin{figure}{section}
\section{Proofs}\label{appendix}

\subsection{Proof of Lemma 4.1
{}}
\begin{proof}
Define the unsmoothed population fixed-effect moment
\begin{align}\label{eq: A_iu}
A_i^u(\alpha,\bm\beta)
=
\E\left[
1\left\{
y_{it}-\alpha-\bm x_{it}'\bm\beta\le0
\right\}
-\tau
\right].
\end{align}
By Assumption (A5),
\(
A_i^u\bigl(\alpha_i(\bm\beta),\bm\beta\bigr)=0
\)
for every \(i\ge1\) and \(\bm\beta\in\mathcal B\). In addition, the
collection
\(
\left\{
\alpha_i(\bm\beta):
i\ge1,\ \bm\beta\in\mathcal B
\right\}
\)
is contained in a compact subset of
\(\operatorname{int}(\mathcal A)\). Consequently, there exists a fixed
\(\varepsilon_0>0\) such that
\[
\left[
\alpha_i(\bm\beta)-\varepsilon_0,\,
\alpha_i(\bm\beta)+\varepsilon_0
\right]
\subseteq\mathcal A
\]
uniformly over \(i\ge1\) and \(\bm\beta\in\mathcal B\).

Fix any \(\varepsilon\in(0,\varepsilon_0)\). The map
\(\alpha\mapsto A_i^u(\alpha,\bm\beta)\) is nondecreasing because
\[
A_i^u(\alpha,\bm\beta)
=
P\left(
y_{it}-\bm x_{it}'\bm\beta\le\alpha
\right)-\tau.
\]
Therefore,
\[
A_i^u\bigl(\alpha_i(\bm\beta)-\varepsilon,\bm\beta\bigr)
\le0,
\qquad
A_i^u\bigl(\alpha_i(\bm\beta)+\varepsilon,\bm\beta\bigr)
\ge0.
\]
The uniform separation condition in Assumption (A5) then implies that
there exists \(c_\varepsilon>0\) such that, for all sufficiently large
\(n\),
\[
\begin{aligned}
A_i^u\bigl(\alpha_i(\bm\beta)-\varepsilon,\bm\beta\bigr)
\le -c_\varepsilon,\quad A_i^u\bigl(\alpha_i(\bm\beta)+\varepsilon,\bm\beta\bigr)
\ge c_\varepsilon,
\end{aligned}
\]
uniformly over \(i\le n\) and \(\bm\beta\in\mathcal B\).

We next establish uniform convergence of the smoothed sample moment to
the unsmoothed population moment. By the triangle inequality,
\[
\begin{aligned}
\max_{i\le n}
\sup_{\alpha\in\mathcal A,\bm\beta\in\mathcal B}
\left|
\widehat{A}_{i}(\alpha,\bm\beta)-A_i^u(\alpha,\bm\beta)
\right|&\quad\le
\max_{i\le n}
\sup_{\alpha\in\mathcal A,\bm\beta\in\mathcal B}
\left|
\widehat{A}_{i}(\alpha,\bm\beta)-\E[\widehat{A}_{i}(\alpha,\bm\beta)]
\right|\\
&\qquad+
\max_{i\le n}
\sup_{\alpha\in\mathcal A,\bm\beta\in\mathcal B}
\left|
\E [\widehat{A}_{i}(\alpha,\bm\beta)]-A_i^u(\alpha,\bm\beta)
\right|.
\end{aligned}
\]
Applying the blocking and maximal-inequality argument for the bounded
VC-type class indexed by
\((\alpha,\bm\beta)\in\mathcal A\times\mathcal B\), as detailed below for \eqref{eq:hat-E}
in the proof of Proposition 4.1
{}, gives
\[
\max_{i\le n}
\sup_{\alpha\in\mathcal A,\bm\beta\in\mathcal B}
\left|
\widehat{A}_{i}(\alpha,\bm\beta)-\E[\widehat{A}_{i}(\alpha,\bm\beta)]
\right|
=o_p(1).
\]
Moreover, following the similar argument for \eqref{eq:E-u}, the boundedness of the conditional density and the compact
support of \(K\) imply that
\[
\max_{i\le n}
\sup_{\alpha\in\mathcal A,\bm\beta\in\mathcal B}
\left|
\E [\widehat{A}_{i}(\alpha,\bm\beta)]-A_i^u(\alpha,\bm\beta)
\right|
=O(h_n)
=o(1).
\]
Hence,
\[
\max_{i\le n}
\sup_{\alpha\in\mathcal A,\bm\beta\in\mathcal B}
\left|
\widehat{A}_{i}(\alpha,\bm\beta)-A_i^u(\alpha,\bm\beta)
\right|
=o_p(1).
\]

It follows that, with probability approaching one, simultaneously for
every \(i\le n\) and \(\bm\beta\in\mathcal B\),
\[
\begin{aligned}
\widehat{A}_{i}\bigl(\alpha_i(\bm\beta)-\varepsilon,\bm\beta\bigr)
<-\frac{c_\varepsilon}{2},\quad
\widehat{A}_{i}\bigl(\alpha_i(\bm\beta)+\varepsilon,\bm\beta\bigr)
>\frac{c_\varepsilon}{2}.
\end{aligned}
\]
Because \(K\) is continuous, \(G\) is continuous, and therefore
\(\alpha\mapsto \widehat{A}_{i}(\alpha,\bm\beta)\) is continuous. The
intermediate value theorem now implies that, for every \(i\le n\) and
\(\bm\beta\in\mathcal B\), there exists at least one
\[
\widehat\alpha_i(\bm\beta)
\in
\left(
\alpha_i(\bm\beta)-\varepsilon,\,
\alpha_i(\bm\beta)+\varepsilon
\right)
\subseteq\mathcal A
\]
such that
\(
\widehat{A}_{i}\bigl(\widehat\alpha_i(\bm\beta),\bm\beta\bigr)=0.
\)

Finally, \(\widehat{A}_{i}(\alpha,\bm\beta)\) is jointly continuous in
\((\alpha,\bm\beta)\). Thus its zero correspondence has a closed graph
and nonempty compact values on the preceding event, and a standard
measurable-selection result yields a measurable choice
\(\widehat\alpha_i(\bm\beta)\). On the event where a root does not
exist, whose probability converges to zero, the profile may be defined
arbitrarily at a fixed point of \(\mathcal A\). This does not affect any
of the asymptotic results.
\end{proof}

\subsection{Proof of Proposition 4.1
{}}
\label{B1}
\begin{proof}
Let $\|\cdot\|_{1}$ denote the $\ell_{1}$ norm, and for $\delta>0$
define
$N_{\delta}:=\{\bm\beta\in\mathcal{B}:\|\bm\beta-\bm\beta_0\|_{1}<\delta\}$
and
$N_{\delta}^{c}:=\mathbb{R}^{d_\beta}\setminus N_{\delta}$.
By Lemma 4.1
{}, with probability approaching one
the sample root sets are nonempty for every \(\bm\beta\in\mathcal B\).
On this event, let \(\widehat{\bm\alpha}(\bm\beta)\) be the
criterion-minimizing profile defined above. It satisfies
\[
\frac{1}{T}\sum_{t=1}^{T}\left\{
G_{h_{n}}\!\left(-y_{it}+\widehat\alpha_i(\bm\beta)
+\bm{x}_{it}'\bm{\beta}\right)-\tau\right\}=0,
\qquad i=1,\ldots,n.
\]

We now show that, for every $\delta>0$, with probability approaching one,
\[
\inf_{\bm\beta\in N_\delta^c\cap\mathcal B}
\widehat m_n(\bm\beta)'\widehat W\widehat m_n(\bm\beta)>0.
\]
The profiled smoothed sample moments are
\begin{align*}
\widehat m_n(\bm\beta)=\frac1n\sum_{i=1}^n\widehat m_{ni}(\bm\beta),
&&\widehat m_{ni}(\bm\beta)=\frac1T\sum_{t=1}^T\bm z_{it}
\Big\{G_{h_n}\!\big(-y_{it}+\widehat\alpha_i(\bm\beta)
+\bm x_{it}'\bm\beta\big)-\tau\Big\}.
\end{align*}
Write
\(
\widehat \Q_n(\bm\beta)=\widehat m_n(\bm\beta)'\widehat W
\widehat m_n(\bm\beta)
\)
and
\(
\Q_n(\bm\beta)=m_n(\bm\beta)'Wm_n(\bm\beta),
\)
where
\begin{align*}
m_n(\bm\beta)=\frac1n\sum_{i=1}^nm_{ni}(\bm\beta),&&
m_{ni}(\bm\beta)=\E\!\left[\bm z_{it}
\Bigl(1\{y_{it}-\alpha_i(\bm\beta)-\bm x_{it}'\bm\beta\le0\}-\tau\Bigr)\right].
\end{align*}
It suffices to prove 
\begin{equation}
\sup_{\bm\beta\in\mathcal B}
|\widehat \Q_n(\bm\beta)-\Q_n(\bm\beta)|\overset p\longrightarrow0,
\label{eq:uniform_obj}
\end{equation}

By Assumption (A7), $\widehat{W}=W+o_{p}(1)$, which does not depend
on $\bm{\theta}$. We now show $\sup_{\bm{\beta}\in\mathcal{B}}\|\widehat{m}_{n}(\bm{\beta})-m_n(\bm{\beta})\|\overset{p}{\to}0.$ Let $H_{ni}(\alpha_i,\bm\beta)=\E[\widehat H_{ni}(\alpha_i,\bm\beta)]$. 
By the triangular inequality, 
for every
\(\bm\beta\in\mathcal B\),
\begin{align}
&\|\widehat m_n(\bm\beta)-m_n(\bm\beta)\|
\le
\frac1n\sum_{i=1}^n
\|\widehat H_{ni}(\widehat\alpha_i(\bm\beta),\bm\beta)
        -H_{ni}(\widehat\alpha_i(\bm\beta),\bm\beta)\| \nonumber \\
&+
\frac1n\sum_{i=1}^n
\|H_{ni}(\widehat\alpha_i(\bm\beta),\bm\beta)
        -H_{ni}(\alpha_i(\bm\beta),\bm\beta)\|+\frac1n\sum_{i=1}^n
\|H_{ni}(\alpha_i(\bm\beta),\bm\beta)-H^u_{ni}(\alpha_i(\bm\beta),\bm\beta)\|,\label{eq: hn-h}
\end{align}
where $H^u_{ni}(\alpha_i,\bm\beta)$ is as defined in 
(4), with $\tau$ omitted for simplicity.

For the first stochastic term on the right hand side of \eqref{eq: hn-h}, let $\mathbb H_{ni}(\alpha_i,\bm{\beta}):=\widehat{H}_{ni}(\alpha_i,\bm{\beta})-{H}_{ni}(\alpha_i,\bm{\beta})$. In the preceding stochastic bound,
the argument should be understood for a generic \(\alpha\in\mathcal A\)
and \(\bm\beta\in\mathcal B\), and then applied at
\(\alpha=\widehat\alpha_i(\bm\beta)\). Focus on the $\ell$th coordinate,
define 
\[
f(\xi_{it}):=z_{it,\ell}\Bigl(G_{h_{n}}(-y_{it}+{\alpha}+\bm{x}_{it}'\bm{\beta})-\tau\Bigr)-\E\!\left[z_{it,\ell}\Bigl(G_{h_{n}}(-y_{it}+{\alpha}+\bm{x}_{it}
'\bm{\beta})-\tau\Bigr)\right],
\]
where $\xi_{it}=(y_{it},\bm{x}_{it},z_{it,\ell})$, so that $\mathbb H _{ni,\ell}(\alpha,\bm{\beta})=\frac{1}{T}\sum_{t=1}^{T}f(\xi_{it}).$
Since $|G_{h_{n}}(\cdot)|\le C_G$ with $C_G=\int |K(v)|dv$, and $|z_{it,\ell}|\le C_{Z}$, we
have $|f(\xi_{it})|\le2C_Z(C_G+1)$. Moreover, by Lemma 1 of \citet{Yoshihara76},
$|\Cov(f(\xi_{1}),f(\xi_{1+k}))|\le C{\beta}_{i}(k)^{1/2}.$ Hence
\[
\sigma_{q}^{2}(f):=\Var\!\left(q^{-1/2}\sum_{t=1}^{q}f(\xi_{it})\right)\le C\Bigl(1+2\sum_{k=1}^{\infty}{\beta}_{i}(k)^{1/2}\Bigr)<\infty
\]
uniformly in $i$, $\alpha$, $\bm{\beta}$, and $\ell$. Since $\beta_{i}(q)\le Ba^{q}$
for some $a\in(0,1)$, one can deduce that $r\beta_{i}(q)\le C\sqrt{T}\,a^{\sqrt{T}}=o(n^{-2}),$
using $\log n/\sqrt{T}\to0$. Therefore, applying Propositions C.1 and C.2 of
\citet{GalvaoKato16} to the VC-type class indexed by
\((\alpha,\bm{\beta})\in\mathcal A\times\mathcal B\), with
\(q=\lfloor\sqrt{T}\rfloor\), \(r=\lfloor T/(2q)\rfloor\), and
\(s=2\log n\), yields, for each coordinate \(\ell\),
\[
\max_{i\le n}
P\left(
\sup_{\alpha\in\mathcal A,\bm\beta\in\mathcal B}
|\mathbb H_{ni,\ell}(\alpha,\bm{\beta})|
\ge
C\frac{\log n+\log T}{\sqrt{T}}
\right)
\le
2e^{-s}+2r\sup_{i\le n}\beta_i(q).
\]
Since \(e^{-s}=n^{-2}\), \(r\le C\sqrt T\), and
\(\sup_{i\le n}\beta_i(q)\le Ba^{\sqrt T}\) for some \(a\in(0,1)\),
the condition \((\log n)/\sqrt T\to0\) implies
\(
r\sup_{i\le n}\beta_i(q)
\le C\sqrt T a^{\sqrt T}
=o(n^{-2}).
\)
Hence, by the union bound over \(i\) and the finite number of
coordinates,
\[
P\left(
\max_{i\le n}
\sup_{\alpha\in\mathcal A,\bm\beta\in\mathcal B}
|\mathbb H_{ni}(\alpha,\bm{\beta})|
\ge
C\frac{\log n+\log T}{\sqrt{T}}
\right)
\le
Cn\{n^{-2}+o(n^{-2})\}
=o(1).
\]
Finally, \((\log n)/\sqrt T\to0\), while
\((\log T)/\sqrt T\to0\) automatically. Therefore, we have that 
\begin{align}\label{eq:hat-E}
\max_{i\le n}
\sup_{\alpha\in\mathcal A,\bm\beta\in\mathcal B}
|\widehat H_{ni}(\alpha,\bm\beta)
-H_{ni}(\alpha,\bm\beta)|
=o_p(1).
\end{align}

Consider the last two terms on the right hand side of \eqref{eq: hn-h}. We first record the uniform
consistency of the constrained fixed effects. By the same uniform
law of large numbers used above, together with the smoothing bias
argument below applied with \(z_{it,\ell}=1\),
\[
\begin{aligned}
\max_{i\le n}\sup_{\alpha\in\mathcal A,\bm\beta\in\mathcal B}
\Bigg|{}&\frac1T\sum_{t=1}^T
\{G_{h_n}(-y_{it}+\alpha+\bm x_{it}'\bm\beta)-\tau\}-\E[1\{y_{it}-\alpha-\bm x_{it}'\bm\beta\le0\}-\tau]
\Bigg|=o_p(1).
\end{aligned}
\]
Since \(\widehat\alpha_i(\bm\beta)\) solves the sample constrained equation,
the first term inside the absolute value evaluated at \(\alpha=\widehat\alpha_i(\bm\beta)\)
is zero. Assumption (A5) implies that, for every \(\varepsilon>0\),
the population constrained equation is uniformly separated from zero
whenever \(|\alpha-\alpha_i(\bm\beta)|\ge\varepsilon\). Hence, if
\(\max_{i\le n}\sup_{\bm\beta\in\mathcal B}
|\widehat\alpha_i(\bm\beta)-\alpha_i(\bm\beta)|\ge\varepsilon\),
then for some \(i\) and \(\bm\beta\),
\[
        \left|
        A_i^u(\widehat\alpha_i(\bm\beta),\bm\beta)
        \right|
        \ge c_{\varepsilon,\alpha}/2
\]
with probability approaching one, where $A_i^u(\cdot)$ is defined in \eqref{eq: A_iu}. The preceding uniform convergence
and the sample equation imply that the same quantity is \(o_p(1)\), a
contradiction. Therefore,
\begin{align}\label{eq: alphahat}
        \max_{i\le n}\sup_{\bm\beta\in\mathcal B}
        |\widehat\alpha_i(\bm\beta)-\alpha_i(\bm\beta)|
        =o_p(1).
\end{align}

Next consider the smoothing bias. For the \(\ell\)th coordinate,
uniformly in \(i\), \(\alpha\), and \(\bm\beta\),
write
\(
c_{it}(\alpha,\bm\beta)
=(\alpha-\alpha_{i0})+\bm x_{it}'(\bm\beta-\bm\beta_0).
\) We have
\[
        \E[\widehat H_{ni,\ell}(\alpha,\bm\beta)]
        -
        H_{ni,\ell}^u(\alpha,\bm\beta)
        =
        \E\left[
        z_{it,\ell}
        \{G_{h_n}(-y_{it}+\alpha+\bm x_{it}'\bm\beta)
        -
        1(y_{it}-\alpha-\bm x_{it}'\bm\beta\le0)\}
        \right].
\]
Conditioning on \((\bm x_{it},\bm z_{it})\), and writing
\(u=u_{it}-c_{it}(\alpha,\bm\beta)\), we obtain
{\small\[
        \E[\widehat H_{ni,\ell}(\alpha,\bm\beta)]
        -
        H_{ni,\ell}^u(\alpha,\bm\beta)
        =
        \E\left[
        z_{it,\ell}
        \int
        \{G(-u/h_n)-1(u\le0)\}
        f_i(u+c_{it}(\alpha,\bm\beta)\mid \bm q_{it})
        du
        \right].
\]}
Changing variables \(u=h_nv\) gives
{\small\[
        \E[\widehat H_{ni,\ell}(\alpha,\bm\beta)]
        -
        H_{ni,\ell}^u(\alpha,\bm\beta)
        =
        h_n \E\left[
        z_{it,\ell}
        \int
        \{G(-v)-1(v\le0)\}
        f_i(h_nv+c_{it}(\alpha,\bm\beta)\mid \bm q_{it})
        dv
        \right].
\]}
By the boundedness of \(z_{it,\ell}\), the uniform boundedness of the
conditional density, and the bounded support of the kernel,
\(\int |G(-v)-1(v\le0)|dv<\infty\). Therefore, 
\begin{align}\label{eq:E-u}
        \max_{i\le n}
        \sup_{\alpha\in\mathcal A,\bm\beta\in\mathcal B}
        \|\E[\widehat H_{ni}(\alpha,\bm\beta)]
        -
        H_{ni}^u(\alpha,\bm\beta)\|
        =O(h_n)=o(1).
\end{align}

We now apply the above results to the profiled moment. 
The last term on the right-hand side of \eqref{eq: hn-h} is bounded by
\(
        \max_{i\le n}
        \sup_{\alpha\in\mathcal A,\bm\beta\in\mathcal B}
        \|\E[\widehat H_{ni}(\alpha,\bm\beta)]-H_{ni}^u(\alpha,\bm\beta)\|
        =o_p(1).
\)
For the second term, the map
\(\alpha\mapsto H_{ni}(\alpha,\bm\beta)\) is uniformly continuous,
because
\[
        \|H_{ni}(\alpha,\bm\beta)-H_{ni}(\alpha_i(\bm\beta),\bm\beta)\|
        \le
        C|\alpha-\alpha_i(\bm\beta)|
\]
uniformly in \(i\) and \(\bm\beta\), by the boundedness of
\(\bm z_{it}\) and the uniform boundedness of the density. Hence,
using
\(\max_{i\le n}\sup_{\bm\beta\in\mathcal B}
|\widehat\alpha_i(\bm\beta)-\alpha_i(\bm\beta)|=o_p(1)\), we obtain
\[
        \sup_{\bm\beta\in\mathcal B}
        \|\widehat m_n(\bm\beta)-m_n(\bm\beta)\|
        =o_p(1).
\]

Therefore,
\[
\begin{aligned}
&\sup_{\bm\beta\in\mathcal B}
\left|
\widehat m_n(\bm\beta)'\widehat W\widehat m_n(\bm\beta)
-
m_n(\bm\beta)'Wm_n(\bm\beta)
\right|  \\
&\quad\le
\|\widehat W-W\|
        \sup_{\bm\beta\in\mathcal B}\|\widehat m_n(\bm\beta)\|^2+
\|W\|
        \sup_{\bm\beta\in\mathcal B}
        \|\widehat m_n(\bm\beta)-m_n(\bm\beta)\|\times
        \sup_{\bm\beta\in\mathcal B}
        \{\|\widehat m_n(\bm\beta)\|+\|m_n(\bm\beta)\|\}.
\end{aligned}
\]
The moments are uniformly bounded by Assumption (A2), \(\widehat W-W=o_p(1)\),
and the preceding display gives
\(\sup_{\bm\beta\in\mathcal B}
\|\widehat m_n(\bm\beta)-m_n(\bm\beta)\|=o_p(1)\). Hence,
\[
        \sup_{\bm\beta\in\mathcal B}
        \left|
        \widehat m_n(\bm\beta)'\widehat W\widehat m_n(\bm\beta)
        -
        m_n(\bm\beta)'Wm_n(\bm\beta)
        \right|
        =o_p(1).
\]
Assumption (A5) and positive definiteness of $W$ imply that, for every
$\delta>0$, the population criterion is bounded away from zero on
$N_\delta^c\cap\mathcal B$. In addition, $\Q_n(\bm\beta_0)=0$, so the
preceding uniform convergence gives
\(\widehat \Q_n(\bm\beta_0)=o_p(1)\), whereas
\(\inf_{\bm\beta\in N_\delta^c\cap\mathcal B}\widehat \Q_n(\bm\beta)\)
is bounded away from zero with probability approaching one. Since
$\widehat{\bm\beta}$ minimizes $\widehat \Q_n$, it follows that
\(\widehat{\bm\beta}\stackrel{p}{\to}\bm\beta_0\).
Finally, since \(\widehat\alpha_i=\widehat\alpha_i(\widehat{\bm\beta})\)
and \(\alpha_i(\bm\beta_0)=\alpha_{i0}\),
\[
\begin{aligned}
\max_{i\le n}|\widehat\alpha_i-\alpha_{i0}|
&\le
\max_{i\le n}
|\widehat\alpha_i(\widehat{\bm\beta})-\alpha_i(\widehat{\bm\beta})|
+
\max_{i\le n}
|\alpha_i(\widehat{\bm\beta})-\alpha_i(\bm\beta_0)|  \\
&\le
\max_{i\le n}\sup_{\bm\beta\in\mathcal B}
|\widehat\alpha_i(\bm\beta)-\alpha_i(\bm\beta)|
+
\max_{i\le n}
|\alpha_i(\widehat{\bm\beta})-\alpha_i(\bm\beta_0)|.
\end{aligned}
\]
The first term is \(o_p(1)\). To control the second, boundedness of
\(\bm x_{it}\) and the conditional density implies that changing
\(\bm\beta\) by $d$ changes the population fixed-effect moment by at most
$C\|d\|$, uniformly in $i$. The uniform separation in (A5) then implies
uniform continuity of $\alpha_i(\bm\beta)$ at $\bm\beta_0$. Therefore,
\(\widehat{\bm\beta}\stackrel{p}{\to}\bm\beta_0\) gives that the second
term is \(o_p(1)\). Hence
\(\max_{i\le n}|\widehat\alpha_i-\alpha_{i0}|=o_p(1)\). Combining this
with \(\widehat{\bm\beta}\stackrel{p}{\to}\bm\beta_0\) proves weak
consistency. 
\end{proof}

\subsection{Proof of Theorem 4.1
{}}

\begin{proof}
\label{app:proof-thm1}

Recall that $\bm{q}_{it}=(\bm{x}_{it}',\bm{z}_{it}')',\ u_{it}=y_{it}-\alpha_{i0}-\bm{x}_{it}'\bm{\beta}_{0}.$
For notational simplicity, write 
\[
G_{h_{n}}(v)=G(v/h_{n}),\qquad K_{h_{n}}(v)=h_{n}^{-1}K(v/h_{n}),
\]
where $G_{h_{n}}(v)$ is the smoothed version of $1(v\geq0)$. Hence
$G_{h_{n}}(-u_{it})$ is the smoothed version of $1(u_{it}\leq0)$.
Define 
\[
a_{it}(\alpha_{i},\bm{\beta})=G_{h_{n}}(\alpha_{i}+\bm{x}_{it}'\bm{\beta}-y_{it})-\tau,
\]
\[
\widehat{A}_{i}(\alpha_{i},\bm{\beta})=\frac{1}{T}\sum_{t=1}^{T}a_{it}(\alpha_{i},\bm{\beta}),\qquad \widehat{H}_{n}(\bm{\alpha},\bm{\beta})=\frac{1}{nT}\sum_{i=1}^{n}\sum_{t=1}^{T}\bm{z}_{it}a_{it}(\alpha_{i},\bm{\beta}).
\]
Let $A_i(\alpha_i,\bm\beta)=\E[\widehat{A}_{i}(\alpha_i,\bm\beta)]$.
The profiled fixed effect $\widehat{\alpha}_{i}(\bm{\beta})$ satisfies
$\widehat{A}_{i}(\widehat{\alpha}_{i}(\bm{\beta}),\bm{\beta})=0$, and the
profiled moment is $\widehat{m}_{n}(\bm{\beta})=\widehat{H}_{n}(\widehat{\bm{\alpha}}(\bm{\beta}),\bm{\beta})$.
We first collect several elementary expansions used in the proofs.

\begin{lemma} \label{lem:smoothing} Under the assumptions of Theorem
4.1
{}, for any uniformly bounded vector-valued function $\bm{b}_{it}$
measurable with respect to $\bm{q}_{it}$, let
\(
\bm\xi_{it}^{(h)}
=\bm b_{it}\{G_{h_n}(-u_{it})-1(u_{it}\leq0)\}.
\)
Then
\[
\max_{1\leq i\leq n}\|\E[\bm\xi_{it}^{(h)}]\|=O(h_n^r).
\]
Moreover, 
\[
\frac{1}{\sqrt{nT}}\sum_{i=1}^{n}\sum_{t=1}^{T}\bm\xi_{it}^{(h)}
-\sqrt{nT}\,\frac{1}{n}\sum_{i=1}^{n}\E[\bm\xi_{it}^{(h)}]
=o_p(1).
\]
Consequently, if $\sqrt{nT}h_n^r\to0$, then
\[
\frac{1}{\sqrt{nT}}\sum_{i=1}^{n}\sum_{t=1}^{T}\bm\xi_{it}^{(h)}=o_p(1).
\]
\end{lemma}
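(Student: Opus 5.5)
The lemma has three parts: a bias bound, a centered stochastic bound, and a combination of the two. I would prove them in that order.

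\textbf{Bias bound.} Condition on $\bm q_{it}$ and use that $\bm b_{it}$ is $\bm q_{it}$-measurable and bounded:
\[
\E[\bm\xi_{it}^{(h)}]=\E\Big[\bm b_{it}\int\{G(-u/h_n)-1(u\le0)\}f_i(u\mid\bm q_{it})\,du\Big].
\]
\begin{itemize}
\item Substitute $u=h_nv$ and integrate by parts in $v$. Since $\frac{d}{dv}\{G(-v)-1(v\le0)\}=-K(-v)+\delta_0$, this turns the inner integral into $h_n\int K(v)\{F_i(h_nv\mid\bm q)-F_i(0\mid\bm q)\}/h_n\,dv$. Equivalently, the inner integral equals $\int K(v)F_i(-h_nv\mid \bm q)dv-F_i(0\mid\bm q)$, where $F_i$ is the conditional c.d.f. of $u_{it}$.
\item Expand $F_i(-h_nv\mid\bm q)$ by Taylor to order $r$ around $0$. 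By (A6), $F_i$ has $r+1$ bounded derivatives, because $f_i$ is $r$ times differentiable with uniformly bounded derivatives.
\item Because $K$ is an $r$-th order kernel with bounded support (A8), the moments $\int v^jK(v)dv$ vanish for $1\le j\le r-1$. The remainder is then $O(h_n^r)$, uniformly in $i$ and $\bm q$.
\item Boundedness of $\bm b_{it}$ gives $\max_i\|\E\bm\xi_{it}^{(h)}\|=O(h_n^r)$.
\end{itemize}

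\textbf{Centered sum.} Let $\bm\zeta_{it}=\bm\xi_{it}^{(h)}-\E\bm\xi_{it}^{(h)}$. Independence across $i$ (A1) gives
\[
\Var\Big(\tfrac1{\sqrt{nT}}\sum_{i,t}\bm\zeta_{it}\Big)=\frac1n\sum_i\Var\Big(T^{-1/2}\sum_t\bm\zeta_{it}\Big).
\]
Expanding the inner variance, it suffices to bound $\E\|\bm\zeta_{i1}\|^2$ and $\sum_{j\ge1}|\Cov(\zeta_{i1,\ell},\zeta_{i,1+j,\ell})|$ by a quantity tending to zero, uniformly in $i$.

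The key observation is that $G(-u/h_n)-1(u\le0)$ is bounded and vanishes unless $|u_{it}|\le Ch_n$, since $K$ has bounded support.
\begin{itemize}
\item \emph{Second moment.} Because the marginal density is bounded, $\E\|\bm\xi_{i1}^{(h)}\|^2\le C\,P(|u_{i1}|\le Ch_n)=O(h_n)$.
\item \emph{Covariances at each lag.} The anti-concentration part of (A6), evaluated at the true parameter, gives $|\E[\xi_{i1}\xi_{i,1+j}]|\le C\,P(|u_{i1}|\le Ch_n,|u_{i,1+j}|\le Ch_n)\le Ch_n^2$. Together with $\|\E\xi\|^2=O(h_n^{2r})$, this yields $|\Cov|\le Ch_n^2$ for every $j$.
\item \emph{Mixing bound.} Alternatively, Davydov's or Yoshihara's inequality with $\beta$-mixing gives $|\Cov|\le C\beta_i(j)^{1-2/p}\|\xi\|_p^2$. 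Taking $p$ large (and $p=\infty$ is available), the $L^p$ norm is of order $h_n^{1/p}$.
\end{itemize}

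Combine the two covariance bounds by splitting the lags at $m_n$:
\[
\sum_j|\Cov|\le m_nCh_n^2+C\sum_{j>m_n}a^{j/2}.
\]
Choose $m_n\asymp\log(1/h_n)$. Then the total is $O(h_n+h_n^2\log(1/h_n)+h_n)=O(h_n)\to0$, uniformly in $i$. Chebyshev's inequality then gives the second display of the lemma, i.e. the centered sum is $o_p(1)$.

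\textbf{Combination.} The third claim follows immediately: $\sqrt{nT}\,\frac1n\sum_i\E\bm\xi_{it}^{(h)}=O(\sqrt{nT}h_n^r)\to0$ by the first step.

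\textbf{Main obstacle.} The step I expect to be hardest is controlling the covariance sum uniformly in $i$ so that it is $o(1)$, not merely $O(1)$. The mixing bound alone is not enough: a Yoshihara-type bound gives only $\sum_j\beta(j)^{1/2}\cdot O(1)$. The lag split with the pairwise small-ball condition in (A6) is what delivers the vanishing rate, and this is exactly why that condition is assumed.
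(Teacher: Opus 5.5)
Your proposal is correct and follows essentially the same route as the paper. For the bias you use integration by parts and an $r$-th order Taylor expansion of $F_i(-h_nv\mid\bm q)$. For the centered term you combine the $O(h_n)$ second moment, the (A6) bound $Ch_n^2$ on near-lag covariances, geometric mixing decay for far lags with a split at order $\log(1/h_n)$, independence across $i$, and Chebyshev.

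One small correction to your closing remark: it is not true that the mixing bound alone cannot deliver $o(1)$. Davydov's inequality with any finite $p>2$, combined with your own bound $\|\xi\|_p=O(h_n^{1/p})$, already gives $\sum_{j}|\Cov(\xi_{i1,\ell},\xi_{i,1+j,\ell})|=O(h_n^{2/p})\to0$. So (A6) is not needed for the $o_p(1)$ conclusion itself; it only sharpens the variance rate to $O(h_n)$.
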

\begin{proof}
Condition on $\bm{q}_{it}$. By a change of variables and integration
by parts, the conditional mean of $G_{h_{n}}(-u_{it})-1(u_{it}\leq0)$
is 
\[
\int\{G_{h_{n}}(-u)-1(u\leq0)\}f_{i}(u\mid\bm{q}_{it})du=\int K(v)\left(F_{i}(-h_nv\mid\bm{q}_{it})-F_{i}(0\mid\bm{q}_{it})\right)dv.
\]
Because $K$ is an $r$-th order kernel and $f_{i}(u\mid\bm{q})$
has uniformly bounded derivatives up to order $r$, the usual kernel
expansion and Taylor expansion give this conditional mean as $O(h_{n}^{r})$,
uniformly in $i$ and $\bm{q}$. Multiplication by the bounded vector
$\bm{b}_{it}$ and taking expectations gives the first claim.

 For the centered term, note that \(\bm\xi_{it}^{(h)}\) is uniformly
bounded and differs from zero only when \(u_{it}\) lies within a fixed
multiple of \(h_n\) around zero, by the bounded support of \(K\).
Consequently,
\(
\E\|\bm\xi_{it}^{(h)}\|^2=O(h_n)
\)
uniformly in \(i\). To control the covariance sum, Assumption (A6) and
the boundedness of \(\bm b_{it}\) give, uniformly in \(i\) and \(j\geq1\),
\[
\begin{aligned}
\E\!\left[
\left\|\bm\xi_{i1}^{(h)}
\bm\xi_{i,1+j}^{(h)\prime}\right\|
\right]
&\leq
C P\!\left(
|u_{i1}|\leq Ch_n,\,
|u_{i,1+j}|\leq Ch_n
\right) \leq Ch_n^2.
\end{aligned}
\]
Similarly, \(\E\|\bm\xi_{it}^{(h)}\|\leq Ch_n\), and hence
\(
\left\|
\operatorname{Cov}\!\left(
\bm\xi_{i1}^{(h)},\bm\xi_{i,1+j}^{(h)}
\right)
\right\|
\leq Ch_n^2.
\)
The beta-mixing covariance inequality also gives
\(
\left\|
\operatorname{Cov}\!\left(
\bm\xi_{i1}^{(h)},\bm\xi_{i,1+j}^{(h)}
\right)
\right\|
\leq Ca^{j/2}.
\)
Splitting at \(J_n=\lceil C\log(1/h_n)\rceil\), for a sufficiently
large fixed \(C\), therefore yields
\[
\begin{aligned}
\sum_{j=1}^{\infty}
\left\|
\operatorname{Cov}\!\left(
\bm\xi_{i1}^{(h)},\bm\xi_{i,1+j}^{(h)}
\right)
\right\|
&\leq
CJ_nh_n^2+C\sum_{j>J_n}a^{j/2} =
O\!\left(h_n^2\log(1/h_n)\right)+O(h_n)
=O(h_n),
\end{aligned}
\]
uniformly in \(i\). Together with
\(\E\|\bm\xi_{it}^{(h)}\|^2=O(h_n)\), this implies
\(
\left\|
\operatorname{Var}\left(
\frac{1}{\sqrt T}\sum_{t=1}^T\bm\xi_{it}^{(h)}
\right)
\right\|
=O(h_n)
\)
uniformly over \(i\). Since the processes are independent
across $i$, 
\[
\Var\left(\frac{1}{\sqrt{nT}}
\sum_{i=1}^{n}\sum_{t=1}^{T}\bm\xi_{it}^{(h)}\right)=O(h_n).
\]
Because $h_{n}\to0$, the centered term is $o_{p}(1)$. The final
statement follows by combining the centered bound with the bias bound. 
\end{proof}
\begin{lemma}[Local empirical expansion] \label{lem:local_empirical}
Under the assumptions of Theorem 4.1
{}, for any deterministic
sequence $\eta_n\downarrow0$, define
\[
\begin{aligned}
\mathcal R_{ni}(d,b)
={}&\widehat{A}_{i}(\alpha_{i0}+d,\bm\beta_0+b)-\widehat{A}_{i}(\alpha_{i0},\bm\beta_0)-A_i(\alpha_{i0}+d,\bm\beta_0+b)+A_i(\alpha_{i0},\bm\beta_0).
\end{aligned}
\]
Then
\(
\max_{1\le i\le n}\sup_{|d|+\|b\|\le\eta_n}
\frac{|\mathcal R_{ni}(d,b)|}{|d|+\|b\|}=o_p(1),
\)
where the ratio is understood as zero when $d=0$ and $b=0$. \end{lemma}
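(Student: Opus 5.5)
The plan is to use the smoothness of the sample fixed-effect equation to reduce the ratio bound to a uniform law of large numbers for kernel-weighted averages. By Assumption (A8), $G$ is continuously differentiable with $G'=K$. Hence $(\alpha,\bm\beta)\mapsto a_{it}(\alpha,\bm\beta)$ is $C^1$ with gradient $K_{h_n}(\alpha+\bm x_{it}'\bm\beta-y_{it})(1,\bm x_{it}')'$, and the same holds for $A_i=\E[\widehat A_i]$ by dominated convergence. Write $\bm\theta_{i0}=(\alpha_{i0},\bm\beta_0)$, $\bm v=(d,b')'$, and
\[
\bm\Delta_{ni}(\alpha,\bm\beta)=\frac1T\sum_{t=1}^T\Bigl\{K_{h_n}(\alpha+\bm x_{it}'\bm\beta-y_{it})(1,\bm x_{it}')'-\E\bigl[K_{h_n}(\alpha+\bm x_{it}'\bm\beta-y_{it})(1,\bm x_{it}')'\bigr]\Bigr\}.
\]
The fundamental theorem of calculus along the segment from $\bm\theta_{i0}$ to $\bm\theta_{i0}+\bm v$ gives
\[
\mathcal R_{ni}(d,b)=\int_0^1\bm\Delta_{ni}(\bm\theta_{i0}+s\bm v)'\,ds\;\bm v .
\]
Since $\|\bm v\|\le|d|+\|b\|$, it follows that
\[
\frac{|\mathcal R_{ni}(d,b)|}{|d|+\|b\|}\le\sup_{(\alpha,\bm\beta)\in\Theta_i(\eta_n)}\|\bm\Delta_{ni}(\alpha,\bm\beta)\|.
\]
This bound also removes the apparent difficulty of dividing by a vanishing denominator. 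It therefore suffices to prove $\max_{i\le n}\sup_{\Theta_i(\eta_n)}\|\bm\Delta_{ni}\|=o_p(1)$. For large $n$ we have $\eta_n\le\varepsilon_0$, so the sup is over a subset of $\Theta_i(\varepsilon_0)$, where (A6) applies.

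Next I would bound each coordinate of $\bm\Delta_{ni}$ with the same blocking and maximal-inequality device used for \eqref{eq:hat-E}, namely Propositions C.1–C.2 of \citet{GalvaoKato16} with $q=\lfloor\sqrt T\rfloor$, $r=\lfloor T/(2q)\rfloor$ and $s=2\log n$. The indexing class
\[
\bigl\{K((\alpha+\bm x'\bm\beta-y)/h_n)\,x_\ell:(\alpha,\bm\beta)\in\Theta_i(\varepsilon_0)\bigr\}
\]
is VC-type, because $K$ has bounded variation and the argument is linear in the parameters. Its envelope is $C/h_n$ after the factor $h_n^{-1}$ is included. The key input is a long-run variance bound of order $h_n^{-1}$, uniform in $i$ and in the parameter:
\begin{itemize}
\item The marginal variance is at most $\E K_{h_n}^2\le Ch_n^{-1}$, by the bounded density in (A4)/(A6) and the bounded support of $K$.
\item For lags $j\ge1$, the small-ball condition in (A6) gives $|\E[K_{h_n}(\cdot)_{i1}K_{h_n}(\cdot)_{i,1+j}]|\le Ch_n^{-2}P(|u_{i1}(\cdot)|\le Ch_n,|u_{i,1+j}(\cdot)|\le Ch_n)\le C$. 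Together with the mean bound $O(1)$, the covariances are $O(1)$.
\item The $\beta$-mixing covariance inequality gives the alternative bound $Ch_n^{-2}a^{j/2}$.
\item Splitting at $J_n\asymp\log(1/h_n)$, as in the proof of Lemma \ref{lem:smoothing}, bounds the covariance sum by $O(\log(1/h_n))+O(h_n^{-1})$.
\end{itemize}
The Bernstein-type bound then yields, with probability at least $1-2n^{-2}-2r\sup_i\beta_i(q)$,
\[
\sup_{\Theta_i(\varepsilon_0)}|\Delta_{ni,\ell}|\le C\Bigl\{\sqrt{\frac{\log(nT)}{Th_n}}+\frac{\log(nT)}{\sqrt T\,h_n}\Bigr\}.
\]
A union bound over $i\le n$ and the finitely many coordinates then controls the maximum, exactly as before.

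Finally I would verify the rates. (A9) gives $h_n\asymp T^{-c}$ with $c<1/3$, so $\sqrt T h_n\gtrsim T^{1/6}$ and $Th_n\gtrsim T^{2/3}$. Since $n/T\to0$, we have $\log(nT)\le2\log T$ eventually. Both terms in the display are therefore $O(\log T/T^{1/6})=o(1)$. The mixing remainder $nr\,a^{\sqrt T}$ vanishes as in the proof of Proposition 4.1.

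I expect the main obstacle to be the variance computation in the second step. The kernel class has an envelope that blows up like $h_n^{-1}$, and naive mixing bounds on the covariances would lose powers of $h_n$. Moreover, for dynamic designs one cannot assume a bounded joint conditional density at two dates. This is precisely where the pairwise anti-concentration condition in (A6) has to be used, uniformly over parameter values in $\Theta_i(\varepsilon_0)$ rather than only at the truth, to get $O(h_n^{-1})$ instead of $O(h_n^{-2})$. If the Galvao–Kato maximal inequality requires a variance bound for the whole class, I would take the supremum of this pointwise bound over the class, which is uniform by construction.
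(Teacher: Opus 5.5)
Your proposal is correct and follows essentially the same route as the paper. You use the mean-value (here, integral) identity to reduce the ratio to $\max_{i}\sup\|\dot{\widehat A}_i-\dot A_i\|$ over the shrinking neighborhood, and then control that term by a blocked maximal inequality for kernel averages, with (A6) supplying the $O(1)$ lag covariances. The only difference is that you work out the long-run variance and Bernstein-bound calculation explicitly instead of citing Lemma B.2 of Galvao and Kato. This gives you an extra $\log(nT)/(\sqrt{T}h_n)$ term beyond the paper's $\{\log(nT)/(Th_n)\}^{1/2}$ rate, but that term is still $o(1)$ under (A9).
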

\begin{proof}
Let $\theta_{i}=(\alpha_{i},\bm{\beta}')'$, $\theta_{i0}=(\alpha_{i0},\bm{\beta}_{0}')'$,
and write $v_{it}=(1,\bm{x}_{it}')'$. For fixed $i$, define $\dot{\widehat{A}}_{i}(\theta_{i})=\partial \widehat{A}_{i}(\theta_{i})/\partial\theta_{i}$
and $\dot{A}_{i}(\theta_{i})=\partial A_{i}(\theta_{i})/\partial\theta_{i}$.
Since the score is smoothed, $\widehat{A}_{i}$ is continuously differentiable
in $\theta_{i}$. For $h_{i}=(d,b')'$, the mean-value identity gives
\[
|\mathcal R_{ni}(d,b)|
\le (|d|+\|b\|)
\sup_{\|\theta_i-\theta_{i0}\|\le\eta_n}
\|\dot{\widehat{A}}_{i}(\theta_i)-\dot A_i(\theta_i)\|.
\]
Hence it is enough to show $\max_{1\le i\le n}\sup_{\|\theta_{i}-\theta_{i0}\|\le\eta_{n}}\|\dot{\widehat{A}}_{i}(\theta_{i})-\dot{A}_{i}(\theta_{i})\|=o_{p}(1).$

We prove this componentwise. Each component of $\dot{\widehat{A}}_{i}(\theta_{i})$
is a kernel average of the form $\frac{1}{T}\sum_{t=1}^{T}\psi_{it}(\theta_{i}),\ \psi_{it}(\theta_{i})=K_{h_{n}}(y_{it}-\alpha_{i}-\bm{x}_{it}'\bm{\beta})c_{it},$
up to terms of the same order, where $K_{h_{n}}(u)=h_{n}^{-1}K(u/h_{n})$
and $c_{it}$ is one component of $v_{it}$ or a bounded product involving
$v_{it}$. Assumption (A6) replaces the bounded joint conditional density
condition in \citet{GalvaoKato16} with a two-time local concentration
bound. Together with compact kernel support and bounded multipliers,
this condition supplies the covariance bounds needed for the local
kernel averages. We can therefore follow the uniform kernel argument in their Lemma B.2 with \(j=0\) to obtain
\[
\max_{1\le i\le n}\sup_{\|\theta_{i}-\theta_{i0}\|\le\eta_{n}}\left\Vert \dot{\widehat{A}}_{i}(\theta_{i})-\dot{A}_{i}(\theta_{i})\right\Vert =O_{p}\left\{ \left(\frac{\log(nT)}{Th_{n}}\right)^{1/2}\right\}.
\]
Since $n/T\to0$, we have $n\le T$ eventually. Hence, with $h_{n}\asymp T^{-c}$,
$c<1/3<1/2$, and $\frac{\log T}{\sqrt{T}}\le1$, one can deduce that
$\frac{\log(nT)}{Th_{n}}\le C\frac{\log T}{\sqrt{T}}\frac{1}{T^{1/2-c}}\to0.$
Therefore the above bound is $o_{p}(1)$.

\end{proof}
\begin{lemma} \label{lem:alpha_max} Under the assumptions of Theorem
4.1
{}, uniformly for $\|\bm{\beta}-\bm{\beta}_{0}\|=o_p(1)$, 
\[
\max_{1\le i\le n}|\widehat{\alpha}_{i}(\bm{\beta})-\alpha_{i0}|=O_{p}\left(\sqrt{\frac{\log(nT)}{T}}\right)+O_{p}(\|\bm{\beta}-\bm{\beta}_{0}\|)+O(h_{n}^{r}).
\]
\end{lemma}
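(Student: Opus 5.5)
We exploit the defining equation $\widehat A_i(\widehat\alpha_i(\bm\beta),\bm\beta)=0$ and a first-order expansion of the population smoothed moment $A_i$ around $(\alpha_{i0},\bm\beta_0)$. Write $d_i=\widehat\alpha_i(\bm\beta)-\alpha_{i0}$ and $b=\bm\beta-\bm\beta_0$. By Proposition 4.1, uniformly over $i$ and $\bm\beta$ we have $|\widehat\alpha_i(\bm\beta)-\alpha_i(\bm\beta)|=o_p(1)$. Uniform continuity of $\alpha_i(\cdot)$ at $\bm\beta_0$, established at the end of that proof, then gives $\max_i|d_i|=o_p(1)$ whenever $\|b\|=o_p(1)$. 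Hence there is a deterministic $\eta_n\downarrow0$ with $\max_i(|d_i|+\|b\|)\le\eta_n$ with probability approaching one, so Lemma \ref{lem:local_empirical} applies.

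Next we decompose
\[
0=\widehat A_i(\alpha_{i0}+d_i,\bm\beta_0+b)=\widehat A_i(\alpha_{i0},\bm\beta_0)+\{A_i(\alpha_{i0}+d_i,\bm\beta_0+b)-A_i(\alpha_{i0},\bm\beta_0)\}+\mathcal R_{ni}(d_i,b).
\]
\emph{Center term.} We have $\widehat A_i(\alpha_{i0},\bm\beta_0)=\{\widehat A_i(\alpha_{i0},\bm\beta_0)-A_i(\alpha_{i0},\bm\beta_0)\}+A_i(\alpha_{i0},\bm\beta_0)$. The second piece is $O(h_n^r)$ uniformly in $i$: apply Lemma \ref{lem:smoothing} with $\bm b_{it}=1$ and use \eqref{eqn:alpha-moments}. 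The first piece is a centered bounded $\beta$-mixing average at a single point. Using the blocking/Bernstein bound (Propositions C.1–C.2 of \citet{GalvaoKato16}, as in the proof of \eqref{eq:hat-E}, now with deviation level $C\sqrt{\log(nT)/T}$ and $O(1)$ variance) together with a union bound over $i\le n$, it is $O_p(\sqrt{\log(nT)/T})$ uniformly in $i$.

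\emph{Population increment.} $A_i$ is continuously differentiable, with $\partial A_i/\partial\alpha=\E[K_{h_n}(\alpha+\bm x_{it}'\bm\beta-y_{it})]$. Its value at $\theta_{i0}$ is $f_i(0)+O(h_n^r)$, and by the bounded density derivatives (A6) it is Lipschitz in $\theta_i$ uniformly in $i$. Its $\bm\beta$-derivative is bounded by $CM_1$. The mean value theorem then gives
\[
A_i(\alpha_{i0}+d_i,\bm\beta_0+b)-A_i(\alpha_{i0},\bm\beta_0)=\{f_i(0)+o_p(1)\}d_i+O(\|b\|)
\]
uniformly in $i$. By (A4), $f_i(0)\ge \underline f>0$ uniformly in $i$.

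\emph{Remainder.} Lemma \ref{lem:local_empirical} gives $\max_i|\mathcal R_{ni}(d_i,b)|=o_p(1)(\max_i|d_i|+\|b\|)$.

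Combining the three pieces and solving for $d_i$ gives
\[
(\underline f-o_p(1))\max_i|d_i|\le O_p(\sqrt{\log(nT)/T})+O(h_n^r)+O_p(\|b\|)+o_p(1)\max_i|d_i|.
\]
Absorbing the $o_p(1)\max_i|d_i|$ term into the left-hand side yields the claim.

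The main obstacle is the uniformity of the argument. The expansion must hold simultaneously over all $i\le n$ and over $\bm\beta$ in a shrinking (random) neighborhood. This is handled by working on the event $\{\max_i|d_i|+\|b\|\le\eta_n\}$, where Lemma \ref{lem:local_empirical} is uniform. In addition, the derivative $\partial A_i/\partial\alpha$ must be bounded below uniformly, which requires that the $O(h_n^r)$ smoothing error in $f_i(0)$ and the Lipschitz drift are both negligible relative to the bound in (A4).
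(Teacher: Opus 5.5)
Your proof is correct and follows the paper's argument essentially step for step. You use the same three-way split of $\widehat A_i(\widehat\alpha_i(\bm\beta),\bm\beta)=0$ (centered term at $(\alpha_{i0},\bm\beta_0)$, population increment, local remainder from Lemma \ref{lem:local_empirical}) with preliminary $o_p(1)$ consistency from Proposition \ref{prop1}, and your mean-value expansion and appeal to uniform continuity of $\alpha_i(\cdot)$ differ only cosmetically from the paper's second-order expansion and Lipschitz bound on $\alpha_i(\bm\beta)-\alpha_{i0}$, though to get $O_p(\sqrt{\log(nT)/T})$ for the centered term after the union bound you need blocks of length of order $\log(nT)$, not the $\lfloor\sqrt T\rfloor$ blocks of the consistency proof, which give only $\log(nT)/\sqrt T$.
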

\begin{proof}
Let $\Delta_{\beta}=\bm{\beta}-\bm{\beta}_{0}$ and $\widehat{\Delta}_{i}=\widehat{\alpha}_{i}(\bm{\beta})-\alpha_{i0}$.
Since $\widehat{A}_{i}(\widehat{\alpha}_{i}(\bm{\beta}),\bm{\beta})=0$, we
have 
\begin{align}
0={}&\widehat{A}_{i}(\alpha_{i0},\bm\beta_0)
+A_i(\alpha_{i0}+\widehat\Delta_i,\bm\beta_0+\Delta_\beta)
-A_i(\alpha_{i0},\bm\beta_0)+R_{Ai}.
\label{eq: lemma A2-2}
\end{align}
where 
\begin{align*}
R_{Ai}={}&\widehat{A}_{i}(\alpha_{i0}+\widehat\Delta_i,\bm\beta_0+\Delta_\beta)
-\widehat{A}_{i}(\alpha_{i0},\bm\beta_0)-A_i(\alpha_{i0}+\widehat\Delta_i,\bm\beta_0+\Delta_\beta)
+A_i(\alpha_{i0},\bm\beta_0).
\end{align*}
Consider the second term on the right hand side of \eqref{eq: lemma A2-2},
by Lemma \ref{lem:smoothing} and the conditional density expansion,
uniformly in $i$, 
\begin{align}
A_i(\alpha_{i0}+d,\bm\beta_0+\Delta_\beta)-A_i(\alpha_{i0},\bm\beta_0)
={}&f_i(0)d+\E[f_i(0\mid\bm q_{it})\bm x_{it}']\Delta_\beta\nonumber\\
&+O\{(|d|+\|\Delta_\beta\|)^2\}+O(h_n^r).
\label{eq: lemmaA2 A_i}
\end{align}
Now for the first term on the right hand side of \eqref{eq: lemma A2-2}.
Also, Lemma \ref{lem:smoothing} gives
\[
\E[\widehat{A}_{i}(\alpha_{i0},\bm\beta_0)]=O(h_n^r).
\]
Since the summand is bounded and the time series is exponentially
mixing, a Bernstein inequality for beta-mixing sequences and a union
bound over $i=1,\ldots,n$ give 
\[
\max_{1\le i\le n}|\widehat{A}_{i}(\alpha_{i0},\bm{\beta}_{0})-\E[\widehat{A}_{i}(\alpha_{i0},\bm{\beta}_{0})]|=O_{p}\left(\sqrt{\frac{\log(nT)}{T}}\right).
\]
Hence, one can deduce that $\max_{1\le i\le n}|\widehat{A}_{i}(\alpha_{i0},\bm{\beta}_{0})|=O_{p}\left(\sqrt{\frac{\log(nT)}{T}}\right)+O(h_{n}^{r}).$

It remains to control the local empirical remainder $R_{Ai}$ on the
right hand side of \eqref{eq: lemma A2-2}. Applying Lemma \ref{lem:local_empirical}, established above, with $\max_i|\widehat{\Delta}_i|=o_p(1)$  
yields that $\max_{1\le i\le n}|R_{Ai}|=o_{p}\left(\max_{1\le i\le n}|\widehat{\Delta}_{i}|+\|\Delta_{\beta}\|\right).$

Combining the results above at $d=\widehat{\Delta}_i$ with the fact that $\inf_{i}f_{i}(0)>0$,
we obtain 
\begin{align}
\max_i|\widehat\Delta_i|
\le{}&C\max_i|\widehat{A}_{i}(\alpha_{i0},\bm\beta_0)|+C\|\Delta_\beta\|
+C\max_i|\widehat\Delta_i|^2\nonumber \\
&+C\|\Delta_\beta\|^2+Ch_n^r
+o_p\!\left(\max_i|\widehat\Delta_i|+\|\Delta_\beta\|\right).\label{eq: deltai}
\end{align}
We first verify that $\max_i|\widehat{\Delta}_i|=o_p(1)$. By the uniform profiling consistency established in \eqref{eq: alphahat}, we have
\(
\max_{i\le n}\sup_{\bm \beta\in\mathcal B}
|\widehat\alpha_i(\bm \beta)-\alpha_i(\bm \beta)|=o_p(1).
\)
It remains to control the population profile. Let
$d_i=\alpha_i(\bm\beta)-\alpha_{i0}$. Since
$\alpha_i(\bm\beta)$ solves the population fixed-effect equation and
$\alpha_i(\bm\beta_0)=\alpha_{i0}$, we have
\[
\E\left[
F_i(d_i+\bm x_{it}'\Delta_\beta\mid\bm q_{it})
-
F_i(0\mid\bm q_{it})
\right]=\tau-\tau=0,
\]
where $F_i(\cdot\mid\bm q)$ is the conditional distribution function of
$u_{it}$ given $\bm q_{it}$. Moreover, by Assumptions (A2), (A5), (A6), one can deduce that $\max_i|d_i|\to0$. Hence, uniformly
over $i$,
\[
\E\left[
F_i(d_i+\bm x_{it}'\Delta_\beta\mid\bm q_{it})
-
F_i(0\mid\bm q_{it})
\right]
=
f_i(0)d_i
+
\E[f_i(0\mid\bm q_{it})\bm x_{it}']\Delta_\beta
+
o(|d_i|+\|\Delta_\beta\|).
\]
Furthermore, the same expansion and the uniform lower bound on $f_i(0)$ imply
that 
\(
|d_i|
\le
C\|\Delta_\beta\|
+
o(|d_i|+\|\Delta_\beta\|),
\)
uniformly over $i$. One can deduce that $\max_i|\alpha_i(\bm\beta)-\alpha_{i0}|
\le C\|\Delta_\beta\|$, and therefore given $\|\bm{\beta}-\bm{\beta}_{0}\|=o_p(1)$,
\[
\max_i|\widehat{\Delta}_i|
\le
\max_{i\le n}\sup_{\bm \beta\in\mathcal B}
|\widehat\alpha_i(\bm \beta)-\alpha_i(\bm \beta)|
+
\max_i|\alpha_i(\bm\beta)-\alpha_{i0}|
=
o_p(1)+O(\|\Delta_\beta\|)
=o_p(1).
\]
Since $\max_i|\widehat{\Delta}_i|=o_p(1)$ and $\|\Delta_\beta\|=o_p(1)$,
the quadratic terms and the $o_p(\cdot)$ term in \eqref{eq: deltai} can be
absorbed into the left-hand side. Therefore, we have
\[
\max_{i}|\widehat{\Delta}_{i}|=O_{p}\left(\sqrt{\frac{\log(nT)}{T}}\right)+O_{p}(\|\Delta_{\beta}\|)+O(h_{n}^{r}).
\]
\end{proof}

\begin{lemma}[Local uniqueness and differentiability of the profile]
\label{lem:profile_unique}
Suppose that the moment restrictions
(2)
{}-(3)
{} and Assumptions
(A1)-(A10) hold, $n/T\to0$, and let \(\eta_n\to0\). Then, with
probability approaching one, uniformly over \(i\le n\) and
\(\|\bm\beta-\bm\beta_0\|\le\eta_n\), the sample fixed-effect equation
\(
\widehat{A}_{i}(\alpha,\bm\beta)=0
\)
has a unique solution in an \(o(1)\) neighborhood of \(\alpha_{i0}\).
Every solution in \(\mathcal A\) belongs to this neighborhood. Moreover,
the resulting profile \(\widehat\alpha_i(\bm\beta)\) is continuously
differentiable. Write
\(
\widehat v_{it}(\bm\beta)
=\widehat\alpha_i(\bm\beta)+\bm x_{it}'\bm\beta-y_{it}.
\)
Then
\[
\frac{\partial\widehat\alpha_i(\bm\beta)}
{\partial\bm\beta'}
=
-
\left\{
\frac1T\sum_{t=1}^T
K_{h_n}(\widehat v_{it}(\bm\beta))
\right\}^{-1}
\left\{
\frac1T\sum_{t=1}^T
K_{h_n}(\widehat v_{it}(\bm\beta))\bm x_{it}'
\right\}.
\]
In addition,
\[
\inf_{i\le n}
\inf_{\|\bm\beta-\bm\beta_0\|\le\eta_n}
\frac1T\sum_{t=1}^T
K_{h_n}
\bigl(
\widehat\alpha_i(\bm\beta)
+\bm x_{it}'\bm\beta-y_{it}
\bigr)
\ge \frac{\underline f}{2},
\]
with probability approaching one, where
\(\underline f=\inf_i f_i(0)>0\).
\end{lemma}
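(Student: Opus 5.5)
The plan has three parts: first localize every root, then show that the derivative of the sample equation stays uniformly positive on the localized region, and finally obtain uniqueness and differentiability from strict monotonicity and the implicit function theorem.

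\emph{Step 1: every root is local.} Every root of \(\widehat A_i(\cdot,\bm\beta)=0\) in \(\mathcal A\) falls in a vanishing neighborhood of \(\alpha_{i0}\). The uniform profile consistency in \eqref{eq: alphahat} holds for every measurable selection from \(\mathcal R_{ni}(\bm\beta)\), as stated in Proposition \ref{prop1}. Applying it to the selections of largest and smallest root gives
\[
\max_{i\le n}\sup_{\bm\beta}\sup_{\alpha\in\mathcal R_{ni}(\bm\beta)}|\alpha-\alpha_i(\bm\beta)|=o_p(1).
\]
The population profile satisfies \(\max_i|\alpha_i(\bm\beta)-\alpha_{i0}|\le C\|\bm\beta-\bm\beta_0\|\), as shown in the proof of Lemma \ref{lem:alpha_max}. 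Hence there is a deterministic \(\epsilon_n\downarrow0\) such that, with probability approaching one, every root lies in \(I_i=[\alpha_{i0}-\epsilon_n,\alpha_{i0}+\epsilon_n]\) for all \(i\le n\) and all \(\|\bm\beta-\bm\beta_0\|\le\eta_n\). Existence of at least one root is already given by Lemma \ref{lem:profile_existence}.

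\emph{Step 2: uniform positivity of the derivative.} The key quantity is
\[
\partial_\alpha\widehat A_i(\alpha,\bm\beta)=\frac1T\sum_{t=1}^T K_{h_n}(\alpha+\bm x_{it}'\bm\beta-y_{it}),
\]
which I will control uniformly over the local set \(|\alpha-\alpha_{i0}|+\|\bm\beta-\bm\beta_0\|\le\epsilon_n+\eta_n\).
\begin{itemize}
\item \emph{Stochastic part.} The proof of Lemma \ref{lem:local_empirical} gives the uniform kernel bound
\[
\max_i\sup\|\dot{\widehat A}_i-\dot A_i\|=O_p\bigl(\{\log(nT)/(Th_n)\}^{1/2}\bigr)=o_p(1).
\]
This bound relies on (A6), (A8), (A9) and \(n/T\to0\).
\item \emph{Mean part.} Condition on \(\bm q_{it}\) and substitute \(u=h_nv\). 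This gives
\[
\partial_\alpha A_i(\alpha,\bm\beta)=\E\Bigl[\int K(v)f_i\bigl(c_{it}(\alpha,\bm\beta)-h_nv\mid\bm q_{it}\bigr)\,dv\Bigr].
\]
Since \(f_i\) has uniformly bounded derivatives and \(\int K=1\), this equals \(f_i(0)+O(\epsilon_n+\eta_n+h_n)\) uniformly in \(i\).
\end{itemize}
Combining the two parts, with probability approaching one the derivative is at least \(\underline f-o_p(1)\ge\underline f/2\) on the whole local set. This already yields the displayed lower bound, once uniqueness lets us evaluate it at \(\alpha=\widehat\alpha_i(\bm\beta)\).

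\emph{Step 3: uniqueness and differentiability.} Because the derivative is uniformly positive, \(\alpha\mapsto\widehat A_i(\alpha,\bm\beta)\) is strictly increasing on \(I_i\). So it has at most one zero there, and by Step 1 no zeros exist outside \(I_i\); the root is therefore unique. In particular the criterion-minimizing profile coincides with this root. Since \(K\) is continuous, \(\widehat A_i\) is \(C^1\) jointly in \((\alpha,\bm\beta)\) with a nonvanishing \(\alpha\)-derivative. The implicit function theorem then gives local \(C^1\) solutions, and uniqueness patches them into a single global \(C^1\) function on the ball \(\|\bm\beta-\bm\beta_0\|\le\eta_n\). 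Differentiating the identity \(\widehat A_i(\widehat\alpha_i(\bm\beta),\bm\beta)=0\) yields the stated formula for \(\partial\widehat\alpha_i(\bm\beta)/\partial\bm\beta'\).

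\emph{Main obstacle.} The delicate step is Step 2's uniform-in-\(i\) kernel bound. It is needed over a neighborhood whose radius itself depends on the random root localization, and it must hold without nonnegativity of \(K\). I handle this by fixing the deterministic radius \(\epsilon_n+\eta_n\) in advance, so that Lemma \ref{lem:local_empirical}'s argument with a deterministic \(\eta_n\) applies directly.
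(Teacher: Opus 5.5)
Your proposal is correct and follows essentially the same route as the paper. Both arguments combine a uniform lower bound $\underline f/2$ on $\partial_\alpha\widehat A_i$ (from the kernel expansion of its mean and the uniform kernel law of large numbers), localization of every root near $\alpha_{i0}$, strict monotonicity for uniqueness, and the implicit function theorem for differentiability and the derivative formula. The differences are minor: you localize the roots directly through the all-selections consistency in Proposition~\ref{prop1} and use a cruder $O(\epsilon_n+\eta_n+h_n)$ mean expansion, whereas the paper reruns the argument of Lemma~\ref{lem:alpha_max} for every solution, and that argument itself rests on the same consistency result.
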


\begin{proof}[Proof of Lemma \ref{lem:profile_unique}]
For \(d=\alpha-\alpha_{i0}\) and
\(\Delta_\beta=\bm\beta-\bm\beta_0\),
we have \(
\frac{\partial \widehat{A}_{i}(\alpha,\bm\beta)}{\partial\alpha}
=
\frac1T\sum_{t=1}^T
K_{h_n}
\bigl(d+\bm x_{it}'\Delta_\beta-u_{it}\bigr).
\)
The population counterpart satisfies, uniformly in \(i\),
\[
\begin{aligned}
\E\left[
K_{h_n}
\bigl(d+\bm x_{it}'\Delta_\beta-u_{it}\bigr)
\right]
&=
\E\left[
\int K(v)
f_i
\bigl(
d+\bm x_{it}'\Delta_\beta-h_nv
\mid\bm q_{it}
\bigr)\,dv
\right] \\
&=
\E\left[
f_i
\bigl(
d+\bm x_{it}'\Delta_\beta
\mid\bm q_{it}
\bigr)
\right]
+O(h_n^r) \\
&=
f_i(0)
+
O\bigl(
|d|+\|\Delta_\beta\|+h_n^r
\bigr).
\end{aligned}
\]
Since \(\inf_i f_i(0)=\underline f>0\), there exists a neighborhood of
\((\alpha_{i0},\bm\beta_0)\) on which the population derivative is
bounded below by \(3\underline f/4\), uniformly in \(i\). The uniform
kernel law of large numbers gives
\[
\sup_{i\le n}
\sup_{\substack{|d|\le\delta_n\\
\|\Delta_\beta\|\le\eta_n}}
\left|
\frac{\partial \widehat{A}_{i}
(\alpha_{i0}+d,\bm\beta_0+\Delta_\beta)}
{\partial\alpha}
-
\E\left[
\frac{\partial \widehat{A}_{i}
(\alpha_{i0}+d,\bm\beta_0+\Delta_\beta)}
{\partial\alpha}
\right]
\right|
=o_p(1),
\]
for any sufficiently slowly vanishing sequence \(\delta_n\to0\).
Consequently,
\[
\inf_{i\le n}
\inf_{\substack{|d|\le\delta_n\\
\|\Delta_\beta\|\le\eta_n}}
\frac{\partial \widehat{A}_{i}
(\alpha_{i0}+d,\bm\beta_0+\Delta_\beta)}
{\partial\alpha}
\ge
\frac{\underline f}{2},
\]
with probability approaching one.

The proof of Lemma \ref{lem:alpha_max} applies to every solution of the
sample profiling equation and implies that, uniformly over
\(\|\bm\beta-\bm\beta_0\|\le\eta_n\), every such solution satisfies
\[
\max_{i\le n}
|\widehat\alpha_i(\bm\beta)-\alpha_{i0}|
=
O_p\left(\sqrt{\frac{\log(nT)}{T}}\right)
+
O(\eta_n)
+
O(h_n^r)
=o_p(1).
\]
Thus every solution lies, with probability approaching one, in the
neighborhood on which the derivative is strictly positive. The map
\(\alpha\mapsto \widehat{A}_{i}(\alpha,\bm\beta)\) is therefore strictly
increasing on a set containing every possible solution, which implies
uniqueness.

Finally, \(\widehat{A}_{i}(\alpha,\bm\beta)\) is continuously differentiable
because \(K\) is continuously differentiable, and its derivative with
respect to \(\alpha\) is bounded away from zero at the unique solution.
The implicit function theorem therefore yields continuous
differentiability of \(\widehat\alpha_i(\bm\beta)\) and the stated
derivative formula.
\end{proof}

\begin{lemma} \label{lem:alpha_profile} Under the assumptions of Theorem
4.1
{}, the following bounds hold uniformly for
\(\|\bm{\beta}-\bm{\beta}_{0}\|=O_{p}((nT)^{-1/2})\),
\begin{align*}
 & \frac{1}{n}\sum_{i=1}^{n}|\widehat{\alpha}_{i}(\bm{\beta})-\alpha_{i0}|
 =O_{p}(T^{-1/2})+O_{p}(\|\bm{\beta}-\bm{\beta}_{0}\|),\\
 & \frac{1}{n}\sum_{i=1}^{n}|\widehat{\alpha}_{i}(\bm{\beta})-\alpha_{i0}|^{2}
 =O_{p}(T^{-1})+O_{p}(\|\bm{\beta}-\bm{\beta}_{0}\|^{2}).
\end{align*}
\end{lemma}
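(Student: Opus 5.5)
The plan is to reuse the expansion \eqref{eq: lemma A2-2} from the proof of Lemma \ref{lem:alpha_max}, but to average over $i$ instead of taking a maximum. This removes the $\sqrt{\log(nT)}$ factor.

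Write $\widehat\Delta_i=\widehat\alpha_i(\bm\beta)-\alpha_{i0}$ and $\Delta_\beta=\bm\beta-\bm\beta_0$. By Lemma \ref{lem:profile_unique}, with probability approaching one the profile is the unique root near $\alpha_{i0}$. Combining \eqref{eq: lemma A2-2} with \eqref{eq: lemmaA2 A_i} gives, uniformly in $i$,
\[
f_i(0)\widehat\Delta_i=-\widehat A_i(\alpha_{i0},\bm\beta_0)-\E[f_i(0\mid\bm q_{it})\bm x_{it}']\Delta_\beta+O(\widehat\Delta_i^2+\|\Delta_\beta\|^2)+O(h_n^r)-R_{Ai}.
\]
Since $\inf_i f_i(0)\ge\underline f>0$, this yields
\[
|\widehat\Delta_i|\le C\bigl\{|\widehat A_i(\alpha_{i0},\bm\beta_0)|+\|\Delta_\beta\|+\widehat\Delta_i^2+h_n^r+|R_{Ai}|\bigr\}.
\]

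The remaining terms are handled as follows.

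\textbf{Remainder.} By Lemma \ref{lem:local_empirical}, $\max_i|R_{Ai}|=o_p(\max_i|\widehat\Delta_i|+\|\Delta_\beta\|)$. Pointwise this is not good enough, because it involves the maximum, which carries a log factor. A cleaner route is to apply the mean-value form used in that lemma's proof at the individual level: $|R_{Ai}|\le(|\widehat\Delta_i|+\|\Delta_\beta\|)\,\epsilon_n$ with $\epsilon_n=\max_i\sup\|\dot{\widehat A}_i-\dot A_i\|=o_p(1)$. This term is then absorbed into the left side individually.

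\textbf{Quadratic term.} By Lemma \ref{lem:alpha_max}, $\max_i|\widehat\Delta_i|=o_p(1)$, so $C\widehat\Delta_i^2\le\tfrac12|\widehat\Delta_i|$ uniformly with probability approaching one, and it is absorbed too. Hence
\[
|\widehat\Delta_i|\le C\bigl\{|\widehat A_i(\alpha_{i0},\bm\beta_0)|+\|\Delta_\beta\|+h_n^r\bigr\}
\]
uniformly in $i$.

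\textbf{Averaging the first moment.} Average over $i$ and use $\frac1n\sum_i|\widehat A_i(\alpha_{i0},\bm\beta_0)|\le\bigl(\frac1n\sum_i\widehat A_i(\alpha_{i0},\bm\beta_0)^2\bigr)^{1/2}$. Write
\[
\widehat A_i(\alpha_{i0},\bm\beta_0)=\bigl(\widehat A_i-\E\widehat A_i\bigr)+O(h_n^r),
\]
using the bias bound in Lemma \ref{lem:smoothing} with $\bm b_{it}=1$.

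\textbf{Variance bound.} The centered part is an average of bounded, geometrically $\beta$-mixing terms. By Yoshihara's covariance inequality, as in the proof of Proposition \ref{prop1}, $\E[(\widehat A_i-\E\widehat A_i)^2]\le C/T$ uniformly in $i$. Markov's inequality then gives $\frac1n\sum_i\widehat A_i(\alpha_{i0},\bm\beta_0)^2=O_p(T^{-1})+O(h_n^{2r})$.

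\textbf{Smoothing bias.} Finally, $h_n^r=O(T^{-rc})$ with $rc>1$ by (A9), so $h_n^r=o(T^{-1/2})$. This proves both displays: for the second, square the pointwise bound before averaging, using $(a+b+c)^2\le3(a^2+b^2+c^2)$.

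\textbf{Main obstacle.} The delicate step is the pointwise (rather than max-based) handling of $R_{Ai}$. It requires that the derivative-uniform bound from Lemma \ref{lem:local_empirical} hold on a neighbourhood that contains all $(\widehat\alpha_i(\bm\beta),\bm\beta)$ simultaneously. This holds because Lemma \ref{lem:alpha_max} gives $\max_i|\widehat\Delta_i|\to0$, so one can choose $\eta_n\downarrow0$ slowly enough that the event has probability approaching one.

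Uniformity over $\|\Delta_\beta\|=O_p((nT)^{-1/2})$ is automatic: $\widehat A_i(\alpha_{i0},\bm\beta_0)$ does not depend on $\bm\beta$, and all other constants are uniform.
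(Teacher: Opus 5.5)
Your proposal is correct and follows essentially the same argument as the paper. Both use the expansion from Lemma \ref{lem:alpha_max}, the ratio bound of Lemma \ref{lem:local_empirical} for $R_{Ai}$, the $O(h_n^r)$ bias from Lemma \ref{lem:smoothing}, and a mixing-based $O(T^{-1})$ second-moment bound for $\widehat A_i(\alpha_{i0},\bm\beta_0)$.

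The only difference is the order of steps. You absorb the quadratic and remainder terms for each $i$ to get a bound uniform in $i$, and then average or square it. The paper averages first, so it needs a separate squared-expansion step (controlling $n^{-1}\sum_i R_{Pi}^2$ and $n^{-1}\sum_i R_{Ai}^2$) followed by back-substitution. Your ordering is slightly cleaner. Your worry about a log factor is unnecessary, because Lemma \ref{lem:local_empirical} is already a per-$i$ ratio bound that holds uniformly in $i$.
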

\begin{proof}
Let $\widehat{\Delta}_{i}=\widehat{\alpha}_{i}(\bm{\beta})-\alpha_{i0}$
and $\Delta_{\beta}=\bm{\beta}-\bm{\beta}_{0}$.
Recall that $\widehat{A}_{i}(\widehat{\alpha}_{i}(\bm{\beta}),\bm{\beta})=0$,
\begin{align*}
0={}&\widehat{A}_{i}(\alpha_{i0},\bm\beta_0)
+A_i(\alpha_{i0}+\widehat\Delta_i,\bm\beta_0+\Delta_\beta)-A_i(\alpha_{i0},\bm\beta_0)+R_{Ai},
\end{align*}
where 
\begin{equation}
R_{Ai}=\widehat{A}_{i}(\alpha_{i0}+\widehat\Delta_i,\bm\beta_0+\Delta_\beta)
-\widehat{A}_{i}(\alpha_{i0},\bm\beta_0)-A_i(\alpha_{i0}+\widehat\Delta_i,\bm\beta_0+\Delta_\beta)
+A_i(\alpha_{i0},\bm\beta_0).
\label{eq: RAAi}
\end{equation}

We first expand the population term. Since $\E[F_{i}(0\mid\bm{q}_{it})]=\tau$,
a Taylor expansion gives, uniformly in $i$, 
\begin{align*}
\left.\E\{1(u_{it}\le d+\bm x_{it}'b)-\tau\}\right|_{(d,b)=(\widehat\Delta_i,\Delta_\beta)}
={}&f_i(0)\widehat\Delta_i
+\E[f_i(0\mid\bm q_{it})\bm x_{it}']\Delta_\beta+O\{(|\widehat\Delta_i|+\|\Delta_\beta\|)^2\}.
\end{align*}
By Lemma \ref{lem:smoothing}, the difference between the smoothed
and unsmoothed population moments is $O(h_{n}^{r})$, uniformly in
$i$. Hence 
\begin{equation}
\label{eq: Rpi}
A_i(\alpha_{i0}+\widehat\Delta_i,\bm\beta_0+\Delta_\beta)
-A_i(\alpha_{i0},\bm\beta_0)
=f_i(0)\widehat\Delta_i
+\E[f_i(0\mid\bm q_{it})\bm x_{it}']\Delta_\beta+R_{Pi}.
\end{equation}
where $|R_{Pi}|\le C\{|\widehat{\Delta}_{i}|+\|\Delta_{\beta}\|\}^{2}+Ch_{n}^{r}.$

We now bound $R_{Ai}$. By Lemma \ref{lem:alpha_max}, $\max_{i}|\widehat{\Delta}_{i}|=o_{p}(1)$.
On this event, the local empirical-process bound in Lemma \ref{lem:local_empirical},
applied now after averaging over $i$, yields
\[
\frac1n\sum_{i=1}^n|R_{Ai}|
\le o_p(\|\Delta_\beta\|)
+o_p\!\left(\frac1n\sum_{i=1}^n|\widehat\Delta_i|\right).
\]

Combining the population expansion with the score equation gives 
\[
\widehat{\Delta}_{i}=-s_{i}\widehat{A}_{i}(\alpha_{i0},\bm{\beta}_{0})-s_{i}\E[f_{i}(0\mid\bm{q}_{it})\bm{x}_{it}']\Delta_{\beta}-s_{i}R_{Pi}-s_{i}R_{Ai},
\]
where $s_{i}=\{f_{i}(0)\}^{-1}$. Since $s_{i}$ is uniformly bounded, 
\[
|\widehat{\Delta}_{i}|\le C|\widehat{A}_{i}(\alpha_{i0},\bm{\beta}_{0})|+C\|\Delta_{\beta}\|+C\widehat{\Delta}_{i}^{2}+C\|\Delta_{\beta}\|^{2}+Ch_{n}^{r}+C|R_{Ai}|.
\]
Averaging over $i$, using the preceding bound for $R_{Ai}$, and
absorbing $o_{p}(n^{-1}\sum_{i}|\widehat{\Delta}_{i}|)$ into the
left-hand side, gives 
\[
\frac{1}{n}\sum_{i=1}^{n}|\widehat{\Delta}_{i}|=O_{p}\left(\frac{1}{n}\sum_{i=1}^{n}|\widehat{A}_{i}(\alpha_{i0},\bm{\beta}_{0})|\right)+O_{p}(\|\Delta_{\beta}\|)+O_{p}\left(\frac{1}{n}\sum_{i=1}^{n}\widehat{\Delta}_{i}^{2}\right)+O(h_{n}^{r}).
\]
Next, by Lemma \ref{lem:smoothing}, $\E[\widehat{A}_{i}(\alpha_{i0},\bm{\beta}_{0})]=O(h_{n}^{r})$.
Boundedness, stationarity, and exponential beta-mixing imply $\E[\widehat{A}_{i}(\alpha_{i0},\bm{\beta}_{0})^{2}]=O(T^{-1})+O(h_{n}^{2r})$,
uniformly in $i$. Therefore 
\[
\frac{1}{n}\sum_{i=1}^{n}\widehat{A}_{i}(\alpha_{i0},\bm{\beta}_{0})^{2}=O_{p}(T^{-1})+O(h_{n}^{2r})=O_{p}(T^{-1}),
\]
because $Th_{n}^{r}\to0$. Moreover, $n^{-1}\sum_{i}|\widehat{A}_{i}(\alpha_{i0},\bm{\beta}_{0})|=O_{p}(T^{-1/2})$.

To obtain the second-moment bound, square the preceding expansion
for $\widehat{\Delta}_{i}$. Since $(a+b+c+d)^{2}\le C(a^{2}+b^{2}+c^{2}+d^{2})$,
\[
\frac{1}{n}\sum_{i=1}^{n}\widehat{\Delta}_{i}^{2}\le C\frac{1}{n}\sum_{i=1}^{n}\widehat{A}_{i}(\alpha_{i0},\bm{\beta}_{0})^{2}+C\|\Delta_{\beta}\|^{2}+C\frac{1}{n}\sum_{i=1}^{n}R_{Pi}^{2}+C\frac{1}{n}\sum_{i=1}^{n}R_{Ai}^{2}.
\]
Because $\max_{i}|\widehat{\Delta}_{i}|=o_{p}(1)$, 
\[
\frac{1}{n}\sum_{i=1}^{n}R_{Pi}^{2}\le\left(\max_{i}|R_{Pi}|\right)\frac{1}{n}\sum_{i=1}^{n}|R_{Pi}|=o_{p}\left(\frac{1}{n}\sum_{i}\widehat{\Delta}_{i}^{2}\right)+o_{p}(\|\Delta_{\beta}\|^{2})+o_{p}(T^{-1}).
\]
Similarly, by Lemma \ref{lem:local_empirical} and $\max_{i}|\widehat{\Delta}_{i}|+\|\Delta_{\beta}\|=o_{p}(1)$,
we have
\[
\frac1n\sum_{i=1}^nR_{Ai}^2
=o_p\!\left(\frac1n\sum_i\widehat\Delta_i^2\right)
+o_p(\|\Delta_\beta\|^2).
\]
Absorbing the two $o_{p}(n^{-1}\sum_{i}\widehat{\Delta}_{i}^{2})$
terms into the left-hand side yields 
\[
\frac{1}{n}\sum_{i=1}^{n}\widehat{\Delta}_{i}^{2}=O_{p}(T^{-1})+O_{p}(\|\Delta_{\beta}\|^{2})+o_{p}(T^{-1}).
\]
Substituting this second-moment bound into the previous first-moment
bound gives 
\[
\frac{1}{n}\sum_{i=1}^{n}|\widehat{\Delta}_{i}|=O_{p}(T^{-1/2})+O_{p}(\|\Delta_{\beta}\|).
\]
\end{proof}

\begin{lemma} \label{lem:profile_moment} Under the assumptions of Theorem
4.1
{}, 
\begin{align*}
\sqrt{nT}\,\widehat m_n(\bm\beta_0)
={}&\frac1{\sqrt{nT}}\sum_{i=1}^n\sum_{t=1}^T
(\bm z_{it}-\bm\delta_i)\{1(u_{it}\le0)-\tau\}+o_p(1),
\end{align*}
where $\bm{\delta}_{i}=s_{i}\E[f_{i}(0\mid\bm{q}_{it})\bm{z}_{it}].$
\end{lemma}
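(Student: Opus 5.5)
We expand $\widehat m_n(\bm\beta_0)=\widehat H_n(\widehat{\bm\alpha}(\bm\beta_0),\bm\beta_0)$ around $\bm\alpha_0$, individual by individual. Write $\widehat\Delta_i=\widehat\alpha_i(\bm\beta_0)-\alpha_{i0}$. By Lemma \ref{lem:profile_unique} the profile is unique near $\alpha_{i0}$, so a second-order Taylor expansion in $\alpha_i$ is legitimate. It gives
\[
\widehat H_{ni}(\widehat\alpha_i,\bm\beta_0)=\widehat H_{ni}(\alpha_{i0},\bm\beta_0)+\Bigl\{\tfrac1T\textstyle\sum_t \bm z_{it}K_{h_n}(-u_{it})\Bigr\}\widehat\Delta_i+\tfrac12\Bigl\{\tfrac1T\textstyle\sum_t\bm z_{it}K'_{h_n}(\tilde v_{it})h_n^{-1}\Bigr\}\widehat\Delta_i^2 .
\]
The same expansion applied to $\widehat A_i$ at $\bm\beta_0$ yields
\[
\widehat\Delta_i=-\widehat s_i\,\widehat A_i(\alpha_{i0},\bm\beta_0)+O(\widehat\Delta_i^2\,\text{curvature}),
\qquad
\widehat s_i^{-1}=\tfrac1T\textstyle\sum_tK_{h_n}(-u_{it}).
\]
We then split the first-order coefficient as its mean $\E[\bm z_{it}K_{h_n}(-u_{it})]=f_i(0)\bm\delta_i+O(h_n^r)$ plus a centered kernel average. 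Substituting, the leading part becomes
\[
\widehat H_{ni}(\alpha_{i0},\bm\beta_0)-\bm\delta_i\widehat A_i(\alpha_{i0},\bm\beta_0)=\tfrac1T\textstyle\sum_t(\bm z_{it}-\bm\delta_i)\{G_{h_n}(-u_{it})-\tau\}.
\]

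Step one handles this leading term. Multiply by $\sqrt{nT}$ and average over $i$. Replacing $G_{h_n}(-u_{it})$ by $1(u_{it}\le0)$ costs $o_p(1)$ by Lemma \ref{lem:smoothing}, applied with $\bm b_{it}=\bm z_{it}-\bm\delta_i$ (bounded and $\bm q_{it}$-measurable), since $\sqrt{nT}h_n^r\to0$ under (A9) and $n/T\to0$. This yields exactly the displayed influence function.

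Step two controls the remainders after scaling by $\sqrt{nT}$.

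\emph{Curvature terms.} These are bounded by $C h_n^{-1}\cdot\frac1n\sum_i\widehat\Delta_i^2$, with the kernel-derivative average uniformly $O_p(1)$ by a uniform kernel LLN as in Lemma \ref{lem:local_empirical}. Lemma \ref{lem:alpha_profile} gives $\frac1n\sum_i\widehat\Delta_i^2=O_p(T^{-1})$. This crude bound is too weak, since $\sqrt{nT}\,T^{-1}h_n^{-1}$ need not vanish. Instead we use the population second derivative $\E[\bm z_{it}f_i'(\cdot)]$, which is $O(1)$, plus a centered part of order $O_p((Th_n^3)^{-1/2})$. This gives $\sqrt{nT}\,O_p(T^{-1})=O_p(\sqrt{n/T})=o_p(1)$, plus a centered contribution that is negligible because $c<1/3$.

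\emph{Cross terms.} These are products of centered kernel averages (and $\widehat s_i-s_i$) with $\widehat A_i(\alpha_{i0},\bm\beta_0)$. Each factor is of order $(Th_n)^{-1/2}$ times $T^{-1/2}$ per individual. Their mean is the $\bm\Lambda_{ni}$ covariance term of order $T^{-1}$, which contributes $\sqrt{n/T}\,\sup_i\|\bm\Lambda_{ni}\|=o_p(1)$ by (A10). Their centered fluctuation, by independence across $i$, has standard deviation $\sqrt{nT}\cdot n^{-1/2}\,O(T^{-1}h_n^{-1/2})=O((Th_n)^{-1/2})\to0$.

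The main obstacle is this cross term. Naive Cauchy--Schwarz gives only $\sqrt{nT}\,(Th_n)^{-1/2}T^{-1/2}=\sqrt{n/(Th_n)}$, which need not vanish. So the argument must exploit independence across $i$ and compute the mean exactly via the long-run covariance. I would write the product as a double sum over $(t,s)$ and bound its variance using (A6) small-ball bounds and geometric mixing. This mirrors the covariance-splitting argument in Lemma \ref{lem:smoothing}, splitting at $J_n\asymp\log(1/h_n)$.
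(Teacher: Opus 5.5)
Your proposal is correct and follows the paper's proof essentially step for step. It uses the same reduction to $\widehat H_n(\bm\alpha_0,\bm\beta_0)-n^{-1}\sum_i\bm\delta_i\widehat A_i(\alpha_{i0},\bm\beta_0)$ followed by Lemma~\ref{lem:smoothing}, and the same population-plus-centered split of the curvature terms. It also handles the cross term $-n^{-1}\sum_i s_i\widehat A_i(\alpha_{i0},\bm\beta_0)\,T^{-1}\sum_t(\bm\kappa_{it}-\E\bm\kappa_{it})$ the same way: its exact mean is $-(nT)^{-1}\sum_i s_i\bm\Lambda_{ni}$, and independence across $i$ gives an $O\{(nT^2h_n)^{-1}\}$ variance. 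Inverting the sample slope $\widehat s_i$, instead of using $s_i$ plus the remainder $R_{Ai}$, is only a bookkeeping difference.

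The paper is more concrete on the per-unit bound $\E[\widehat A_i(\alpha_{i0},\bm\beta_0)^2\|T^{-1}\sum_t(\bm\kappa_{it}-\E\bm\kappa_{it})\|^2]=O(T^{-2}h_n^{-1})$. It obtains this by Cauchy--Schwarz together with separate fourth-moment bounds from Berbee coupling of alternating blocks of length of order $\log T$. Your direct double-sum expansion can also work, but it involves four time indices, so the pairwise covariance splitting of Lemma~\ref{lem:smoothing} alone does not cover it.
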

\begin{proof}

Recall that
\[
\widehat m_n(\bm\beta_0)=\frac1{nT}\sum_{i=1}^n\sum_{t=1}^T
\bm z_{it}a_{it}(\widehat\alpha_i(\bm\beta_0),\bm\beta_0),
\]
where
\(
a_{it}(\alpha_i,\bm\beta)
=G_{h_n}(\alpha_i+\bm x_{it}'\bm\beta-y_{it})-\tau.
\)
Let \(\widehat{\Delta}_{i}=\widehat{\alpha}_{i}(\bm{\beta}_{0})-\alpha_{i0}\).
Adding and subtracting the moment evaluated at the true fixed effects,
\begin{align*}
\widehat m_n(\bm\beta_0)
=\widehat{H}_{n}(\bm\alpha_0,\bm\beta_0)+\frac1n\sum_{i=1}^n\frac1T\sum_{t=1}^T\bm z_{it}
\{a_{it}(\alpha_{i0}+\widehat\Delta_i,\bm\beta_0)
-a_{it}(\alpha_{i0},\bm\beta_0)\}.
\end{align*}

\textbf{Rewrite $\widehat m_n(\bm\beta_0)$.}
We first expand the population component of the second term. Conditioning
on \(\bm q_{it}\) and then taking expectations gives
\[
\left.
\E[\bm z_{it}\{a_{it}(\alpha_{i0}+\Delta,\bm\beta_0)
-a_{it}(\alpha_{i0},\bm\beta_0)\}]
\right|_{\Delta=\widehat\Delta_i}
=
\E[f_i(0\mid\bm q_{it})\bm z_{it}]\widehat\Delta_i
+O(\widehat\Delta_i^2)
+O(h_n^r|\widehat\Delta_i|),
\]
uniformly in \(i\). The term \(O(h_n^r|\widehat\Delta_i|)\) is the
change in the smoothing bias when the threshold is shifted by
\(\widehat\Delta_i\). This population expansion follows from the
smoothness of the conditional density and the \(r\)-th order property
of the kernel.

We next control the centered sample fluctuation around the preceding
expectation,

\begin{align}
\begin{split}
\frac1n\sum_{i=1}^n\Bigg[&
\frac1T\sum_{t=1}^T\bm z_{it}
\{a_{it}(\alpha_{i0}+\widehat\Delta_i,\bm\beta_0)
-a_{it}(\alpha_{i0},\bm\beta_0)\}
\\[-0.5ex]
&{}-
\left.
\E[\bm z_{it}\{a_{it}(\alpha_{i0}+\Delta,\bm\beta_0)
-a_{it}(\alpha_{i0},\bm\beta_0)\}]
\right|_{\Delta=\widehat\Delta_i}
\Bigg].
\end{split}\label{eq:sample flucutation}
\end{align}
By the integral form of Taylor's theorem around \(\alpha_{i0}\),
\[
a_{it}(\alpha_{i0}+\widehat\Delta_i,\bm\beta_0)
-a_{it}(\alpha_{i0},\bm\beta_0)
=
\dot a_{it}(\alpha_{i0},\bm\beta_0)\widehat\Delta_i
+
\widehat\Delta_i^2\int_0^1(1-v)
\ddot a_{it}(\alpha_{i0}+v\widehat\Delta_i,\bm\beta_0)\,dv,
\]
where \(\dot a_{it}\)
and \(\ddot a_{it}\) denote the first and second derivatives of
\(a_{it}\) with respect to \(\alpha_i\). Let
\begin{align*}
L_{nT}
&=
\frac1n\sum_{i=1}^{n}\widehat\Delta_i
\left[
\frac1T\sum_{t=1}^{T}
\left\{
\bm z_{it}\dot a_{it}(\alpha_{i0},\bm\beta_0)
-
\E[\bm z_{it}\dot a_{it}(\alpha_{i0},\bm\beta_0)]
\right\}
\right]\\
&=\frac1n\sum_{i=1}^{n}\widehat\Delta_i
\left[
\frac1T\sum_{t=1}^{T}
\left\{
\bm z_{it}K_{h_n}(-u_{it})
-
\E[\bm z_{it}K_{h_n}(-u_{it})]
\right\}
\right].
\end{align*}
Combining the population expansion with \eqref{eq:sample flucutation} gives
\begin{align*}
&\frac1n\sum_{i=1}^{n}\frac1T\sum_{t=1}^{T}
\bm z_{it}
\{a_{it}(\alpha_{i0}+\widehat\Delta_i,\bm\beta_0)
-a_{it}(\alpha_{i0},\bm\beta_0)\}=
\frac1n\sum_{i=1}^{n}
\E[f_i(0\mid\bm q_{it})\bm z_{it}]\widehat\Delta_i
+L_{nT}+R_{mn},
\end{align*}
where, by the integral form of Taylor's theorem,
\[
\|R_{mn}\|
=
O_p\left(
\frac1n\sum_{i=1}^n\widehat\Delta_i^2
\right)
+
O_p\left(
h_n^r\frac1n\sum_{i=1}^n|\widehat\Delta_i|
\right)
+
O_p\left[
\left(\frac1n\sum_{i=1}^n\widehat\Delta_i^2\right)
\sqrt{\frac{\log(nT)}{Th_n^3}}
\right].
\]
The last rate follows by applying the uniform kernel bound to \(K_{h_n}^{(1)}\), with the bounded
multiplier \(\bm z_{it}\), which gives
\[
\max_{i\le n}\sup_{|a|\le C\sqrt{\log(nT)/T}}
\left\|
\frac1T\sum_{t=1}^T
\left[
\bm z_{it}K_{h_n}^{(1)}(a-u_{it})
-
\E\{\bm z_{it}K_{h_n}^{(1)}(a-u_{it})\}
\right]
\right\|=O_p\Biggl(\sqrt{\frac{\log(nT)}{Th_n^3}}\Biggr).
\]
By Lemma \ref{lem:alpha_profile},
\(
n^{-1}\sum_{i=1}^n\widehat\Delta_i^2=O_p(T^{-1})
\)
and
\(
n^{-1}\sum_{i=1}^n|\widehat\Delta_i|=O_p(T^{-1/2}).
\)
Therefore,
\[
\sqrt{nT}\|R_{mn}\|
=
O_p\left(\sqrt{\frac nT}\right)
+
O_p(\sqrt n\,h_n^r)
+
O_p\left[
\sqrt{\frac nT}
\sqrt{\frac{\log(nT)}{Th_n^3}}
\right]
=
o_p(1),
\]
where the last equality follows from \(n/T\to0\), \(Th_n^r\to0\), and
\(h_n\asymp T^{-c}\) with \(c<1/3\). Consequently, we can rewrite $\widehat m_n(\bm\beta_0)$ as
\begin{align}\label{eq:mnbeta}
\widehat m_n(\bm\beta_0)
=
\widehat{H}_{n}(\bm\alpha_0,\bm\beta_0)
+\frac1n\sum_{i=1}^n
\E[f_i(0\mid\bm q_{it})\bm z_{it}]\widehat\Delta_i
+L_{nT}
+o_p((nT)^{-1/2}).
\end{align}

\textbf{Decompose $\frac1n\sum_{i=1}^n
\E[f_i(0\mid\bm q_{it})\bm z_{it}]\widehat\Delta_i+L_{nT}$.}
Recall \(R_{Ai}\) and
\(R_{Pi}\) as defined in \eqref{eq: RAAi} and \eqref{eq: Rpi},
respectively. We have
\begin{align}
\widehat\Delta_i
={}&-s_i\widehat{A}_{i}(\alpha_{i0},\bm\beta_0)
-s_i\E[f_i(0\mid\bm q_{it})\bm x_{it}']\Delta_\beta
-s_iR_{Pi}-s_iR_{Ai}\nonumber\\
={}&-s_i\widehat{A}_{i}(\alpha_{i0},\bm\beta_0)
-\bm\gamma_i'\Delta_\beta-s_iR_{Pi}-s_iR_{Ai}.
\label{eq: hatdelta}
\end{align}
Evaluating \eqref{eq: hatdelta} at \(\bm\beta=\bm\beta_0\) gives
\(
\widehat\Delta_i
=-s_i\widehat{A}_{i}(\alpha_{i0},\bm\beta_0)
-s_iR_{Pi}-s_iR_{Ai}.
\) We have 
\begin{align}\frac1n\sum_{i=1}^n
\E[f_i(0\mid\bm q_{it})\bm z_{it}]\widehat\Delta_i+L_{nT}=-\frac1n\sum_{i=1}^n
\E[f_i(0\mid\bm q_{it})\bm z_{it}]s_i\widehat{A}_{i}(\alpha_{i0},\bm\beta_0)\nonumber\\+C_{nT}-\frac1n\sum_{i=1}^n
\E[f_i(0\mid\bm q_{it})\bm z_{it}]s_iR_{Pi},\label{eq:E+LnT}\end{align} where $C_{nT}
=
L_{nT}
-
\frac1n\sum_{i=1}^n
\E[f_i(0\mid\bm q_{it})\bm z_{it}]s_iR_{Ai}.$
We now  show that
\[
\sqrt{nT}\frac1n\sum_{i=1}^n
\E[f_i(0\mid\bm q_{it})\bm z_{it}]s_iR_{Pi}
=o_p(1)\quad\text{and}\quad
\sqrt{nT}\,C_{nT}=o_p(1).
\]
We treat these two remainders in turn.

Consider the population remainder,
by the smoothness of the conditional density and the order of the
smoothing bias,
\(
|R_{Pi}|\leq C\widehat\Delta_i^2+Ch_n^r
\)
uniformly over \(i\). Since
\(
n^{-1}\sum_{i=1}^n\widehat\Delta_i^2=O_p(T^{-1}),
\)
\[
\left\|
\sqrt{nT}\frac1n\sum_{i=1}^n
\E[f_i(0\mid\bm q_{it})\bm z_{it}]s_iR_{Pi}
\right\|
\leq
C\sqrt{nT}
\left(
\frac1n\sum_{i=1}^n\widehat\Delta_i^2+h_n^r
\right)
=o_p(1),
\]
where the last equality follows from \(n/T\to0\) and
\(\sqrt{nT}h_n^r\to0\).

\textbf{Empirical-process remainder.}
The term
\(
\sqrt{nT}\,n^{-1}\sum_{i=1}^n\bm\delta_iR_{Ai}
\)
can be written as
\[
\frac1{\sqrt{nT}}\sum_{i=1}^n\sum_{t=1}^T
\bm\delta_i
\left[
G_{h_n}(\widehat\Delta_i-u_{it})-G_{h_n}(-u_{it})
-
\left.\E\{G_{h_n}(d-u_{it})-G_{h_n}(-u_{it})\}
\right|_{d=\widehat\Delta_i}
\right].
\]
Here, the population expectation is calculated for deterministic \(d\)
and then evaluated at \(\widehat\Delta_i\).
For deterministic \(d\), the integral form of Taylor's theorem gives
\[
G_{h_n}(d-u_{it})-G_{h_n}(-u_{it})
=
d\cdot K_{h_n}(-u_{it})
+d^2\cdot \int_0^1(1-v)K_{h_n}^{(1)}(vd-u_{it})\,dv.
\]
Subtracting the expectation and averaging over \(t\), with
\(d=\widehat\Delta_i\), we have
\[
R_{Ai}
=
\widehat\Delta_i
\left[
\frac1T\sum_{t=1}^T
\{K_{h_n}(-u_{it})-\E[K_{h_n}(-u_{it})]\}
\right]
+\widetilde R_{Ai},
\]
where
\(
\widetilde R_{Ai}
=
\left.
d^2\int_0^1(1-v)\frac1T\sum_{t=1}^T
\left[
K_{h_n}^{(1)}(vd-u_{it})
-\E[K_{h_n}^{(1)}(vd-u_{it})]
\right]\,dv
\right|_{d=\widehat\Delta_i}.
\)
Since \(\int_0^1(1-v)\,dv=1/2\), this remainder satisfies
\[
|\widetilde R_{Ai}|
\leq
\frac{\widehat\Delta_i^2}{2}
\sup_{|d|\leq\max_{j\leq n}|\widehat\Delta_j|}
\left|
\frac1T\sum_{t=1}^T
\left[
K_{h_n}^{(1)}(d-u_{it})
-\E[K_{h_n}^{(1)}(d-u_{it})]
\right]
\right|.
\]
Consequently, the uniform kernel bound gives
\(
\frac1n\sum_{i=1}^n|\widetilde R_{Ai}|
=
o_p\left(
\frac{\log(nT)}{(Th_n^3)^{1/2}}
\frac1n\sum_{i=1}^n\widehat\Delta_i^2
\right).
\)
Since \(h_n=T^{-c}\) with \(c<1/3\), \(n/T\to0\), and
\(n^{-1}\sum_i\widehat\Delta_i^2=O_p(T^{-1})\),
\[
\frac{\log(nT)}{(Th_n^3)^{1/2}}
\frac1n\sum_{i=1}^n\widehat\Delta_i^2
=
O_p\left(
\frac{\log(nT)}{T^{(1-3c)/2}}\frac1T
\right)
=
o_p((nT)^{-1/2}).
\]
Therefore,
\begin{align*}
&\sqrt{nT}\frac1n\sum_{i=1}^n
\E[f_i(0\mid\bm q_{it})\bm z_{it}]s_iR_{Ai}\\
&\quad=
\sqrt{nT}\frac1n\sum_{i=1}^n
\E[f_i(0\mid\bm q_{it})\bm z_{it}]s_i\widehat\Delta_i
\left[
\frac1T\sum_{t=1}^T
\{K_{h_n}(-u_{it})-\E[K_{h_n}(-u_{it})]\}
\right]
+o_p(1).
\end{align*}

Using
\(
\widehat\Delta_i
=-s_i\widehat{A}_{i}(\alpha_{i0},\bm\beta_0)
+r_{\alpha i}(\bm\beta_0),
\)
the leading term becomes
\begin{align}
&-\sqrt{nT}\frac1n\sum_{i=1}^n
\E[f_i(0\mid\bm q_{it})\bm z_{it}]s_i^2
\widehat{A}_{i}(\alpha_{i0},\bm\beta_0)
\left[
\frac1T\sum_{t=1}^T
\{K_{h_n}(-u_{it})-\E[K_{h_n}(-u_{it})]\}
\right]\nonumber\\
&\quad+
\sqrt{nT}\frac1n\sum_{i=1}^n
\E[f_i(0\mid\bm q_{it})\bm z_{it}]s_i
r_{\alpha i}(\bm\beta_0)
\left[
\frac1T\sum_{t=1}^T
\{K_{h_n}(-u_{it})-\E[K_{h_n}(-u_{it})]\}
\right].
\label{eq: leading terms}
\end{align}
To control the second term, let
\[
\bar K_i
=
\frac1T\sum_{t=1}^T
\{K_{h_n}(-u_{it})-\E[K_{h_n}(-u_{it})]\}.
\]
Since
\(
\sup_i\|\E[f_i(0\mid\bm q_{it})\bm z_{it}]s_i\|<C,
\)
it suffices to show
\(
\sqrt{nT}n^{-1}\sum_i
|r_{\alpha i}(\bm\beta_0)\bar K_i|=o_p(1).
\)
A second-order Taylor expansion of the profile equation gives
\begin{align}
|r_{\alpha i}(\bm\beta_0)|
\leq
C\{|\bar K_i|+h_n^r\}|\widehat\Delta_i|
+
CM_n\widehat\Delta_i^2,
\label{eq:ralphai}
\end{align}
where
\(
M_n=
\max_{i\le n}
\sup_{|a|\le\max_{j\le n}|\widehat\Delta_j|}
\left|
\frac1T\sum_{t=1}^T
K_{h_n}^{(1)}(a-u_{it})
\right|.
\)
The uniform kernel laws of large numbers give
\begin{align}
\max_{i\le n}|\bar K_i|
=
O_p\left(\sqrt{\frac{\log(nT)}{Th_n}}\right),&&M_n
=
O_p\left(
1+\sqrt{\frac{\log(nT)}{Th_n^3}}
\right).
\label{eq: Mn property}
\end{align}
Moreover,
\(
n^{-1}\sum_i\bar K_i^2=O_p((Th_n)^{-1}),
\)
while Lemma \ref{lem:alpha_max} gives
\(
\max_i|\widehat\Delta_i|
=O_p(\sqrt{\log(nT)/T})
\)
and
\(
n^{-1}\sum_i\widehat\Delta_i^2=O_p(T^{-1}).
\)

Recall that
\(
\bm\kappa_{it}
=
(\bm z_{it}-\bm\delta_i)K_{h_n}(-u_{it}).
\)
The uniform kernel bound gives
\begin{align}
\max_{i\le n}
\left\|
\frac1T\sum_{t=1}^T
\{\bm\kappa_{it}-\E[\bm\kappa_{it}]\}
\right\|
&=
O_p\left(\sqrt{\frac{\log(nT)}{Th_n}}\right),\nonumber\\
\frac1n\sum_{i=1}^n
\left\|
\frac1T\sum_{t=1}^T
\{\bm\kappa_{it}-\E[\bm\kappa_{it}]\}
\right\|^2
&=
O_p\left(\frac1{Th_n}\right).
\label{eq: kappa property}
\end{align}
Since
\(
\bm\delta_i=s_i\E[f_i(0\mid\bm q_{it})\bm z_{it}],
\)
combining \(L_{nT}\) with the leading part of \(R_{Ai}\) gives
\begin{align*}
C_{nT}
&=
\frac1n\sum_{i=1}^n\widehat\Delta_i
\left[
\frac1T\sum_{t=1}^T
\{\bm z_{it}K_{h_n}(-u_{it})
-\E[\bm z_{it}K_{h_n}(-u_{it})]\}
\right]\\
&\quad-
\frac1n\sum_{i=1}^n\bm\delta_i\widehat\Delta_i
\left[
\frac1T\sum_{t=1}^T
\{K_{h_n}(-u_{it})-\E[K_{h_n}(-u_{it})]\}
\right]
-\frac1n\sum_{i=1}^n\bm\delta_i\widetilde R_{Ai}\\
&=
\frac1n\sum_{i=1}^n\widehat\Delta_i
\left[
\frac1T\sum_{t=1}^T
\{\bm\kappa_{it}-\E[\bm\kappa_{it}]\}
\right]
+o_p((nT)^{-1/2})\\
&=
-\frac1n\sum_{i=1}^ns_i
\widehat{A}_{i}(\alpha_{i0},\bm\beta_0)
\left[
\frac1T\sum_{t=1}^T
\{\bm\kappa_{it}-\E[\bm\kappa_{it}]\}
\right]\\
&\quad+
\frac1n\sum_{i=1}^nr_{\alpha i}(\bm\beta_0)
\left[
\frac1T\sum_{t=1}^T
\{\bm\kappa_{it}-\E[\bm\kappa_{it}]\}
\right]
+o_p((nT)^{-1/2}).
\end{align*}

Using the preceding bounds, the Cauchy-Schwarz inequality,
\eqref{eq:ralphai}, \eqref{eq: Mn property}, and
\eqref{eq: kappa property}, we obtain
{\small
\begin{align*}
&\sqrt{nT}\frac1n\sum_{i=1}^n
|r_{\alpha i}(\bm\beta_0)|
\left\|
\frac1T\sum_{t=1}^T
\{\bm\kappa_{it}-\E[\bm\kappa_{it}]\}
\right\|\\
&=
O_p\left(\frac{\sqrt{n\log(nT)}}{Th_n}\right)
+
O_p\left(\sqrt{\frac nT}h_n^{r-1/2}\right)
+
O_p\left(\frac{\sqrt{n\log(nT)}}{Th_n^{1/2}}\right)\\
&\quad+
O_p\left(\frac{\sqrt n\log(nT)}{T^{3/2}h_n^2}\right)
=o_p(1).
\end{align*}}
For example,
\(
\sqrt{n\log(nT)}/(Th_n)
\leq T^{c-1/2}\sqrt{\log T}=o(1),
\)
and the remaining terms follow similarly from
\(h_n\asymp T^{-c}\), \(c<1/3\), \(r\geq4\), and \(n/T\to0\).
Therefore,
\[
C_{nT}
=
-\frac1n\sum_{i=1}^ns_i
\widehat{A}_{i}(\alpha_{i0},\bm\beta_0)
\left[
\frac1T\sum_{t=1}^T
\{\bm\kappa_{it}-\E[\bm\kappa_{it}]\}
\right]
+o_p((nT)^{-1/2}).
\]

Since
\(
a_{it}(\alpha_{i0},\bm\beta_0)=\omega_{it},
\)
stationarity and the definition of \(\bm\Lambda_{ni}\) in Assumption
(A10) imply
\begin{align*}
&\E\left[
-\frac1n\sum_{i=1}^ns_i
\widehat{A}_{i}(\alpha_{i0},\bm\beta_0)
\left\{
\frac1T\sum_{t=1}^T
(\bm\kappa_{it}-\E[\bm\kappa_{it}])
\right\}
\right]\\
&\quad=
-\frac1{nT}\sum_{i=1}^ns_i
\sum_{|j|\le T-1}
\left(1-\frac{|j|}{T}\right)
\E\left[
\omega_{i1}
\{\bm\kappa_{i,1+j}-\E[\bm\kappa_{it}]\}
\right]\\
&\quad=
-\frac1{nT}\sum_{i=1}^ns_i\bm\Lambda_{ni}
=
O(T^{-1}).
\end{align*}

For the centered part, compact support of \(K\), boundedness of
\(\bm z_{it}\) and \(\bm\delta_i\), and boundedness of the conditional
density imply that, for every fixed \(p\geq2\),
\[
\E\left\|
(\bm z_{it}-\bm\delta_i)K(-u_{it}/h_n)
\right\|^p
=
O(h_n).
\]
Write
\(
\widetilde\omega_{it}=\omega_{it}-\E\omega_{it}
\)
and
\(
\widetilde{\bm\kappa}_{it}
=\bm\kappa_{it}-\E\bm\kappa_{it}.
\)
By bounded support of \(K\) and boundedness of
\(\bm z_{it}-\bm\delta_i\),
\[
\|\bm\kappa_{it}\|
\leq
Ch_n^{-1}1\{|u_{it}|\leq Ch_n\}.
\]
When \(t=t'\), Assumption (A6) gives
\[
\E\|\widetilde{\bm\kappa}_{it}\|^2
\leq
\E\|\bm\kappa_{it}\|^2
\leq
Ch_n^{-2}P(|u_{it}|\leq Ch_n)
=
O(h_n^{-1}).
\]
When \(t\neq t'\), stationarity and the two-time local concentration
condition in Assumption (A6) give
\[
\E\|\bm\kappa_{it}\bm\kappa_{it'}'\|
\leq
Ch_n^{-2}
P\left(
|u_{it}|\leq Ch_n,\,
|u_{it'}|\leq Ch_n
\right)
=
O(1),
\]
uniformly in \(i,t,t'\). Since
\(\E\|\bm\kappa_{it}\|=O(1)\), the same order holds after centering.

All bounds below hold uniformly in \(i\). The preceding moment bounds
and (A1) imply
\[
\|\Cov(\bm\kappa_{i1},\bm\kappa_{i,1+j})\|
\leq C\min\{1,h_n^{-2}a^j\},\qquad j\geq1.
\]
Summing the constant bound up to a sufficiently large multiple of
\(\log(1/h_n)\), and the geometric bound thereafter, gives
\[
\begin{aligned}
\sum_{j\in\mathbb Z}
\|\Cov(\bm\kappa_{i1},\bm\kappa_{i,1+j})\|
&\leq Ch_n^{-1}
+C\sum_{j=1}^{\infty}\min\{1,h_n^{-2}a^j\}=O\{h_n^{-1}+\log(1/h_n)\}
=O(h_n^{-1}).
\end{aligned}
\]

Choose \(q=\lceil C\log T\rceil\), with \(C\) sufficiently large that
\(a^q\leq T^{-4}\), and define
\(
\bm B_{ik}
=\sum_{t=(k-1)q+1}^{\min\{kq,T\}}
\widetilde{\bm\kappa}_{it}
\),
\(k=1,\ldots,\lceil T/q\rceil\).
The covariance bound and
\(\E\|\widetilde{\bm\kappa}_{it}\|^4=O(h_n^{-3})\) imply
\[
\E\|\bm B_{ik}\|^2\leq Cqh_n^{-1},
\qquad
\E\|\bm B_{ik}\|^4
\leq q^3
\sum_{t=(k-1)q+1}^{\min\{kq,T\}}
\E\|\widetilde{\bm\kappa}_{it}\|^4
\leq Cq^4h_n^{-3}.
\]

Separate the blocks into the groups \(\{1,3,5,\ldots\}\) and
\(\{2,4,6,\ldots\}\). Successive blocks within each group have an
intervening block of length \(q\). Berbee's lemma and (A1) therefore
allow us to construct \(\bm B_{ik}^*\), independent within each group,
such that
\[
\bm B_{ik}^*\stackrel{d}{=}\bm B_{ik},
\qquad
P(\bm B_{ik}\neq\bm B_{ik}^*)
\leq\beta_i(q)\leq Ba^q.
\]
Each \(\bm B_{ik}^*\) may depend on its corresponding original block
\(\bm B_{ik}\). 
By the union bound,
\[
P\left(\bm B_{ik}\neq\bm B_{ik}^*
       \text{ for some }k\right)
\leq
\sum_{k=1}^{\lceil T/q\rceil}
P(\bm B_{ik}\neq\bm B_{ik}^*)
\leq C(T/q)a^q.
\]
The two sums $\frac1T\sum_k\bm B_{ik}$ and $\frac1T\sum_k\bm B_{ik}^*$ are equal on the event that every
\(\bm B_{ik}=\bm B_{ik}^*\). Using
\(\|T^{-1}\sum_k\bm B_{ik}\|\leq C/h_n\) and
splitting the expectation over these two events gives
\[
\begin{aligned}
\E\left\|\frac1T\sum_k\bm B_{ik}\right\|^4
&\leq
\E\left\|\frac1T\sum_k\bm B_{ik}^*\right\|^4
+\frac{C}{h_n^4}
P\left(\bm B_{ik}\neq\bm B_{ik}^*
       \text{ for some }k\right)\\
&\leq
\E\left\|\frac1T\sum_k\bm B_{ik}^*\right\|^4
+\frac{CTa^q}{q h_n^4}.
\end{aligned}
\]

We next bound the fourth moment of the constructed sum.
Within either group, independence and centering eliminate all
terms in the fourth-moment expansion containing a block only once.
Bounding the remaining inner products by Cauchy-Schwarz gives,
with all indices in the following display restricted to that group,
\[
\begin{aligned}
\E\left\|\sum_k\bm B_{ik}^*\right\|^4
&\leq
\sum_k\E\|\bm B_{ik}^*\|^4
+6\sum_{k<\ell}
\E\|\bm B_{ik}^*\|^2\,
\E\|\bm B_{i\ell}^*\|^2\\
&\leq
\sum_k\E\|\bm B_{ik}^*\|^4
+3\left\{\sum_k\E\|\bm B_{ik}^*\|^2\right\}^2.
\end{aligned}
\]
Combining the two groups, and using equality of the individual
block moments, therefore yields
\[
\begin{aligned}
\E\left\|\sum_k\bm B_{ik}^*\right\|^4
&\leq
8\E\left\|\sum_{k=1,3,\ldots}\bm B_{ik}^*\right\|^4
+8\E\left\|\sum_{k=2,4,\ldots}\bm B_{ik}^*\right\|^4\\
&\leq C\left[
\sum_k\E\|\bm B_{ik}\|^4
+\left\{\sum_k\E\|\bm B_{ik}\|^2\right\}^2
\right],
\end{aligned}
\]
where the sums in the last line range over all blocks.

Since there are \(O(T/q)\) blocks and
\(\sum_k\bm B_{ik}=\sum_{t=1}^T\widetilde{\bm\kappa}_{it}\),
the preceding inequalities give
\[
\begin{aligned}
\E\left\|\frac1T\sum_{t=1}^T
\widetilde{\bm\kappa}_{it}\right\|^4
&\leq
\frac{C}{T^4}
\left[
\frac{T}{q}\frac{q^4}{h_n^3}
+\left\{\frac{T}{q}\frac{q}{h_n}\right\}^2
\right]
+\frac{CTa^q}{q h_n^4}\\
&\leq C\left(
\frac{q^3}{T^3h_n^3}
+\frac1{T^2h_n^2}
+\frac{Ta^q}{h_n^4}
\right)
=O(T^{-2}h_n^{-2}).
\end{aligned}
\]
Here, the first and third terms
are negligible because
\[
\frac{q^3}{Th_n}\to0,
\qquad
\frac{T^3a^q}{h_n^2}
\leq\frac1{Th_n^2}\to0,
\]
by the choice of \(q\) and (A9).

For the bounded centered scores \(\widetilde\omega_{it}\), the
corresponding block second and fourth moments are \(O(q)\) and
\(O(q^4)\). Repeating the calculation above, and using
\(
\widehat{A}_{i}(\alpha_{i0},\bm\beta_0)
=T^{-1}\sum_{t=1}^T\widetilde\omega_{it}+\E\omega_{i1}
\)
with \(\E\omega_{i1}=O(h_n^r)\), gives
\[
\begin{aligned}
\E|\widehat{A}_{i}(\alpha_{i0},\bm\beta_0)|^4
&\leq
8\E\left|\frac1T\sum_{t=1}^T\widetilde\omega_{it}\right|^4
+8|\E\omega_{i1}|^4\leq C\left(T^{-2}+q^3T^{-3}+Ta^q+h_n^{4r}\right)
=O(T^{-2}),
\end{aligned}
\]
where the last equality follows from (A9) and the choice of \(q\).
Consequently, Cauchy-Schwarz yields
\[
\begin{aligned}
&\E\left[
\widehat{A}_{i}(\alpha_{i0},\bm\beta_0)^2
\left\|\frac1T\sum_{t=1}^T
\widetilde{\bm\kappa}_{it}\right\|^2
\right]\leq
\left\{\E|\widehat{A}_{i}(\alpha_{i0},\bm\beta_0)|^4\right\}^{1/2}
\left\{
\E\left\|\frac1T\sum_{t=1}^T
\widetilde{\bm\kappa}_{it}\right\|^4
\right\}^{1/2}
=O\left(\frac1{T^2h_n}\right),
\end{aligned}
\]
uniformly in \(i\).
Using independence across \(i\),
\[
\Var\left(
\frac1n\sum_{i=1}^ns_i
\widehat{A}_{i}(\alpha_{i0},\bm\beta_0)
\left[
\frac1T\sum_{t=1}^T
\{\bm\kappa_{it}-\E[\bm\kappa_{it}]\}
\right]
\right)
=
O\left(\frac1{nT^2h_n}\right).
\]
Consequently,
\(
C_{nT}
=
O(T^{-1})
+
O_p((T\sqrt{nh_n})^{-1})
+
o_p((nT)^{-1/2}),
\)
and hence
\[
\sqrt{nT}\,C_{nT}
=
O\left(\sqrt{\frac nT}\right)
+
O_p\left(\frac1{\sqrt{Th_n}}\right)
+
o_p(1)
=
o_p(1).
\]

\textbf{Conclusion.}
Combining the bounds for \(R_{Pi}\) and \(C_{nT}\) with the profile
expansion, \eqref{eq:mnbeta}, and \eqref{eq:E+LnT} yields
\[
\widehat m_n(\bm\beta_0)
=
\widehat{H}_{n}(\bm\alpha_0,\bm\beta_0)
-
\frac1n\sum_{i=1}^n
\bm\delta_i\widehat{A}_{i}(\alpha_{i0},\bm\beta_0)
+
o_p((nT)^{-1/2}).
\]
Since
\(
\widehat{H}_{n}(\bm\alpha_0,\bm\beta_0)
-
\frac1n\sum_{i=1}^n
\bm\delta_i\widehat{A}_{i}(\alpha_{i0},\bm\beta_0)
=
\frac1{nT}\sum_{i=1}^n\sum_{t=1}^T
(\bm z_{it}-\bm\delta_i)
\{G_{h_n}(-u_{it})-\tau\},
\)
Lemma \ref{lem:smoothing} gives
\[
\sqrt{nT}
\left[
\widehat{H}_{n}(\bm\alpha_0,\bm\beta_0)
-
\frac1n\sum_{i=1}^n
\bm\delta_i\widehat{A}_{i}(\alpha_{i0},\bm\beta_0)
\right]
=
\frac1{\sqrt{nT}}\sum_{i=1}^n\sum_{t=1}^T
\psi_{it}
+
o_p(1).
\]
Combining the last two displays proves the result.

\end{proof}

\begin{lemma} \label{lem:derivative} Under the assumptions of Theorem
4.1
{}, uniformly for $\|\bm{\beta}-\bm{\beta}_{0}\|=o_{p}(1)$,
\[
\frac{\partial\widehat{m}_{n}(\bm{\beta})}{\partial\bm{\beta}'}=D_{n}+o_{p}(1).
\]
In particular,
\(
\partial\widehat m_n(\bm\beta_0)/\partial\bm\beta'=D_n+o_p(1),
\)
and $D_n\to D$.
\end{lemma}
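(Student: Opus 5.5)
Within the shrinking neighborhood of \(\bm\beta_0\) covered by Lemma \ref{lem:profile_unique}, the profile \(\widehat\alpha_i(\bm\beta)\) is unique and continuously differentiable, so we can differentiate \(\widehat m_n\) by the chain rule. Write \(\widehat v_{it}(\bm\beta)=\widehat\alpha_i(\bm\beta)+\bm x_{it}'\bm\beta-y_{it}\). The chain rule together with the derivative formula in Lemma \ref{lem:profile_unique} gives
\[
\frac{\partial\widehat m_n(\bm\beta)}{\partial\bm\beta'}
=\frac1{nT}\sum_{i=1}^n\sum_{t=1}^T K_{h_n}(\widehat v_{it}(\bm\beta))\,\bm z_{it}\bigl(\bm x_{it}-\widehat{\bm\gamma}_i(\bm\beta)\bigr)',
\]
where
\[
\widehat{\bm\gamma}_i(\bm\beta)=\frac{T^{-1}\sum_t K_{h_n}(\widehat v_{it}(\bm\beta))\bm x_{it}}{T^{-1}\sum_t K_{h_n}(\widehat v_{it}(\bm\beta))}.
\]
The denominator is bounded below by \(\underline f/2\) with probability approaching one, uniformly in \(i\) and \(\bm\beta\), again by Lemma \ref{lem:profile_unique}. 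A deterministic \(\eta_n\to0\) can be chosen slowly enough that \(\|\bm\beta-\bm\beta_0\|=o_p(1)\) implies \(\|\bm\beta-\bm\beta_0\|\le\eta_n\) with probability approaching one. It therefore suffices to prove uniform convergence over \(\|\bm\beta-\bm\beta_0\|\le\eta_n\).

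The argument reduces to three per-individual kernel averages, with the weights \(1\), \(\bm x_{it}\) and \(\bm z_{it}\bm x_{it}'\):
\[
\frac1T\sum_t K_{h_n}(\widehat v_{it}(\bm\beta))\cdot\{1,\ \bm x_{it},\ \bm z_{it}\bm x_{it}'\}.
\]
Each of these should equal the corresponding population quantity \(\E[f_i(0\mid\bm q_{it})\cdot\{\cdots\}]\) plus an error that is \(o_p(1)\) uniformly in \(i\le n\) and \(\bm\beta\). I would split each error into two pieces.

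\emph{Stochastic piece.} Replace \(\widehat v_{it}(\bm\beta)\) by \(d+\bm x_{it}'b-u_{it}\) and take the supremum over \(|d|+\|b\|\le\delta_n\). Here \(d=\widehat\alpha_i(\bm\beta)-\alpha_{i0}\), which is uniformly \(o_p(1)\) by Lemma \ref{lem:alpha_max}. The uniform kernel law of large numbers already invoked in the proof of Lemma \ref{lem:local_empirical} (the Lemma B.2 argument of \citet{GalvaoKato16} adapted through (A6), with bounded multipliers) bounds the centered deviation by \(O_p(\{\log(nT)/(Th_n)\}^{1/2})=o_p(1)\) under (A9) and \(n/T\to0\).

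\emph{Bias and continuity piece.} A change of variables and the \(r\)-th order kernel give
\[
\E\bigl[K_{h_n}(d+\bm x_{it}'b-u_{it})\,c_{it}\bigr]
=\E\bigl[f_i(d+\bm x_{it}'b\mid\bm q_{it})\,c_{it}\bigr]+O(h_n^r).
\]
The bounded derivative of \(f_i\) from (A6) makes this equal to \(\E[f_i(0\mid\bm q_{it})c_{it}]+O(|d|+\|b\|)\), uniformly in \(i\).

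Combining the two pieces, the ratio converges to \(\bm\gamma_i\) uniformly in \(i\), because the denominator limit \(f_i(0)\) is bounded away from zero. The full derivative matrix then equals
\[
\frac1n\sum_i\E\bigl[f_i(0\mid\bm q_{it})\bm z_{it}\bm x_{it}'\bigr]-\frac1n\sum_i\E\bigl[f_i(0\mid\bm q_{it})\bm z_{it}\bigr]\bm\gamma_i'+o_p(1),
\]
which is \(D_n+o_p(1)\) by the definition in (A7). The averages over \(i\) cause no difficulty, since all bounds are maxima over \(i\) and all quantities are bounded. Finally, \(D_n\to D\) holds by (A7), and the special case \(\bm\beta=\bm\beta_0\) is immediate.

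The main obstacle is the uniformity of the stochastic kernel bound over the data-dependent argument \(\widehat\alpha_i(\bm\beta)\), which is jointly random with the sample. I would handle this exactly as in Lemma \ref{lem:local_empirical}: take the supremum over a deterministic shrinking set of \((d,b)\) that contains \((\widehat\alpha_i(\bm\beta)-\alpha_{i0},\bm\beta-\bm\beta_0)\) with probability approaching one. This works because the VC-type class of \(K((d+\bm x'b-u)/h_n)\) has bounded envelope \(h_n^{-1}\), and the blocking and Bernstein argument with \(q\asymp\sqrt T\) absorbs the union over \(i\) through the \(\log(nT)\) factor.
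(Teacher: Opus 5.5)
Your proposal is correct and follows essentially the paper's route: the chain rule with the implicit-function derivative from Lemma \ref{lem:profile_unique}, a uniform kernel law of large numbers over a deterministic shrinking set of $(d,b)$, and the $r$-th order kernel expansion plus density smoothness to send each kernel average to its $\E[f_i(0\mid\bm q_{it})\,\cdot\,]$ limit, with the ratio controlled by the uniform lower bound on the denominator. The only slip is that your list of weights omits $\bm z_{it}$. Its kernel average $T^{-1}\sum_t K_{h_n}(\widehat v_{it}(\bm\beta))\bm z_{it}$ multiplies $\widehat{\bm\gamma}_i(\bm\beta)'$ and is needed to produce the $\E[f_i(0\mid\bm q_{it})\bm z_{it}]\bm\gamma_i'$ term, but exactly the same argument handles it.
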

\begin{proof}
 By Lemma \ref{lem:profile_unique}, the sample profile is uniquely
defined and continuously differentiable in a shrinking neighborhood
of \(\bm\beta_0\). By the implicit function theorem applied to $\widehat{A}_{i}(\widehat{\alpha}_{i}(\bm{\beta}),\bm{\beta})=0$,
\begin{align*}
\frac{\partial\widehat\alpha_i(\bm\beta)}{\partial\bm\beta'}
={}&-\left\{\frac1T\sum_{t=1}^T
K_{h_n}(\widehat\alpha_i(\bm\beta)+\bm x_{it}'\bm\beta-y_{it})\right\}^{-1}\left\{\frac1T\sum_{t=1}^T
K_{h_n}(\widehat\alpha_i(\bm\beta)+\bm x_{it}'\bm\beta-y_{it})\bm x_{it}'\right\}.
\end{align*}
At $\bm{\beta}=\bm{\beta}_{0}$, the denominator and numerator in
the implicit-function expression have the desired uniform limits.
To see this, write
$\widehat\Delta_i(\bm\beta)=\widehat\alpha_i(\bm\beta)-\alpha_{i0}$.
For the denominator, uniformly over $i$,
\[
\frac1T\sum_{t=1}^T
K_{h_n}(\widehat\alpha_i(\bm\beta_0)+\bm x_{it}'\bm\beta_0-y_{it})
=\frac1T\sum_{t=1}^T K_{h_n}(\widehat\Delta_i(\bm\beta_0)-u_{it}).
\]
By Lemma \ref{lem:alpha_max}, $\max_{i}|\widehat{\Delta}_{i}(\bm{\beta}_{0})|=o_{p}(1)$.
Hence the uniform kernel law of large numbers as above,
applied over the local set $\max_{i}|\widehat{\Delta}_{i}|=o(1)$,
gives 
\[
\max_{i}\left|\frac{1}{T}\sum_{t=1}^{T}K_{h_{n}}(\widehat{\Delta}_{i}(\bm{\beta}_{0})-u_{it})-\left.\E[K_{h_n}(d-u_{it})]\right|_{d=\widehat\Delta_i(\bm\beta_0)}\right|=o_{p}(1).
\]
Moreover, the population kernel expansion in Lemma \ref{lem:smoothing} gives
\[
\left.\E[K_{h_n}(d-u_{it})]\right|_{d=\widehat\Delta_i}=f_{i}(\widehat{\Delta}_{i})+O(h_{n}^{r})
\]
uniformly in $i$ and $|\widehat{\Delta}_{i}|=o(1)$. Since the densities
are uniformly continuous and $\max_{i}|\widehat{\Delta}_{i}(\bm{\beta}_{0})|=o_{p}(1)$,
it follows that 
\[
\max_{i}\left|\frac{1}{T}\sum_{t=1}^{T}K_{h_{n}}(\widehat{\alpha}_{i}(\bm{\beta}_{0})+\bm{x}_{it}'\bm{\beta}_{0}-y_{it})-f_{i}(0)\right|=o_{p}(1).
\]

The numerator is handled in the same way. The uniform kernel LLN with
the bounded multiplier $\bm x_{it}$ gives
\begin{align*}
\max_i\Bigg\|{}&\frac1T\sum_{t=1}^T
K_{h_n}(\widehat\Delta_i(\bm\beta_0)-u_{it})\bm x_{it}'-\left.\E\{K_{h_n}(d-u_{it})\bm x_{it}'\}\right|_{d=\widehat\Delta_i(\bm\beta_0)}\Bigg\|
=o_p(1).
\end{align*}
Conditioning on \(\bm q_{it}\) and changing variables gives
\[
\E\{K_{h_n}(d-u_{it})\bm x_{it}'\}
=
\E\left[
\bm x_{it}'\int K(v)f_i(d-h_nv\mid\bm q_{it})\,dv
\right]
=
\E[f_i(d\mid\bm q_{it})\bm x_{it}']+O(h_n^r),
\]
uniformly in \(i\) and deterministic \(|d|=o(1)\). The last equality
follows by expanding the density around \(d\), using the vanishing
kernel moments in (A8), the uniformly bounded \(r\)-th derivative
in (A6), and the boundedness of \(\bm x_{it}\) in (A2).
Evaluating at $d=\widehat\Delta_i$ and using the uniform continuity of $f_{i}(\cdot\mid\bm{q}_{it})$, the
boundedness of $\bm{x}_{it}$, and $\max_{i}|\widehat{\Delta}_{i}(\bm{\beta}_{0})|=o_{p}(1)$,
we obtain 
\[
\max_{i}\left\Vert \frac{1}{T}\sum_{t=1}^{T}K_{h_{n}}(\widehat{\alpha}_{i}(\bm{\beta}_{0})+\bm{x}_{it}'\bm{\beta}_{0}-y_{it})\bm{x}_{it}'-\E[f_{i}(0\mid\bm{q}_{it})\bm{x}_{it}']\right\Vert =o_{p}(1).
\]

The same argument holds uniformly for $\bm{\beta}$ in a shrinking
neighborhood of $\bm{\beta}_{0}$. Indeed, Lemma \ref{lem:alpha_max}
implies $\max_{i}\sup_{\|\bm{\beta}-\bm{\beta}_{0}\|\le\eta_{n}}|\widehat{\alpha}_{i}(\bm{\beta})-\alpha_{i0}|=o_{p}(1)$,
and the additional shift $\bm{x}_{it}'(\bm{\beta}-\bm{\beta}_{0})$
is uniformly $o_{p}(1)$ because $\bm{x}_{it}$ is bounded and $\eta_{n}\to0$.
The preceding stochastic and population kernel approximations therefore
apply uniformly over this neighborhood. Hence 
\begin{align}
\max_{1\leq i\leq n}\left\Vert \frac{\partial\widehat{\alpha}_{i}(\bm{\beta})}{\partial\bm{\beta}'}+\bm{\gamma}_{i}'\right\Vert =o_{p}(1).\label{eq: gamma consistency}
\end{align}
Differentiating the profiled moment gives 
\begin{align*}
\frac{\partial\widehat m_n(\bm\beta)}{\partial\bm\beta'}
={}&\frac1{nT}\sum_{i=1}^n\sum_{t=1}^T
K_{h_n}(\widehat\alpha_i(\bm\beta)+\bm x_{it}'\bm\beta-y_{it})\bm z_{it}\left\{\bm x_{it}'
+\frac{\partial\widehat\alpha_i(\bm\beta)}{\partial\bm\beta'}\right\}.
\end{align*}
Using the previous display, boundedness of $\bm{x}_{it}$ and $\bm{z}_{it}$,
and the kernel law of large numbers under exponential $\beta$-mixing,
\[
\frac{\partial\widehat{m}_{n}(\bm{\beta})}{\partial\bm{\beta}'}=\frac{1}{nT}\sum_{i=1}^{n}\sum_{t=1}^{T}K_{h_{n}}(u_{it})\bm{z}_{it}(\bm{x}_{it}-\bm{\gamma}_{i})'+o_{p}(1).
\]
The expectation of the first term is 
\[
\frac{1}{n}\sum_{i=1}^{n}\E[f_{i}(0\mid\bm{q}_{it})\bm{z}_{it}(\bm{x}_{it}-\bm{\gamma}_{i})']+O(h_{n}^{r})=D_{n}+O(h_{n}^{r}),
\]
and the difference is of order $o_{p}(1)$ by the same kernel LLN.
Therefore $\partial\widehat{m}_{n}(\bm{\beta})/\partial\bm{\beta}'=D_{n}+o_{p}(1)$.
The convergence $D_{n}\to D$ is Assumption (A7). 
\end{proof}

By consistency, with probability approaching one, $\widehat{\bm{\beta}}$
lies in an interior neighborhood of $\bm{\beta}_{0}$. Hence the first-order
condition for the profiled GMM problem is 
\[
\frac{\partial\widehat{m}_{n}(\widehat{\bm{\beta}})'}{\partial\bm{\beta}}\widehat{W}\widehat{m}_{n}(\widehat{\bm{\beta}})=\bm{0}.
\]
By the mean-value expansion of $\widehat{m}_{n}(\widehat{\bm{\beta}})$
around $\bm{\beta}_{0}$, $\widehat{m}_{n}(\widehat{\bm{\beta}})=\widehat{m}_{n}(\bm{\beta}_{0})+\bar{D}_{n}(\widehat{\bm{\beta}}-\bm{\beta}_{0}),$
where $\bar{D}_{n}=\int_0^1\frac{\partial \widehat{m}_{n}(\bm \beta_0+s(\widehat{\bm\beta}-\bm\beta_0))}{\partial \bm \beta'}ds$.
By Lemma \ref{lem:derivative}, $\bar{D}_{n}=D_{n}+o_{p}(1)$.
The derivative in the first-order condition is also $D_{n}+o_{p}(1)$,
and $\widehat{W}=W+o_{p}(1)$. Therefore, 
\[
\bm{0}=\{D_{n}+o_{p}(1)\}'\{W+o_{p}(1)\}\left[\widehat{m}_{n}(\bm{\beta}_{0})+\{D_{n}+o_{p}(1)\}(\widehat{\bm{\beta}}-\bm{\beta}_{0})\right].
\]
Because $D_{n}\to D$, $W$ is positive definite, and $D'WD$ is nonsingular
by Assumption (A7), the matrix $\{D_{n}+o_{p}(1)\}'\{W+o_{p}(1)\}\{D_{n}+o_{p}(1)\}$
is nonsingular with probability approaching one. Solving the previous
display gives 
\[
\widehat{\bm{\beta}}-\bm{\beta}_{0}=-(D_{n}'\widehat{W}D_{n})^{-1}D_{n}'\widehat{W}\widehat{m}_{n}(\bm{\beta}_{0})+r_{\beta n},
\]
where
\(
\|r_{\beta n}\|=o_p(\|\widehat m_n(\bm\beta_0)\|).
\)
By Lemma \ref{lem:profile_moment},
\[
\|\widehat m_n(\bm\beta_0)\|=O_p((nT)^{-1/2}),
\qquad \sqrt{nT}\,r_{\beta n}=o_p(1).
\]
Multiplying by $\sqrt{nT}$
and using Lemma \ref{lem:profile_moment}, 
\[
\sqrt{nT}(\widehat{\bm{\beta}}-\bm{\beta}_{0})=-(D_{n}'\widehat{W}D_{n})^{-1}D_{n}'\widehat{W}\frac{1}{\sqrt{nT}}\sum_{i=1}^{n}\sum_{t=1}^{T}\psi_{it}+o_{p}(1),
\]
where $\psi_{it}=(\bm{z}_{it}-\bm{\delta}_{i})\{1(u_{it}\leq0)-\tau\}.$
Since $D_{n}\to D$ and $\widehat{W}\stackrel{p}{\to}W$, this becomes 
\[
\sqrt{nT}(\widehat{\bm{\beta}}-\bm{\beta}_{0})=-(D'WD)^{-1}D'W\frac{1}{\sqrt{nT}}\sum_{i=1}^{n}\sum_{t=1}^{T}\psi_{it}+o_{p}(1).
\]
This proves the linear representation.

It remains to prove the central limit theorem for the leading term.
For any nonzero fixed vector $\bm{a}$ with the same dimension as $\psi_{it}$,
define $S_{ni}(\bm{a})=T^{-1/2}\sum_{t=1}^{T}\bm{a}'\psi_{it}.$ The
 quantile restriction implies  $\E[\psi_{it}]=\bm{0}$ and $\E[S_{ni}(\bm{a})]=0$. By
Assumption (A1), the vectors $S_{n1}(\bm{a}),\ldots,S_{nn}(\bm{a})$
are independent across $i$. By Assumption (A2), $\bm{a}'\psi_{it}$
is uniformly bounded. The exponential $\beta$-mixing condition implies
that, for some $\delta>0$,
\(
\max_{1\leq i\leq n}\E|S_{ni}(\bm{a})|^{2+\delta}\leq C.
\)
Therefore,
\[
\sum_{i=1}^{n}\E\left[\left|\frac{S_{ni}(\bm{a})}{\sqrt{n}}\right|^{2+\delta}\right]\leq Cn^{-\delta/2}\to0.
\]
The Lyapunov condition holds for the independent triangular array
$\{S_{ni}(\bm{a}):1\leq i\leq n\}$. Moreover, 
\[
\Var\left(\frac{1}{\sqrt{n}}\sum_{i=1}^{n}S_{ni}(\bm{a})\right)=\bm{a}'\left(\frac{1}{n}\sum_{i=1}^{n}\Omega_{ni}\right)\bm{a}\to\bm{a}'\Omega\bm{a}
\]
by Assumption (A10). Thus $\frac{1}{\sqrt{n}}\sum_{i=1}^{n}S_{ni}(\bm{a})\stackrel{d}{\to}N(0,\bm{a}'\Omega\bm{a}).$
By the Cramer-Wold device, $\frac{1}{\sqrt{nT}}\sum_{i=1}^{n}\sum_{t=1}^{T}\psi_{it}\stackrel{d}{\to}N(\bm{0},\Omega).$
Combining this convergence with the linear representation and Slutsky's
theorem gives 
\[
\sqrt{nT}(\widehat{\bm{\beta}}-\bm{\beta}_{0})\stackrel{d}{\to}N(\bm{0},\Sigma_{\beta}),
\]
where $\Sigma_{\beta}=(D'WD)^{-1}D'W\Omega WD(D'WD)^{-1}.$ If $W=\Omega^{-1}$,
then the usual simplification gives $\Sigma_{\beta}=(D'\Omega^{-1}D)^{-1}$.
This proves Theorem 4.1
{}. \end{proof}

\subsection{Proof of Theorem 5.1
{}}

\begin{proof}
\label{app:proof-thm2}
We prove the consistency of the feasible covariance matrix. For a generic
weighting matrix, it is enough to show that
\(\widehat{D} \stackrel{p}{\to} D\) and
\(\widehat{\Omega}\stackrel{p}{\to}\Omega\), because
\(\widehat W\stackrel{p}{\to}W\) is imposed in (A7). For the efficient
two-step estimator, we also verify below that the preliminary covariance
matrix $\bar\Omega$ converges to $\Omega$. We first focus on
\(\widehat{D}\) and \(\widehat{\Omega}\).
Throughout this proof, the uniform kernel and lag-uniform HAC bounds are
obtained by applying Corollary C.1 of \citet{GalvaoKato16} to the relevant
beta-mixing triangular arrays, with \(q_T=\lfloor(T/\log(nT))^{1/2}\rfloor\),
and then taking a union bound over the growing index set. The single condition
\(
\ell_n^2\log(nT)/(Th_n^2)\to0
\)
implies each auxiliary rate used below. In particular,
\[
\frac{\ell_n}{h_n}\sqrt{\frac{\log(nT)}{T}}\to0,
\qquad
\ell_n\sqrt{\frac{\log(nT)}{T}}\to0,
\qquad
\frac{\ell_n^2}{T}\to0.
\]
Moreover, Assumption (A9) gives \(h_n^r=O(T^{-1})\), so
\(\ell_nh_n^r=o(1)\). We invoke these implications explicitly in the following argument.

First consider \(\widehat D\). Recall that
\(
\widehat D=\frac{1}{nT}\sum_{i=1}^{n}\sum_{t=1}^{T}K_{h_n}(\widehat u_{it})\bm z_{it}(\bm x_{it}-\widehat{\bm\gamma}_i)'.
\)
For deterministic \((\alpha,\bm\beta)\), define the population maps
\[
\begin{aligned}
M_{zx,i}(\alpha,\bm\beta)
&=\E\left[K_{h_n}(y_{it}-\alpha-\bm x_{it}'\bm\beta)\bm z_{it}\bm x_{it}'\right],\\
M_{z,i}(\alpha,\bm\beta)
&=\E\left[K_{h_n}(y_{it}-\alpha-\bm x_{it}'\bm\beta)\bm z_{it}\right],\\
M_{1,i}(\alpha,\bm\beta)
&=\E\left[K_{h_n}(y_{it}-\alpha-\bm x_{it}'\bm\beta)\right].
\end{aligned}
\]
By the kernel uniform law of large numbers, uniformly over \(i\) and over
\((\alpha,\bm\beta)\in\mathcal A\times\mathcal B\), the corresponding
sample averages differ from \(M_{zx,i}(\alpha,\bm\beta)\),
\(M_{z,i}(\alpha,\bm\beta)\), and \(M_{1,i}(\alpha,\bm\beta)\) by
\(o_p(1)\). Hence
\[
\widehat D
=
\frac1n\sum_{i=1}^{n}
\left\{
M_{zx,i}(\widehat\alpha_i,\widehat{\bm\beta})
-M_{z,i}(\widehat\alpha_i,\widehat{\bm\beta})
\widehat{\bm\gamma}_i'
\right\}
+o_p(1).
\]
Moreover, Theorem 4.1
{} and Lemma \ref{lem:alpha_profile} imply that
there exists a deterministic sequence \(\rho_n\downarrow0\) such that
\[
P\left(
\max_{i\le n}|\widehat\alpha_i-\alpha_{i0}|
+\|\widehat{\bm\beta}-\bm\beta_0\|
\le\rho_n
\right)\to1.
\]
For deterministic \((\alpha,\bm\beta)\) satisfying
\(|\alpha-\alpha_{i0}|+\|\bm\beta-\bm\beta_0\|\le\rho_n\), conditional
integration and the smoothness of \(f_i(\cdot\mid\bm q)\), together with
boundedness of \(\bm x_{it}\) and \(\bm z_{it}\), give uniformly in \(i\)
\[
\begin{aligned}
&\|M_{zx,i}(\alpha,\bm\beta)-M_{zx,i}(\alpha_{i0},\bm\beta_0)\|
+\|M_{z,i}(\alpha,\bm\beta)-M_{z,i}(\alpha_{i0},\bm\beta_0)\|\\
&\qquad
+|M_{1,i}(\alpha,\bm\beta)-M_{1,i}(\alpha_{i0},\bm\beta_0)|
\le C\rho_n.
\end{aligned}
\]
Evaluating this uniform deterministic bound on the preceding
high-probability event yields
\[
\widehat D
=
\frac1n\sum_{i=1}^{n}
\left\{
M_{zx,i}(\alpha_{i0},\bm\beta_0)
-M_{z,i}(\alpha_{i0},\bm\beta_0)\widehat{\bm\gamma}_i'
\right\}
+o_p(1).
\]
By the same argument as in \eqref{eq: gamma consistency},
\(
\max_{i\le n}\|\widehat{\bm\gamma}_i-\bm\gamma_i\|=o_p(1).
\)
Since \(M_{z,i}(\alpha_{i0},\bm\beta_0)=O(1)\) uniformly in \(i\), it follows that
\[
\frac1n\sum_{i=1}^{n}
\left\{
M_{zx,i}(\alpha_{i0},\bm\beta_0)
-M_{z,i}(\alpha_{i0},\bm\beta_0)\widehat{\bm\gamma}_i'
\right\}
=
\frac1n\sum_{i=1}^{n}
\E\left[
K_{h_n}(u_{it})\bm z_{it}(\bm x_{it}-\bm\gamma_i)'
\right]
+o_p(1).
\]
Finally, the kernel bias expansion in Lemma \ref{lem:smoothing} gives, uniformly in \(i\),
\begin{align*}
&\E\left[K_{h_n}(u_{it})\bm z_{it}(\bm x_{it}-\bm\gamma_i)'\right]=
\E\left[f_i(0\mid\bm q_{it})\bm z_{it}
(\bm x_{it}-\bm\gamma_i)'\right]+O(h_n^r).
\end{align*}
Therefore,
\(
\widehat D
=
\frac1n\sum_{i=1}^{n}
\E\left[f_i(0\mid\bm q_{it})\bm z_{it}(\bm x_{it}-\bm\gamma_i)'\right]
+o_p(1)
=
D+o_p(1).
\)

Before comparing the feasible and infeasible HAC matrices, we establish
the uniform consistency and boundedness of the estimated residualization
terms. We use the same uniform kernel law and deterministic-neighborhood
argument for \(M_{1,i}\) and \(M_{z,i}\), together with the kernel bias
expansion. Define
\[
\widehat{\bm g}_i=\frac1T\sum_{t=1}^T
K_{h_n}(\widehat u_{it})\bm z_{it},
\qquad
\bm g_i=\E[f_i(0\mid\bm q_{it})\bm z_{it}].
\]
Lemma \ref{lem:alpha_max}, Theorem 4.1
{}, and boundedness of
\(\bm x_{it}\) imply that the random evaluation point
\((\widehat\alpha_i,\widehat{\bm\beta})\) is within
\(O_p(\sqrt{\log(nT)/T})\) of \((\alpha_{i0},\bm\beta_0)\), uniformly in
\(i\). The resulting population-map error is of the same order and is
absorbed by the kernel stochastic rate because \(h_n\to0\). Thus, we obtain
\[
\max_{i\le n}|\widehat f_i-f_i(0)|
+\max_{i\le n}\|\widehat{\bm g}_i-\bm g_i\|
=
O_p\left(\frac1{h_n}\sqrt{\frac{\log(nT)}{T}}\right)
+O(h_n^r).
\]
Because \(\inf_i f_i(0)>0\), it follows that, with probability
approaching one, \(\inf_i\widehat f_i>\underline f/2\) and
\[
\max_{i\le n}\|\widehat{\bm\delta}_i-\bm\delta_i\|
=
O_p\left(\frac1{h_n}\sqrt{\frac{\log(nT)}{T}}\right)+O(h_n^r).
\]
In particular, \(\max_i\|\widehat{\bm\delta}_i\|=O_p(1)\) and the
condition of the theorem implies
\begin{align}\label{eq:ellndelta}\ell_n\max_i\|\widehat{\bm\delta}_i-\bm\delta_i\|=o_p(1).\end{align}

We now prove \(\widehat{\Omega}\stackrel{p}{\to}\Omega\). Recall that
\(\widehat{\psi}_{it}=(\bm{z}_{it}-\widehat{\bm{\delta}}_{i})\{1(\widehat{u}_{it}\leq0)-\tau\}\)
and \(\psi_{it}=(\bm{z}_{it}-\bm{\delta}_{i})\{1(u_{it}\leq0)-\tau\}\).
We compare the feasible HAC estimator directly with its population
truncated target. Let
\[
\Omega_{n}^{(m)}
=
\frac{1}{n}\sum_{i=1}^{n}
\left\{
\E[\psi_{i1}\psi_{i1}']
+
\sum_{j=1}^{\ell_n}
k(j/(\ell_n+1))
\left(1-\frac{j}{T}\right)
\bigl(
\E[\psi_{i1}\psi_{i,1+j}']
+
\E[\psi_{i,1+j}\psi_{i1}']
\bigr)
\right\}.
\]
It is enough to prove
\(\widehat{\Omega}-\Omega_n^{(m)}=o_p(1)\) and
\(\Omega_n^{(m)}-\Omega_n=o(1)\).
The second statement follows from the uniform absolute summability of
the autocovariances. Indeed, since \(\psi_{it}\) is bounded and the
time-series process is exponentially beta-mixing, there exists a
summable sequence \(\{a_j\}_{j\geq1}\) such that, uniformly in \(i\),
\[
\|\E[\psi_{i1}\psi_{i,1+j}']\|
+
\|\E[\psi_{i,1+j}\psi_{i1}']\|
\leq a_j,
\qquad
\sum_{j=1}^{\infty}a_j<\infty.
\]
Therefore,
\[
\begin{aligned}
\|\Omega_n^{(m)}-\Omega_n\|
&\leq
C\sum_{j>\ell_n}a_j
+
\sum_{j=1}^{\ell_n}
|k(j/(\ell_n+1))-1|a_j
\to0,
\end{aligned}
\]
where the conclusion follows from dominated convergence.

It remains to prove \(\widehat{\Omega}-\Omega_{n}^{(m)}=o_p(1)\).
Let \(\widetilde{\Omega}\) be the same HAC estimator as \(\widehat{\Omega}\),
but with \(\widehat{\psi}_{it}\) replaced by
\(\psi_{it}=(\bm z_{it}-\bm\delta_i)\{1(u_{it}\le0)-\tau\}\). Then
\(\widehat{\Omega}-\Omega_{n}^{(m)}
=(\widehat{\Omega}-\widetilde{\Omega})
+
(\widetilde{\Omega}-\Omega_{n}^{(m)})\). We first show that
\(\widetilde{\Omega}-\Omega_n^{(m)}=o_p(1)\).
Write
\(
\widetilde{\Omega}
=
\frac1n\sum_{i=1}^n\widetilde{\Omega}_i,
\)
where
\[
\widetilde{\Omega}_i
=
\frac1T\sum_{t=1}^T\psi_{it}\psi_{it}'
+
\sum_{j=1}^{\ell_n}
k(j/(\ell_n+1))
\frac1T\sum_{t=1}^{T-j}
\left(
\psi_{it}\psi_{i,t+j}'
+
\psi_{i,t+j}\psi_{it}'
\right).
\]
By stationarity and the definition of \(\Omega_n^{(m)}\),
\(
\E[\widetilde{\Omega}]
=
\Omega_n^{(m)}.
\)

Since \(\psi_{it}\) is uniformly bounded and exponentially
beta-mixing, a standard four-index covariance calculation gives
\[
\sup_{i\le n}
\E\left[
\left\|
\widetilde{\Omega}_i-\E[\widetilde{\Omega}_i]
\right\|_F^2
\right]
\leq
C\frac{\ell_n^2}{T}.
\]
Indeed, after expanding the square, each covariance involves two
products indexed by
\((t,t+j)\) and \((s,s+k)\), with \(j,k\leq\ell_n\).
The contribution from overlapping or neighboring index sets is
\(O(T\ell_n^2)\), while the contribution from separated index sets
is also \(O(T\ell_n^2)\) by the covariance inequality and the
geometric decay of the beta-mixing coefficients. Division by \(T^2\)
gives the displayed bound.

Because the processes are independent across \(i\),
\[
\begin{aligned}
\E\left[
\left\|
\widetilde{\Omega}-\Omega_n^{(m)}
\right\|_F^2
\right]
&=
\frac1{n^2}\sum_{i=1}^n
\E\left[
\left\|
\widetilde{\Omega}_i-\E[\widetilde{\Omega}_i]
\right\|_F^2
\right] \leq
C\frac{\ell_n^2}{nT}
=o(1).
\end{aligned}
\]
Here,
\(
\frac{\ell_n^2}{T}
=
\frac{\ell_n^2\log(nT)}{Th_n^2}
\frac{h_n^2}{\log(nT)}
\to0
\)
under the condition of the theorem. Hence, by Markov's inequality,
\(
\widetilde{\Omega}-\Omega_n^{(m)}=o_p(1).
\)

Now, we want to show \(\widehat{\Omega}-\widetilde{\Omega}=o_{p}(1)\).
Write
\[
\begin{aligned}
\widehat{\psi}_{it}-\psi_{it}
&=(\bm{z}_{it}-\widehat{\bm{\delta}}_{i})\{1(\widehat{u}_{it}\le0)-1(u_{it}\le0)\}-(\widehat{\bm{\delta}}_{i}-\bm{\delta}_{i})\{1(u_{it}\le0)-\tau\}.
\end{aligned}
\]
We consider the two terms separately. For the first term, note that
\[
|1(\widehat{u}_{it}\le0)-1(u_{it}\le0)|
\le
1\{|u_{it}|\le|\widehat{u}_{it}-u_{it}|\}.
\]
By Lemma \ref{lem:alpha_max}, Theorem 4.1
{}, and boundedness of \(\bm x_{it}\),
\(\max_{i,t}|\widehat{u}_{it}-u_{it}|=O_{p}((\log(nT)/T)^{1/2})\).
Since the density of \(u_{it}\) is uniformly bounded near zero, a similar
maximal argument and a uniform empirical-process bound for the indicator class gives
\[
\frac{1}{nT}\sum_{i=1}^{n}\sum_{t=1}^{T}|1(\widehat{u}_{it}\le0)-1(u_{it}\le0)|
=
O_{p}\left((\log(nT)/T)^{1/2}\right)+O_{p}((nT)^{-1/2}).
\]
Therefore, by boundedness of \(\bm{z}_{it}\), \(\widehat{\bm{\delta}}_{i}\),
and the scores,
\[
\begin{aligned}
&\left\|\sum_{j=0}^{\ell_n}k(j/(\ell_n+1))\frac{1}{nT}\sum_{i=1}^{n}\sum_{t=1}^{T-j}\Big[(\bm{z}_{it}-\widehat{\bm{\delta}}_{i})\{1(\widehat{u}_{it}\le0)-1(u_{it}\le0)\}\widehat{\psi}_{i,t+j}'\right.\\
&\left.\qquad\qquad\qquad\qquad\qquad+\widehat{\psi}_{it}\{1(\widehat{u}_{i,t+j}\le0)-1(u_{i,t+j}\le0)\}(\bm{z}_{i,t+j}-\widehat{\bm{\delta}}_{i})'\Big]\right\|\\
&\qquad\le
C\ell_n\left\{
O_{p}\left((\log(nT)/T)^{1/2}\right)+O_{p}((nT)^{-1/2})
\right\}
=o_{p}(1),
\end{aligned}
\]
provided \(\ell_n\sqrt{\log(nT)/T}\to0\). The terms containing products of
two indicator differences are bounded in the same way and are also
\(o_p(1)\).

It remains to handle the replacement of \(\bm{\delta}_{i}\) by
\(\widehat{\bm{\delta}}_{i}\) after the indicators have been fixed at
\(1(u_{it}\le0)\). Expanding the product gives, for each lag \(j\),
\[
\begin{aligned}
&(\bm{z}_{it}-\widehat{\bm{\delta}}_{i})\{1(u_{it}\le0)-\tau\}\{1(u_{i,t+j}\le0)-\tau\}(\bm{z}_{i,t+j}-\widehat{\bm{\delta}}_{i})'-\psi_{it}\psi_{i,t+j}'\\
&= -(\widehat{\bm{\delta}}_{i}-\bm{\delta}_{i})\{1(u_{it}\le0)-\tau\}\psi_{i,t+j}'-\psi_{it}\{1(u_{i,t+j}\le0)-\tau\}(\widehat{\bm{\delta}}_{i}-\bm{\delta}_{i})'\\
&\quad+(\widehat{\bm{\delta}}_{i}-\bm{\delta}_{i})(\widehat{\bm{\delta}}_{i}-\bm{\delta}_{i})'\{1(u_{it}\le0)-\tau\}\{1(u_{i,t+j}\le0)-\tau\}.
\end{aligned}
\]
Thus the corresponding HAC contribution is bounded by
\[
\begin{aligned}
& C\max_{i}\|\widehat{\bm{\delta}}_{i}-\bm{\delta}_{i}\|\frac{1}{n}\sum_{i=1}^{n}\left\Vert \frac{1}{T}\sum_{t=1}^{T}\{1(u_{it}\le0)-\tau\}\psi_{it}'\right.\\
& \qquad\left.+\sum_{j=1}^{\ell_n}k(j/(\ell_n+1))\frac{1}{T}\sum_{t=1}^{T-j}\Big[\{1(u_{it}\le0)-\tau\}\psi_{i,t+j}'+\{1(u_{i,t+j}\le0)-\tau\}\psi_{it}'\Big]\right\Vert \\
& \quad+C\max_{i}\|\widehat{\bm{\delta}}_{i}-\bm{\delta}_{i}\|^{2}\frac{1}{n}\sum_{i=1}^{n}\left|\frac{1}{T}\sum_{t=1}^{T}\{1(u_{it}\le0)-\tau\}^{2}\right.\\
& \qquad\left.+\sum_{j=1}^{\ell_n}k(j/(\ell_n+1))\frac{1}{T}\sum_{t=1}^{T-j}\{1(u_{it}\le0)-\tau\}\{1(u_{i,t+j}\le0)-\tau\}\right|.
\end{aligned}
\]
For the present purpose it is enough to use the crude bound that the two
HAC sums in braces are \(O_p(\ell_n)\), since all summands and HAC weights
are uniformly bounded. Hence
\[
\|\widehat{\Omega}-\widetilde{\Omega}\|
\le
O_p(\ell_n)\max_{i}\|\widehat{\bm{\delta}}_{i}-\bm{\delta}_{i}\|
+
O_p(\ell_n)\max_{i}\|\widehat{\bm{\delta}}_{i}-\bm{\delta}_{i}\|^{2}
+
o_p(1).
\]
By the residualization rate established above in \eqref{eq:ellndelta}, we have
\(\widehat{\Omega}-\widetilde{\Omega}=o_p(1)\).
Combining this with
\(\widetilde{\Omega}-\Omega_n^{(m)}=o_p(1)\),
\(\Omega_n^{(m)}-\Omega_n=o(1)\), and Assumption (A10)
gives \(\widehat{\Omega}\stackrel{p}{\to}\Omega\).
The same argument applies to the preliminary one-step estimator computed with the fixed matrix $W_0$:
Theorem 4.1
{} first applies with $\widehat W=W_0$, and all residual
and kernel rates used above hold for the resulting preliminary residuals.
Consequently, $\bar\Omega\stackrel{p}{\to}\Omega$ and hence
$\bar\Omega^{-1}\stackrel{p}{\to}\Omega^{-1}$. This verifies (A7) for
the efficient second step without assuming consistency of a covariance
matrix constructed from an arbitrary preliminary estimator.

Finally, combining \(\widehat{D}\stackrel{p}{\to}D\),
\(\widehat{W}\stackrel{p}{\to}W\), and \(\widehat{\Omega}\stackrel{p}{\to}\Omega\), we obtain
\[
\widehat{\Sigma}_{\beta}=(\widehat{D}'\widehat{W}\widehat{D})^{-1}\widehat{D}'\widehat{W}\widehat{\Omega}\widehat{W}\widehat{D}(\widehat{D}'\widehat{W}\widehat{D})^{-1}\stackrel{p}{\to}\Sigma_{\beta}.
\]
For the efficient two-step estimator, the alternative estimator
\(\widehat{\Sigma}_{\beta}=(\widehat{D}'\widehat{\Omega}^{-1}\widehat{D})^{-1}\)
converges to \((D'\Omega^{-1}D)^{-1}\).
The studentized statistic follows directly from Theorem 4.1
{},
the consistency of \(\widehat{\Sigma}_{\beta}\), and Slutsky's theorem.
\end{proof}

\clearpage
\section{Monte Carlo Settings and Additional Results}
We consider the panel quantile model
\[
y_{it}
=
\alpha_i+\beta_\tau x_{it}+u_{it},
\qquad
i=1,\ldots,n,\quad t=1,\ldots,T,
\]
where \(\alpha_i\) is an individual fixed effect and \(\beta_\tau\) is
the common quantile slope parameter. The endogenous regressor is generated
according to
\begin{align}
x_{it}
=
\pi_z(z_{1,it}+z_{2,it})+v_{it},
\qquad
\pi_z=1,\label{eq: x dgp}
\end{align}
where
\(
\bm z_{it}=(z_{1,it},z_{2,it})'
\)
is the two-dimensional instrument vector. For \(k=1,2\), let
\[
\widetilde z_{k,it}=\rho_z \widetilde z_{k,i,t-1}+\sqrt{1-\rho_z^2}\,\xi_{k,it},
\qquad
z_{k,it}=2\Phi(\widetilde z_{k,it})-1,
\]
where \(\rho_z=0.50\). The initial values \(\widetilde z_{k,i0}\) and innovations
\(\xi_{k,it}\) are mutually independent standard normal random variables.
Thus, the two instrument processes are independent of each other and
across individuals, have \(U(-1,1)\) marginals, and exhibit serial
dependence.

The outcome disturbance follows the stationary AR(1) process
\(
u_{it}
=
\rho u_{i,t-1}+\widetilde\varepsilon_{it},\ 
\rho=0.30,
\)
where
\(
\widetilde\varepsilon_{it}
=
\varepsilon_{it}-q_\tau
\)
and
\(
q_\tau
=
\frac{(1-\rho)\Phi^{-1}(\tau)}
{\sqrt{1-\rho^2}}.
\)
The initial value is drawn from the corresponding stationary distribution.
Since the stationary distribution of \(u_{it}\) has mean
\(-q_\tau/(1-\rho)\) and variance \(1/(1-\rho^2)\), this choice ensures
that
\(
\Q_\tau(u_{it})=0.
\)
For each value of \(\tau\), a separate sample is generated using the
corresponding value of \(q_\tau\).

To generate a bounded, serially dependent regressor disturbance in \eqref{eq: x dgp}, let
\[
\widetilde v_{it}
=\rho_v \widetilde v_{i,t-1}
+\sqrt{\pi}\,\sigma_{uv}\varepsilon_{it}
+\sqrt{1-\rho_v^2-\pi\sigma_{uv}^2}\,\eta_{it},
\qquad \rho_v=0.30,
\]
and set \(v_{it}=2\Phi(\widetilde v_{it})-1\). The pairs
\((\varepsilon_{it},\eta_{it})\) are standard bivariate normal with
independent components, independently across individuals and time.
The initial pair \((u_{i0},\widetilde{v}_{i0})\) is drawn jointly from the stationary
Gaussian distribution of the two recursions, independently of subsequent
innovations. Since \(\rho_v^2+\pi\sigma_{uv}^2<1\) for the parameter
values considered, we have \(\widetilde v_{it}\sim N(0,1)\). Hence, by Stein's lemma,
\[
v_{it}\sim U(-1,1),
\qquad
\operatorname{Cov}(\varepsilon_{it},v_{it})=\sigma_{uv},
\qquad
\operatorname{Cov}(u_{it},v_{it})
=\frac{\sigma_{uv}}{1-\rho\rho_v}.
\]
The sequence \(\{\bm z_{it}\}_{t\in\mathbb Z}\) is independent of the
entire process \(\{(u_{is},\widetilde v_{is}):s\in\mathbb Z\}\) and of \(\alpha_i\).
Consequently, \(\bm z_{it}\) is independent of \(u_{it}\), whereas
\(x_{it}\) is correlated with \(u_{it}\) whenever
\(\sigma_{uv}\neq0\). In particular,
\(
P(u_{it}\leq0\mid\bm z_{it},\alpha_i)=\tau
\)
and therefore
\(
\E\left[
\bm z_{it}\{1(u_{it}\leq0)-\tau\}
\right]
=
\bm 0.
\)
Thus, both instruments satisfy the quantile moment restriction, while the
regressor is endogenous when \(\sigma_{uv}\neq0\).

We consider
\(
\sigma_{uv}\in\{0,0.50\},
\)
corresponding to exogenous and endogenous designs. The individual effects
are generated as
\(\alpha_i\overset{\mathrm{iid}}{\sim}U(-1,1)\), and the true slope is
\(\beta_\tau=1\). We examine
\[
\tau\in\{0.25,0.50\},
\qquad
(n,T)\in\{20,50,100\}\times\{20,50,100\}.
\]
Although the centered asymptotic normality result requires \(n/T\to0\),
we report all nine combinations of \(n\) and \(T\) to examine performance
outside the most favorable asymptotic configurations. For each design, the
results are based on \(1000\) Monte Carlo replications.

For each replication, we compute four estimators. The proposed
smoothed panel GMM estimator, denoted SP-GMM, profiles the individual
effects by solving the \(n\) smoothed individual quantile restrictions
for each candidate value of \(\beta\). It then forms the two-dimensional
profiled instrument moment. The first GMM step uses the identity weighting
matrix, while the second step uses the inverse of a Bartlett HAC estimator
constructed from the residualized profiled score. The final covariance
estimator uses the unsmoothed indicator score and the residualized
instrument \(\bm z_{it}-\widehat{\bm\delta}_i\), as described in
Section 5 of the main paper. For the panel sizes considered, the Bartlett
truncation lag is \(\ell_n=\max\{1,\lfloor T^{1/5}\rfloor\}\).\footnote{The first step is
initialized by the smoothed MM estimator, and the second by the first-step
estimate.}

The three comparison estimators are the conventional fixed-effects
quantile regression estimator (FE-QR), its smoothed counterpart (FE-SQR),
and a cross-sectional IVQR estimator based on the unit-level time averages.
The two fixed-effects estimators do not correct for endogeneity, whereas
the cross-sectional IVQR estimator uses both time-averaged instruments but
does not exploit the original panel quantile restrictions.

For the smoothed estimators, we use
\[
h_n=c_\tau T^{-2/7},
\qquad
c_\tau
=
\left[
\frac{4.5\,\phi\{\Phi^{-1}(\tau)\}^{4}}
{\left\{2\Phi^{-1}(\tau)^{2}+1\right\}^{2}}
\right]^{1/5},
\]
where \(\phi\) and \(\Phi\) denote the standard normal density and
distribution functions, respectively. The constant \(c_\tau\) is the normal-reference constant frequently used in quantile
regression bandwidth selection. Moreover, following the simulation practice in the smoothed panel quantile and
smoothed GMM literature, see, e.g., \citet{GalvaoKato16},
\citet{KaplanSun17}, and \citet{deCastroGalvaoKaplanLiu19}, we apply the
fourth-order kernel
\[
K(u)
=
\frac{105}{64}
(1-5u^2+7u^4-3u^6)1\{|u|\leq1\}.
\]
We report bias, root mean squared error (RMSE), and empirical coverage
probability (CP). Coverage is calculated from the nominal \(95\%\) normal
confidence interval.

Table \ref{tab:simulation_tau050_new} reports the median-quantile results.
Under exogeneity, all three panel estimators have negligible bias, and
their RMSE decreases as \(n\) and \(T\) increase. FE-QR and FE-SQR have
slightly smaller RMSE than SP-GMM because instrumenting is unnecessary
when \(\sigma_{uv}=0\). The inference procedures for FE-QR and FE-SQR
also produce coverage probabilities close to the nominal level: coverage
ranges from \(0.933\) to \(0.974\) for FE-QR and from \(0.950\) to
\(0.981\) for FE-SQR. The corresponding range for SP-GMM is \(0.928\)
to \(0.973\). Thus, when the regressor is exogenous, all three panel
methods perform well, with FE-QR and FE-SQR enjoying a modest efficiency
advantage.

\begin{table}[t!]
\centering
\caption{Monte Carlo Results, \(\tau=0.50\)}\label{tab:simulation_tau050_new}
\singlespacing
\resizebox{\textwidth}{!}{%
\begin{tabular}{lllcccccccc}
\toprule
 &  &  & \multicolumn{2}{c}{SP-GMM}
 & \multicolumn{2}{c}{FE-QR}
 & \multicolumn{2}{c}{FE-SQR}
 & \multicolumn{2}{c}{CS-IVQR}
 \tabularnewline
\cline{4-5}\cline{6-7}\cline{8-9}\cline{10-11}
Statistic & \(T\) & \(n\)
& \multicolumn{2}{c}{\(\sigma_{uv}\)}
& \multicolumn{2}{c}{\(\sigma_{uv}\)}
& \multicolumn{2}{c}{\(\sigma_{uv}\)}
& \multicolumn{2}{c}{\(\sigma_{uv}\)}
\tabularnewline
 &  &  & \(0.00\) & \(0.50\)
 & \(0.00\) & \(0.50\)
 & \(0.00\) & \(0.50\)
 & \(0.00\) & \(0.50\)
\tabularnewline
\hline
\multirow{9}{*}{Bias}
& 20 & 20 & 0.000 & -0.005 & 0.002 & 0.550 & 0.003 & 0.551 & 0.042 & -0.037 \tabularnewline
&    & 50 & 0.000 & -0.004 & 0.000 & 0.554 & 0.001 & 0.554 & -0.003 & -0.006 \tabularnewline
&    & 100 & -0.001 & -0.003 & 0.000 & 0.553 & -0.001 & 0.553 & 0.009 & 0.008 \tabularnewline
& 50 & 20 & -0.002 & 0.000 & 0.000 & 0.541 & 0.000 & 0.541 & 0.006 & -0.013 \tabularnewline
&    & 50 & 0.003 & -0.001 & 0.003 & 0.542 & 0.003 & 0.542 & -0.001 & 0.001 \tabularnewline
&    & 100 & 0.000 & 0.000 & 0.000 & 0.541 & 0.000 & 0.541 & -0.026 & -0.006 \tabularnewline
& 100 & 20 & 0.001 & 0.000 & 0.000 & 0.537 & 0.000 & 0.537 & 0.008 & -0.118 \tabularnewline
&     & 50 & -0.001 & 0.000 & -0.001 & 0.538 & -0.001 & 0.538 & -0.019 & -0.010 \tabularnewline
&     & 100 & -0.001 & 0.000 & 0.000 & 0.536 & 0.000 & 0.536 & -0.005 & -0.016 \tabularnewline
\hline
\multirow{9}{*}{RMSE}
& 20 & 20 & 0.094 & 0.094 & 0.073 & 0.553 & 0.071 & 0.553 & 0.734 & 0.681 \tabularnewline
&    & 50 & 0.058 & 0.060 & 0.044 & 0.555 & 0.043 & 0.555 & 0.443 & 0.442 \tabularnewline
&    & 100 & 0.041 & 0.041 & 0.032 & 0.553 & 0.031 & 0.553 & 0.323 & 0.321 \tabularnewline
& 50 & 20 & 0.057 & 0.058 & 0.047 & 0.542 & 0.046 & 0.542 & 0.985 & 1.039 \tabularnewline
&    & 50 & 0.036 & 0.036 & 0.029 & 0.543 & 0.029 & 0.542 & 0.682 & 0.672 \tabularnewline
&    & 100 & 0.025 & 0.025 & 0.021 & 0.542 & 0.021 & 0.542 & 0.508 & 0.512 \tabularnewline
& 100 & 20 & 0.039 & 0.040 & 0.032 & 0.537 & 0.031 & 0.537 & 1.358 & 1.413 \tabularnewline
&     & 50 & 0.025 & 0.025 & 0.021 & 0.538 & 0.020 & 0.538 & 0.956 & 0.931 \tabularnewline
&     & 100 & 0.018 & 0.019 & 0.015 & 0.536 & 0.014 & 0.536 & 0.676 & 0.715 \tabularnewline
\hline
\multirow{9}{*}{CP}
& 20 & 20 & 0.928 & 0.940 & 0.933 & 0.001 & 0.950 & 0.001 & 0.758 & 0.800 \tabularnewline
&    & 50 & 0.970 & 0.964 & 0.974 & 0.000 & 0.981 & 0.000 & 0.828 & 0.804 \tabularnewline
&    & 100 & 0.973 & 0.976 & 0.974 & 0.000 & 0.977 & 0.001 & 0.857 & 0.858 \tabularnewline
& 50 & 20 & 0.942 & 0.942 & 0.944 & 0.000 & 0.956 & 0.000 & 0.765 & 0.776 \tabularnewline
&    & 50 & 0.959 & 0.967 & 0.948 & 0.001 & 0.952 & 0.000 & 0.818 & 0.810 \tabularnewline
&    & 100 & 0.967 & 0.964 & 0.959 & 0.000 & 0.967 & 0.000 & 0.854 & 0.842 \tabularnewline
& 100 & 20 & 0.952 & 0.948 & 0.950 & 0.001 & 0.961 & 0.000 & 0.791 & 0.760 \tabularnewline
&     & 50 & 0.965 & 0.952 & 0.958 & 0.000 & 0.964 & 0.000 & 0.808 & 0.809 \tabularnewline
&     & 100 & 0.956 & 0.943 & 0.950 & 0.000 & 0.963 & 0.000 & 0.833 & 0.851 \tabularnewline
\bottomrule
\end{tabular}}
\par\smallskip {\small\textit{Notes:} The table reports bias, RMSE, and empirical coverage probability (CP)
for nominal \(95\%\) confidence intervals. The results are based on
\(1000\) Monte Carlo replications.}
\end{table}

The results change sharply under endogeneity. The biases of both FE-QR
and FE-SQR remain between approximately \(0.536\) and \(0.554\), even as
the panel dimensions increase. Their RMSE therefore remains between
approximately \(0.536\) and \(0.555\), and their coverage probabilities
are essentially zero. The similar biases of FE-QR and FE-SQR show that smoothing does not correct the inconsistency caused by 
regressor endogeneity. By contrast, the absolute bias of SP-GMM is at
most \(0.005\). Its RMSE decreases from \(0.094\) when
\((T,n)=(20,20)\) to \(0.019\) when \((T,n)=(100,100)\), while its
coverage probability ranges from \(0.940\) to \(0.976\), with some
finite-sample undercoverage and overcoverage.

For example, when \((T,n)=(50,50)\) and \(\sigma_{uv}=0.50\), the bias,
RMSE, and coverage probability of SP-GMM are \(-0.001\), \(0.036\), and
\(0.967\), respectively. The corresponding values are \(0.542\),
\(0.543\), and \(0.001\) for FE-QR, and \(0.542\), \(0.542\), and
\(0.000\) for FE-SQR. Thus, the conventional
fixed-effects quantile estimators become increasingly concentrated around
the wrong population value when the regressor is endogenous.

The cross-sectional IVQR estimator is approximately centered in several
designs but has much larger and more variable RMSE and exhibits persistent
undercoverage. Under endogeneity, its RMSE ranges from \(0.321\) to
\(1.413\). Collapsing the
panel into unit-level time averages discards the within-unit time-series
information exploited by SP-GMM, resulting in substantially less precise
estimation.

\bibliographystyle{chicago}
\bibliography{spqriv}

\end{document}